\def\thxPACSs{Funded by Deutsche Forschungsgemeinschaft (DFG, German
	Research Foundation), project PACS (FL~1247/1-1, 522475669).}
\def\HUaffil{Humboldt-Universität zu Berlin,
	Department of Computer Science, Algorithm Engineering Group, Germany}
\def\TForcid{https://orcid.org/0000-0003-2203-4386}

\documentclass[a4paper, USenglish, cleveref, autoref, thm-restate]{oasics-v2021}

\hideOASIcs %

\title{Algorithmics for Safe Bicycle Network Design with Bounded Detours in Rural Areas} %

\titlerunning{Safe Bicycle Network Design with Bounded Detours} %

\author{Till Fluschnik}{\HUaffil}{till.fluschnik@hu-berlin.de}{\TForcid}{\thxPACSs}

\authorrunning{T.\ Fluschnik} %

\Copyright{T.\ Fluschnik} %
\ccsdesc[500]{Theory of computation~Design \& analysis of algorithms}
\ccsdesc[500]{Mathematics of computing~Combinatorial optimization}
\ccsdesc[500]{Social \& professional topics~Sustainability} %

\keywords{%
    Parameterized Algorithms,
    NP-hardness,
    Pairwise weighted spanners,
    Preprocessing,
    Integer linear programming,
    Cut generation
} %

\category{} %

\relatedversion{} %

\supplement{Code and data for the experiments are available at \url{https://github.com/buhtig-tf/Safe-Bicycle-Network-Design-with-Bounded-Detours}.}

\nolinenumbers %

\usepackage{booktabs}

\makeatletter
  \def\mk@font#1#2{%
    \expandafter\def\csname mk#1\endcsname##1{%
      \expandafter\gdef\csname #1##1\endcsname{\ensuremath{#2{##1}}}%
    }%
  }
  \mk@font{cal}{\mathcal} %
  \mk@font{bf}{\mathbf} %
  \mk@font{rm}{\mathrm} %
  \mk@font{bb}{\mathbb} %
  \@tfor\xletter:=A,B,C,D,E,F,G,H,I,J,K,L,M,N,O,P,Q,R,S,T,U,V,W,X,Y,Z\do{%
    \expandafter\mkcal\xletter %
    \expandafter\mkbf\xletter %
    \expandafter\mkrm\xletter %
    \expandafter\mkbb\xletter %
  }
  \@tfor\xletter:=a,b,c,d,e,f,g,h,i,j,k,l,m,n,o,p,q,r,s,t,u,v,w,x,y,z\do{%
    \expandafter\mkbf\xletter %
    \expandafter\mkrm\xletter %
  }
\makeatother

\usepackage{xargs}
\newcommandx{\set}[2][1=A1z]{\ifstrequal{#1}{A1z}{\ensuremath{[#2]}}{\ensuremath{[#1,#2]}}}
\newcommandx{\setL}[2][1=1]{\ensuremath{\{#1,\ldots,#2\}}}

\newcommandx{\mydefenv}[6][2=A,4=A,6=A]{%
  \ifstrequal{#2}{A}{\newtheorem{#1}{#3}}{}
  \ifstrequal{#4}{A}{\crefname{#1}{#3}{#3s}}{\crefname{#1}{#3}{#4}}%
  \ifstrequal{#6}{A}{\Crefname{#1}{#5{.}}{#5s{.}}}{\Crefname{#1}{#5{.}}{#5{.}}}
}

\mydefenv{theorem}[B]{Theorem}{Thm}
\mydefenv{lemma}[B]{Lemma}{Lem}
\mydefenv{proposition}[B]{Proposition}{Pro}
\mydefenv{conjecture}[B]{Conjecture}{Conj}
\mydefenv{corollary}[B]{Corollary}[Corollaries]{Cor}
\mydefenv{observation}[B]{Observation}{Obs}[B]
\mydefenv{fact}{Fact}{Fact}
\theoremstyle{definition}
\mydefenv{problem}{Problem}{Prob}
\mydefenv{definition}[B]{Definition}{Def}
\mydefenv{construction}{Construction}{Constr}
\mydefenv{rrule}{Reduction Rule}{RR}
\mydefenv{xrule}{Rule}{Rule}
\mydefenv{myalgo}{Algorithm}{Algo}
\theoremstyle{remark}
\mydefenv{example}[B]{Example}{Ex}
\mydefenv{remark}[B]{Remark}{Rem}
\mydefenv{idea}{Idea}{Idea}
\mydefenv{intuition}{Intuition}{Intuition}

\newcommandx{\decprob}[6][3=Input,5=Question]{
  \begin{problem}[{#1}]\label{prob:#2}
	\textbf{Given} #4,
	the \textbf{question} is whether #6.
  \end{problem}
}

\newcommand{\Q}{\bbQ}
\newcommand{\Qalo}{\bbQ_{\geq 1}}
\newcommand{\N}{\bbN}
\newcommand{\Nzero}{\bbN_0}

\newcommand{\prob}[1]{\textnormal{\textsc{#1}}}

\newcommand{\etal}{et~al.}

\newcommand{\cocl}[1]{\textrm{#1}}
\newcommand{\NP}{\cocl{NP}}
\newcommand{\coNP}{\cocl{coNP}}
\newcommand{\FPT}{\cocl{FPT}}
\newcommand{\W}[1]{\cocl{W}[#1]}
\newcommand{\ETH}{ETH}

\newcommand{\unlessETH}{unless the \ETH{} breaks}
\newcommand{\SETH}{SETH}

\newcommand{\unlessSETH}{unless the \SETH{} breaks}

\newcommand{\cpoly}{\cocl{poly}}
\newcommand{\NPincoNPslashpoly}{\ensuremath{\NP\subseteq\coNP/\cpoly}}
\newcommand{\unlessPK}{unless \NPincoNPslashpoly}

\newcommand{\wpb}{when parameterized by}

\usepackage{mathtools}
\usepackage{etoolbox}
\newcommand{\cqed}{\hfill$\diamond$}

\newcommand{\cif}{\text{if~}}
\newcommand{\cotw}{\text{otherwise}}
\newcommand{\yes}{\emph{yes}}
\newcommand{\no}{\emph{no}}
\newcommand{\ceq}{\coloneqq}

\newcommand{\eps}{\varepsilon}

\usepackage{tikz}
\usetikzlibrary{calc,positioning}
\usetikzlibrary{arrows.meta,positioning}
\usetikzlibrary{patterns.meta}

\usepackage[textsize=footnotesize,backgroundcolor=green!40!white,linecolor=black!60,obeyFinal,disable]{todonotes}

\usepackage{tabularray}
\UseTblrLibrary{booktabs}

\newcommand{\taban}[3]{[{\scriptsize#1}$_\text{#2}^\text{#3}$]}

\newcommand{\appsymb}{{\large$\star$}}
\newcommand{\appref}[1]{{\hyperref[proof:#1]{\appsymb}}}
\newcommand{\apprefX}[1]{{\hyperref[#1]{\appsymb}}}

\newcommand{\sbnsdTsc}{Safe Bicycle Network with Bounded Detours}
\newcommand{\sbnsdAcr}{SBNBD}

\newcommand{\Es}{\ensuremath{E_s}}
\newcommand{\Eu}{\ensuremath{E_u}}
\newcommand{\cst}{\ensuremath{c}}
\newcommand{\trt}{\ensuremath{\theta}}
\newcommand{\TPset}{\ensuremath{\calP}}

\newcommandx{\impro}[2][1=G]{\ensuremath{#1\langle#2\rangle}}

\usepackage{siunitx}
\newcommand{\vsec}[1]{\SI{#1}{\second}}
\newcommand{\vkm}[1]{\SI{#1}{\kilo\meter}}
\newcommand{\vkmh}[1]{\SI{#1}{\kilo\meter\per\hour}}

\DeclareMathOperator{\dist}{dist}

\DeclareMathOperator{\len}{len}

\DeclareMathOperator{\poly}{poly}

\newcommand{\param}[1]{\mathtt{#1}}
\newcommand{\fes}{\param{fes}}
\newcommand{\fvs}{\param{fvs}}
\newcommand{\tw}{\param{tw}}
\newcommand{\twFi}{\overline{\tw}}
\newcommand{\vcn}{\param{vcn}}

\newcommand{\VG}{\mathtt{V}}
\newcommand{\EG}{\mathtt{E}}

\newcommand{\ntp}{p} %
\newcommand{\nue}{\param{m_u}} %
\newcommand{\nuc}{\param{cc_u}} %
\newcommand{\muc}{\param{mcc_u}} %

\newcommand{\cut}{\partial}

\newcommand{\outarcs}{\operatorname{out}}
\newcommand{\inarcs}{\operatorname{in}}

\newcommand{\algfont}[1]{\textsf{#1}}
\newcommand{\ILP}{\algfont{ILP}}
\newcommand{\ILPcut}[1]{\algfont{ILP}$_{\algfont{twc#1}}$}
\newcommand{\ILPp}{\algfont{ILP}$^{\algfont{pre}}$}
\newcommand{\ILPpcut}[1]{\algfont{ILP}$_{\algfont{twc#1}}^{\algfont{pre}}$}

\newcommand{\plotpath}{plots/}

\newcommand{\tikzpreamble}{%
  \tikzstyle{xnode}=[circle,fill,scale=0.45,draw]

  \tikzstyle{xtypeB}=[color=orange]
  \tikzstyle{xtypeA}=[color=blue]
  \tikzstyle{xtypeC}=[color=magenta]
  \tikzstyle{xtypeD}=[color=teal]

  \tikzstyle{xnmarkA}=[rectangle,scale=0.9,draw,xtypeA,ultra thick]
  \tikzstyle{xnmarkB}=[rectangle,scale=1.3,draw,xtypeB,ultra thick]
  \tikzstyle{xnmarkC}=[rectangle,scale=1.7,draw,xtypeC,ultra thick]
  \tikzstyle{xnmarkD}=[rectangle,scale=2.2,draw,xtypeD,ultra thick]
  \tikzstyle{xnmarkE}=[rectangle,scale=2.6,draw=green,ultra thick]

  \tikzstyle{xsedge}=[-,color=black]
  \tikzstyle{xuedge}=[thick,-,color=red!75!black]
  \tikzstyle{xsoledge}=[ultra thick,-,color=green!66!black]

	\tikzstyle{xemark}=[midway,inner sep=1pt,fill=white,opacity=0.8,font=\scriptsize,sloped]

  \tikzstyle{xpath}=[->,>=latex,line width=0.35em,opacity=0.25]

  \tikzstyle{xpathA}=[xpath,xtypeA]
  \tikzstyle{xpathB}=[xpath,xtypeB]
  \tikzstyle{xpathC}=[xpath,xtypeC]
  \tikzstyle{xpathD}=[xpath,xtypeD]
  \tikzstyle{xpathE}=[xpath,color=red]
  \tikzstyle{xpathF}=[xpath,color=brown]

  \tikzstyle{colPNPH}=[fill=red!70]
  \tikzstyle{colXP}=[fill=yellow]
  \tikzstyle{colWH}=[fill=orange]
  \tikzstyle{colXPWH}=[preaction={fill=yellow},,pattern={Lines[%
        angle=90,              %
        distance=6pt,          %
        line width=3pt        %
      ]}, pattern color=orange]
  \tikzstyle{colFPT}=[fill=green]
  \tikzstyle{colFPTPK}=[preaction={fill=green},,pattern={Lines[%
        angle=90,              %
        distance=6pt,          %
        line width=3pt        %
      ]}, pattern color=green!20!white]
  \tikzstyle{colFPTNOPK}=[preaction={fill=green},,pattern={Lines[%
        angle=0,              %
        distance=6pt,          %
        line width=3pt        %
      ]}, pattern color=green!60!black]
}
\newcommand{\xemarkL}[2]{{\color{black}{#1}}|#2}
\newcommandx{\tikzlabel}[3][1=-2,3=2]{\node at (#1*\xr,#3*\yr)[]{(#2)};}

\begin{document}

\maketitle

\begin{abstract}
We introduce the \emph{Safe Bicycle Network with Bounded Detours} (\emph{SBNBD}) problem,
a network-design problem motivated by the upgrade of rural road networks for bicycle traffic. Given an undirected graph whose edges have length and are classified as safe or unsafe
with unsafe edges carrying upgrade costs,
a set of terminal pairs,
an upgrade budget,
and a detour factor $\alpha$,
the task is to upgrade unsafe edges so that every terminal pair is connected by a safe path whose length is at most $\alpha$ times its shortest-path distance in the original network.

We study the computational complexity of SBNBD from a parameterized perspective.
We prove strong NP-hardness even on highly restricted graph classes,
including planar graphs of treewidth two,
graphs with feedback vertex set number one,
and graphs of maximum degree three.
We complement these lower bounds with polynomial-time algorithms for trees and graphs of maximum degree two.
For parameterized complexity,
we show fixed-parameter tractability for the number of unsafe edges and prove matching lower bounds under SETH,
a polynomial-kernel lower bound,
and W-hardness for natural parameters.
Our main positive structural result is an algorithm that maps any instance to an equivalent instance with $O(\mathrm{fes}+p)$ vertices and edges,
where $\mathrm{fes}$ is the feedback edge number and $p$ is the number of terminal pairs;
this yields fixed-parameter tractability for the combined parameter $\mathrm{fes}+p$.

Finally,
we construct and evaluate exact ILP-based algorithms on road networks derived from OpenStreetMap data for small German municipalities and their surrounding rural regions.
The experiments show that these networks have small treewidth upper bounds and moderate feedback edge structure.
We introduce a preprocessing routine based on the reduction algorithm for the parameter~$\mathrm{fes}+p$
and a heuristic cut generation based on tree decompositions.
Both improve exact solving performance,
in particular for harder instances.
Our results for different detour-factor bounds show that a moderate increase in~$\alpha$
can substantially reduce the total length of the upgraded network,
revealing practical trade-offs between upgrade budget and the maximum allowed relative detour.
Our results indicate that structural graph parameters provide a useful algorithmic lens for safe bicycle-network design in rural areas.
\end{abstract}

\newpage
\section{Introduction}
\label{sec:intro}

Improving bicycle networks for urban and rural areas can help to
reduce pollution \cite{MuellerRCNDGGIKN2015},
promote physical activity \cite{OjaTBGKRK2011},
and mitigate transport poverty~\cite{VanDuelmenSK2022}.
Compared with urban areas,
both public transport and bicycle infrastructure are often insufficient in rural regions~\cite{VieroSzell2025}.
Since improving public transport in rural areas is often economically challenging~\cite{OECD1984}
and people use cars even for short-distance rides~\cite{BeckxBDBI2013,GorgesMingardo2025},
cost-effective improvement of bicycle networks becomes a central planning task.
In Germany,
54\% of respondents living in rural areas with less than 20,000 inhabitants
report that they
would like to use bicycles more frequently in the future
\cite{FahrradMonitor2025}.
High-quality, connected cycling infrastructure could
roughly triple cycling use by 2035
(up to~45\% modal share on short trips),
with significant untapped potential also in rural areas \cite{DollBrauerDuffner2024}.
Other mobility solutions like bike-sharing
for rural areas~\cite{PoltimaeRRP2022}
benefit from such infrastructure.
Safety and bounded detour length~\cite{FGSV2010ERA,SchonerLevinson2014}
are key criteria
for a network's quality.

This paper treats the problem of cost-optimal infrastructure upgrades in a given network
to enable safe bicycle trips
for specified terminal pairs with only small detours.
Herein,
the network's edges are classified either as safe or unsafe,
corresponding to low- and high-stress segments~\cite{FurthMN2016}.
Our model is based on the bicycle network improvement problem due to Lim~\etal~\cite{LimDGH22}.
They optimize an aggregate objective over weighted origin-destination pairs,
while we impose hard relative detour constraints.
They study the problem from an OR perspective,
while we study the problem from a computational and parameterized complexity perspective,
with a focus on graph-theoretical properties derived from the applications on rural areas.
Several models similar to ours are already studied in the scientific literature,
yet a fine-grained computational and parameterized complexity analysis seems to be missing to date.
We complement our complexity analysis with experimental algorithmics.

\subsection{Our Model and Decision Problem}

We model the street network as an undirected graph~$G$ with a set~$V$ of vertices (representing junctions or points of interest like schools) and a set~$E\subseteq\binom{V}{2}$ of edges (representing street segments),
where the edge set~$E$ is partitioned into two sets~$\Es$ and~$\Eu$ of \emph{safe} and \emph{unsafe} edges.
Each edge is equipped with a (travel) length $\trt\colon E\to \N$,
representing the length of or the time needed to traverse the street segment.
Moreover,
each unsafe edge is equipped with a cost~$\cst\colon \Eu\to \N$,
representing the cost to make it safe
(e.g, by constructing a bike lane on this street segment).
For an edge subset~$F\subseteq \Eu$,
we denote by~$\impro[G]{F}$ the graph~$G$ with edge set partitioned into~$\Es'=\Es\cup F$ and~$\Eu'=\Eu\setminus F$.

A terminal pair consists of two distinct vertices from~$V$,
representing the endpoints of a requested bicycle trip.
For two distinct vertices~$s,t\in V$,
an $s$-$t$ path~$P$ is a sequence~$(v_0,v_1,\dots,v_\ell)$ of distinct vertices from~$V$ with~$v_0=s$ and~$v_\ell=t$ such that~$\{v_{i-1},v_{i}\}\in E$ for all~$i\in\set{\ell}$.
The length~$\len(P)$ of~$P$ is $\sum_{i=1}^\ell \trt(\{v_{i-1},v_{i}\})$.
With~$\dist_G(s,t)$ we denote the smallest length of an~$s$-$t$ path in~$G$.
We call path~$P$ \emph{safe} if $\{v_{i-1},v_{i}\}\in \Es$ for all~$i\in\set{\ell}$,
and \emph{unsafe} otherwise.
We often drop the graph from the notation when it is clear from the context.
The detour factor
of an $s$-$t$ path~$P$ in~$G$ is~$\len(P)/\dist(s,t) \in \Qalo$.

\decprob{\sbnsdTsc{} (\sbnsdAcr)}{sbnsd}
{a network $G=(V,E,\cst,\trt)$ with $E=\Es\uplus\Eu$,
a set of $\ntp\in\N$ terminal pairs~$\TPset=\{\{s_i,t_i\}\mid i\in\set{\ntp}\}$ from~$V$,
a budget~$k\in\Nzero$,
and a number~$\alpha\in\Qalo$}
{there is a solution~$F\subseteq \Eu$ with~$\cst(F)=\sum_{e\in F}\cst(e)\leq k$ such that
for each~$i\in\set{\ntp}$,
there is a safe $s_i$-$t_i$ path in~$\impro{F}$ of detour factor at most~$\alpha$}

\subsection{Our Contributions}

Our contributions are two-fold.
First,
we provide
a computational and parameterized complexity analysis for \sbnsdAcr{}.
Second,
we study exact ILP-based algorithms experimentally,
focusing on the impact of reduction rules and treewidth-based cut generation.

\subparagraph{Computational and parameterized complexity.}
\cref{fig:hasse} arranges our complexity-theoretic results.
We prove that \sbnsdAcr{} is \NP-hard in very restricted settings,
such as vertex cover number two,
or treewidth two and maximum degree three.
\begin{figure}[t]
 \centering
 \begin{tikzpicture}
  \def\xr{1}
  \def\yr{0.85}
  \def\xs{1.8*\xr}
  \def\ys{1*\yr}
  \def\xss{0.325*\xr}
  \def\yss{0.45*\yr}
  \def\lxsh{2.9*\xr}
  \def\lysh{0.5*\yr}
  \tikzpreamble{}

  \tikzstyle{param}=[draw,rectangle]
  \tikzstyle{paramT}=[draw,rectangle,ultra thick]
  \tikzstyle{pedge}=[->,>=latex]

  \newcommand{\thenodes}{%
    \node (n) at (-1*\xs,0)[param, colFPT]{$n$};
    \node (m) at (0*\xs,0)[param, colFPT]{$m$};
    \node at ($(n)!0.5!(m)$)[]{$\stackrel{\text{no iso}}{\equiv}$};
    \node (ntp) at (-3*\xs,-2.5*\ys)[param,colWH]{$\ntp$};
    \node (fes) at (-2*\xs,-2*\ys)[param,fill=white]{$\fes$};
    \node (fesntp) at (-2.5*\xs,-1*\ys)[param,colFPT]{$\fes+\ntp$};
    \node (twdelta) at (-1*\xs,-3*\ys)[param,colPNPH]{$\tw+\Delta$};
    \node (fvs) at (-2.5*\xs,-3.5*\ys)[param,colPNPH]{$\fvs$};
    \node (vcn) at (-1.75*\xs,-2.75*\ys)[param,colPNPH]{$\vcn$};
    \node (tw) at (-2.5*\xs,-4.25*\ys)[param,colPNPH]{$\tw$};
    \node (delta) at (-1*\xs,-4*\ys)[param,colPNPH]{$\Delta$};
    \node (safe) at (0*\xs,-3*\ys)[param,colPNPH]{$|\Es|$};
    \node (unsafe) at (1*\xs,-1*\ys)[param,colFPT]{$\nue$};
    \node (nuc) at (0.66*\xs,-3*\ys)[param,colPNPH]{$\nuc$};
    \node (muc) at (1.33*\xs,-3*\ys)[param,colPNPH]{$\muc$};
    \node (k) at (-3.5*\xs,-2.5*\xr)[param,colWH]{$k$};
    \node (ntpk) at (-3.5*\xs,-1.5*\xr)[param,colWH]{$\ntp+k$};

  }

  \thenodes{}

  \foreach \x/\y in {fesntp/n,fes/fesntp,ntp/fesntp,twdelta/n,fvs/fes,tw/fvs,tw/twdelta,delta/twdelta,safe/m,unsafe/m,nuc/unsafe,muc/unsafe,ntp/ntpk,k/ntpk,fvs/vcn,vcn/n}{%
    \draw[pedge] (\x) to (\y);
  }

  \thenodes{}

  \begin{scope}[xshift=-0.8*\xr cm, yshift=-3.8*\ys cm]%
	 \node at (0,0*\lysh)[colPNPH,inner sep=4pt,label=0:{para-\NP-hard}]{};
	 \node at (0,-1*\lysh)[colWH,inner sep=4pt,label=0:{W-hard}]{};
	 \node at (1*\lxsh,0*\lysh)[colFPT,inner sep=4pt,label=0:{FPT}]{};
	 \node at (1*\lxsh,-1*\lysh)[param,fill=white,inner sep=4pt,label=0:{open\phantom{A}}]{};
	\end{scope}

	\begin{scope}[xshift=4.875*\xr cm, yshift=-0.25*\ys cm]
    \def\xmax{6}
    \def\ymax{7}
    \foreach\x in {0,1,...,\xmax}{
      \node (v\x) at (\x*\xss,0)[inner sep=0pt]{};
      \node (vv\x) at (\x*\xss,-\ymax*\yss-1*\yss)[inner sep=0pt]{};
      \draw (v\x) -- (vv\x);
      \ifnum\x<6
        \node at (0.5*\xss+\x*\xss,0)[anchor=south]{$\x$};
      \fi
    }
    \foreach\y in {0,1,...,\ymax}{
      \node (p\y) at (0,-\y*\yss)[inner sep=0pt]{};
      \node (pp\y) at (\xmax*\xss,-\y*\yss)[inner sep=0pt]{};
      \draw (p\y) -- (pp\y);
      \node (pl\y) at (0,-0.5*\yss-\y*\yss)[]{};
    }
    \draw (0,-\ymax*\yss-1*\yss) -- (\xmax*\xss,-\ymax*\yss-1*\yss);

    \foreach\y/\l/\ptime/\nphard in {%
      2/{$\muc$}/1/1,1/{$\nuc$}/1/1,
      0/{$|\Es|$}/0/0,4/{$\tw$}/2/2,
      3/{$\fvs$}/1/1,5/{$\vcn$}/2/2,
      6/{$\Delta$}/3/3,7/{$\tw+\Delta$}/5/5%
    }{
      \node at (pl\y)[anchor=east]{\l};
      \draw[fill=green,opacity=0.2] (p\y) rectangle ($(p\y)+(\ptime*\xss,-1*\yss)$);
      \draw[fill=red,opacity=0.2] ($(p\y)+(\nphard*\xss,-1*\yss)$) rectangle (pp\y);
    }
	\end{scope}
  \node at (-6.65*\xr,0*\yr)[]{(a)};
  \node at (3.6*\xr,0*\yr)[]{(b)};
 \end{tikzpicture}
 \caption{(a) Hasse diagram of our studied parameters with their complexity classification assuming unary encoding. Number~$n$ of vertices and number~$m$ of edges are parameterized equivalent when there are no isolated vertices.
 A parameter~$x$ points to another parameter~$y$ if there is a function~$f$ such that~$x\leq f(y)$ holds for every instance.
 Thus,
 fixed-parameter tractability for~$x$ transfers to $y$,
 while hardness for $y$ transfers to $x$.
 (b) Overview of the polynomial-time solvability (green) versus \NP-hardness (red) borders for para-\NP-hard parameters.
 ($\nue$: number of unsafe edges;
 $\nuc$: number of components in the graph induced by all unsafe edges;
 $\muc$: size of the largest component in the graph induced by all unsafe edges.)}
 \label{fig:hasse}
\end{figure}
Each of our para-\NP-hardness results is tight
in the sense that the next smaller parameter value leads to polynomial-time solvability.
We prove \W{1}-hardness \wpb{} number of terminal pairs combined with the budget.
For binary-encoded edge lengths and costs,
we prove \NP-hardness already for one terminal pair.
Finally,
we show several reduction rules
that transform any instance to an equivalent instance whose combinatorial size
is linear in the number of terminal pairs
and the feedback edge number,
i.e.,
the size of a minimum feedback edge set.
These reduction rules
are also effective as preprocessing in our experiments.

\subparagraph*{Experimental algorithmics.}
Using OpenStreetMap data,
we extracted undirected road networks for 30 German villages and their \vkm{3} surrounding rural areas.
These villages were selected from a ranking of over 400 German villages according to their perceived bicycle friendliness.
For these networks,
we then computed the maximum degree,
the size of a minimum feedback edge set,
small upper bounds on the treewidth,
and checked for planarity.
Our main findings are that these networks
have small maximum degree~$\leq 5$,
small treewidth upper bound~$\leq 14$,
and roughly 20\% of the edges
suffice on average
as a feedback edge set.

We tested a plain ILP against
the ILP with preprocessing
via reduction rules,
with additional cuts derived from a tree decomposition (TD),
and with both.
Our results are:
\begin{enumerate}
 \item Most instances obtained for villages are solvable within seconds by each of the solvers.
 \item For the larger instances,
 preprocessing combined with cut generation performed best.
 \item The total length of upgraded unsafe edges decreases by $7.87\%$ on average
 when~$\alpha$ increases from~$1.2$ to~$1.3$,
 illustrating the trade-off between allowed detours and required upgrades.
\end{enumerate}

\section{Related Work}

\subparagraph*{Improving bicycle networks.}
One line of research develops computational optimization models for bicycle-network design and improvement.
Lim~\etal~\cite{LimDGH22}
consider weighted terminal pairs
and
optimize the weighted detour penalty aggregated over each trip.
They formulate their problem as mixed integer program (MIP) and solve it via Benders decomposition on a case-study instance.
Duthie and Unnikrishnan~\cite{DuthieUnnikrishnan2014}
study ``retrofitting'' segments at minimum cost to ensure a minimum safety level with a given bounded detour.
Compared to our model,
their segments can have several safety levels and they express their length-bound through a function of the shortest path.
While structurally,
their focus is on roadway quality,
ours is on parameters observed in rural areas.
Mauttone~\etal~\cite{MauttoneMRT2017}
model user route choices in a multi-commodity-flow MIP solved heuristically,
allowing routes to use segments without bicycle infrastructure at higher cost.
In contrast, we require each terminal pair to have safe route of bounded relative detour.
Natera et al.~\cite{NateraOBIS2020} use a greedy strategy to compute missing links to interconnect components of the bicycle network.
Steinacker~\etal~\cite{SteinackerSTM2022} use greedy network sparsification to design efficient urban bicycle networks,
incorporating cycling demand and route choices based on safety preferences.
To the best of our knowledge,
computational or parameterized complexity analyses are missing for bicycle-network improvement problems with infrastructure upgrades under hard detour constraints.
Another line of research analyzes existing bicycle networks and their growth.
Szell~\etal~\cite{SzellMPGS2022} argue that uncoordinated enhancement of bicycle networks
can lead to higher cost and outline the need for careful planning.
Schoner and Levinson~\cite{SchonerLevinson2014} analyzed 74 bicycle networks from the US
and found that connectivity and directness are important factors.
Finally,
Furth~\etal~\cite{FurthMN2016} developed a framework to classify street segments as low- to high-stress segments.

\subparagraph*{Algorithmics of related problems.}
By assigning zero cost to safe edges and positive cost to unsafe edges,
our problem
can be viewed as
an
undirected pairwise weighted spanner or as a distance-constrained Steiner-type problem.
Cygan~\etal~\cite{CyganGK2013} study sparse pairwise spanners in unweighted undirected graphs
where the possible increase of a pairwise distance is controlled by a stretch function.
They study existential bounds and give a polynomial-time construction for additive stretch functions.
Approximation algorithms and hardness results for spanner-type problems
have been studied on unweighted graphs~\cite{ChlamtacDKL2020}
and weighted directed graphs~\cite{GrigorescuKL2026}.
Kobayashi~\cite{Kobayashi2018}
studies the all-pairs $t$-spanners in unweighted undirected graphs,
where each pair's distance is allowed to increase within a factor of~$t$.
They prove \NP-hardness even in planar bounded-degree graphs.
Moreover,
they show an FPT algorithm when parameterized by the number of discarded edges
(Kobayashi~\cite{Kobayashi2020} proved such an algorithm to exist also for additive spanners).
Regarding parameterized complexity,
Feldmann and Lampis~\cite{FeldmannLampis2025}
study the related problem \prob{Steiner Forest} with terminal pairs
(which has no detour constraints).
They focus on structural parameters
and prove an EPAS with running time depending on treewidth
(the problem is para-\NP-hard already for $\tw=3$~\cite{GASSNER2010}).
They also prove an FPT algorithm when parameterized by the size of a feedback edge set.
Simonov~\cite{SimonovSV2026} study
the problem of computing pairwise distance preservers in undirected unweighted graphs,
where no increase in path length is allowed between any terminal pair.
They prove \W{1}-hardness when parameterized by the number of terminal pairs and
\NP-hardness for vertex cover number $\vcn=3$.
Thus,
we can view our problem as an undirected
common-stretch
pairwise
weighted
spanner problem with designated safe edges of zero cost and unsafe edges of positive cost;
to the best of our knowledge,
a systematic parameterized complexity study is missing
for this variant.

\subparagraph*{Structural parameters of real-world graphs.}
Maniu~\etal~\cite{ManiuSJ19} study
the treewidth of 25 networks from eight different domains,
including infrastructure.
They find that many of these networks have treewidth too large for a direct application of treewidth-based algorithms.
In contrast,
we show that the street networks derived from rural areas admit small treewidth upper bounds.
Analyzing large-scale street networks using OSM data has become common~\cite{Boeing2017,Boeing2020,EppsteinGoodrich2008}.
Related to the feedback edge number,
the meshedness coefficient of planar real-world street networks is studied~\cite{CardilloSLP2006},
which relates the numbers of bounded faces of the network
and of a maximally connected planar graph with the same number of vertices.
In connected planar graphs,
the number of bounded faces equals the feedback edge number.
To the best of our knowledge,
classic FPT parameters such as treewidth or feedback edge number
have not been measured
for larger datasets of rural street networks.

\section{Preliminaries}

We denote by~$\N$ and~$\Nzero$ the natural numbers excluding and including zero,
respectively.
We denote by~$\Q$ and~$\Qalo$ the rational numbers and the rational numbers at least one,
respectively.
We use basic notation from parameterized complexity~\cite{cygan2015parameterized}.
We use common graph-theoretic notation and explain only notation that may be non-standard.
For an undirected graph~$G=(V,E)$,
we denote the vertex and edge set of~$G$ by~$\VG(G)$ and~$\EG(G)$,
respectively.
For an edge subset~$E'\subseteq E$,
we denote by~$G[E']=(V',E')$ with~$V'=\{v\in V\mid \exists e\in E': v\in e\}$
the edge-induced graph~$G$ induced on the edge set~$E'$.
We write~$G-E'$ for the graph~$(V,E\setminus E')$.
For a vertex subset~$U\subseteq V$,
we denote by~$\cut_G(U)=\{\{v,w\}\in E\mid |\{v,w\}\cap U|=1\}$
the \emph{cut} for~$U$ in~$G$,
i.e.,
all edges in~$G$ with exactly one endpoint in~$U$.

\section{Tight Strong NP-hardness on Simple Graph Classes}

\begin{theorem}
 \label{thm:NPh:planarfvs1}
 \sbnsdAcr{} is \NP-hard even on planar graphs with
 treewidth $\tw=2$, and
 \begin{enumerate}[(a)]
  \item feedback vertex set size $\fvs=1$, $\muc=1$, and the graph is bipartite.
  \item feedback vertex set size $\fvs=1$, vertex cover number~$\vcn=2$, and $\nuc=1$.
  \item maximum degree~$\Delta=3$, $\muc=1$, and the graph is bipartite.
 \end{enumerate}
 Moreover,
 \unlessETH,
 there is no $2^{o(n+m)}\cdot \poly(|I|)$-time algorithm.
\end{theorem}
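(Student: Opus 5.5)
The plan is a polynomial-time reduction from a restricted satisfiability problem. For the ETH part I would use \prob{3-SAT} in which every variable occurs at most a constant number of times (say, at most four). By the sparsification lemma and the standard variable-splitting trick, this restricted version still has no $2^{o(n'+m')}$-time algorithm \unlessETH{}, where $n'$ and $m'$ are its numbers of variables and clauses. To get planarity I would instead start from \prob{Planar 3-SAT}, or \prob{Planar Monotone Rectilinear 3-SAT}, where the planar embedding of the variable--clause incidence graph guides the layout. All edge lengths and costs will be constants, which gives strong \NP-hardness. Each construction will have $O(n'+m')$ vertices and edges, so a $2^{o(n+m)}\cdot\poly(|I|)$-time algorithm for \sbnsdAcr{} would refute ETH; this gives the final claim once (a)--(c) are in place.

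The common skeleton is as follows. For each variable $x_i$, build two parallel routes between vertices $u_i$ and $v_i$: a \emph{true} route through an unsafe edge $e_i^T$ and a \emph{false} route through an unsafe edge $e_i^F$, each of cost~$1$. Add a terminal pair $\{u_i,v_i\}$ with $\alpha$ small enough that only these two routes qualify. Setting the budget $k=n'$ then forces exactly one of $e_i^T,e_i^F$ to be upgraded per variable, which encodes a truth assignment. For each clause $C_j$, add a terminal pair $\{s_j,t_j\}$ whose candidate short paths are exactly the three routes passing through the unsafe edges of its literals. Safe connector paths attach $s_j,t_j$ to the literal routes. I would tune their lengths so that $\dist(s_j,t_j)$ is attained by, or is within factor $\alpha$ of, each literal detour, while any route through two variable gadgets is longer than $\alpha\cdot\dist(s_j,t_j)$. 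The correctness proof then has two directions. A satisfying assignment upgrades the corresponding $n'$ edges, and every clause pair uses a true literal's edge. Conversely, the budget together with the variable pairs forces a consistent assignment, and the detour bound forces each clause to contain a true literal.

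For the three variants I would reshape this skeleton as follows. For (a), I would identify all $u_i$ into one hub vertex~$c$ and subdivide safe edges so that the graph is bipartite. I would route each clause pair from $c$ to a private clause vertex, via safe pendant paths that hang off the literal routes and do not merge with each other. Then $G-c$ should be a forest, giving $\fvs=1$ and $\tw=2$, and since unsafe edges are never adjacent, $\muc=1$. For (b), I would instead make every unsafe edge incident to the hub~$c$, so that they form a star and $\nuc=1$. The other endpoints form a small second hub~$d$, giving $\vcn=2$ with vertex cover $\{c,d\}$, and the detours are realized by edge lengths rather than subdivisions. For (c), I would replace the hub by a balanced binary-tree-like safe distribution structure, or a chain, with lengths adjusted to preserve the detour inequalities, so that $\Delta=3$ while planarity, bipartiteness, $\tw=2$ and $\muc=1$ are kept.

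I expect the main obstacle to be achieving all structural constraints at once in (a) and (b) without letting clause pairs use unintended shortcuts through the hub. Because everything meets at~$c$, a clause pair might reach its target through $c$ and a different variable's route. I would first try to block this by setting the lengths so that every path visiting $c$ twice is infeasible (it is not a path), and every path entering a non-incident variable gadget exceeds $\alpha\cdot\dist$. I would verify this with an explicit length table. If this fails, the fallback is to give each literal occurrence its own copy of the variable's unsafe edge and enforce consistency with extra terminal pairs, while keeping the instance size linear.
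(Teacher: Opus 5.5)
Your reduction has a genuine gap: the skeleton places the formula's structure in the graph itself. Your safe connector paths join each clause terminal to the gadgets of its three variables. If you delete the hub $c$ (or the distribution tree in (c)) and contract each variable gadget and each clause's connectors, you obtain the variable--clause incidence graph of the formula as a minor of $G$. In (a), $G-c$ therefore contains a cycle as soon as two clauses share two variables, so your claim that $G-c$ is a forest fails. Since every variable gadget is a cycle through $c$, $c$ is the only possible single-vertex feedback vertex set, and you get $\fvs\ge 2$. More generally, treewidth is minor-monotone, so $\tw(G)=2$ (required in (a) and (c)) would force the incidence graph to have treewidth at most~$2$. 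SAT is polynomial-time solvable on formulas of bounded incidence treewidth, and trivially so on acyclic ones. Hence no choice of lengths, and no \prob{Planar 3-SAT} source, can make a reduction of this shape prove hardness under the bounds of (a) or (c).

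For (b), the clause gadget itself cannot be built. Suppose $\{c,d\}$ is a vertex cover and all unsafe edges touch $c$. Then every path between two vertices outside $\{c,d\}$ has at most one inner vertex $z\notin\{c,d\}$. The routes $(s,c,z,d,t)$ and $(s,d,z,c,t)$ meet the detour bound if and only if $\theta(c,z)+\theta(z,d)$ is at most a threshold depending only on $\{s,t\}$. So the intermediate unsafe edges $\{c,z\}$ usable by a pair always form a prefix of one global order. Apart from these, a pair can use only the unsafe edges joining $c$ to its own endpoints, and $\{c,d\}$. Consequently, for general formulas you cannot give each clause pair its own three literal routes.

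The missing idea is to avoid 3-ary clauses and let the terminal pairs, not the graph, carry the instance. Reduce from \prob{Vertex Cover} and attach one identical gadget per vertex $v$ to the hub(s): a unit-cost unsafe edge at $v^*$ plus a longer safe alternative. Turn each edge $\{v,w\}$ into the terminal pair $\{v^*,w^*\}$, choosing lengths and $\alpha$ so that a safe path within the detour bound exists if and only if at least one of the two unsafe edges at $v^*,w^*$ is upgraded. The graph is then independent of the input's edges, so $\fvs$, $\vcn$, $\tw$ and $\Delta$ stay constant. Because terminal pairs do not count towards $n+m$, the two-hub version has $n+m=O(|V|)$, which gives the ETH bound directly from \prob{Vertex Cover}.
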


We give three polynomial-time many-one reductions,
each from the \NP-hard \cite{Karp1972} \prob{Vertex Cover} problem,
where,
given an undirected graph~$G=(V,E)$ and an integer~$r$,
the question is whether there is a subset~$W\subseteq V$ with~$|W|\leq r$
such that~$e\cap W\neq \emptyset$ for every~$e\in E$.
\prob{Vertex Cover} admits no~$2^{o(|V|+|E|)}\cdot \poly(|I|)$-time algorithm
\unlessETH~\cite{ImpagliazzoPaturi2001,ImpagliazzoPZ2001}.

\newcommand{\nvc}{\eta}
\begin{construction}
 \label{constr:planarfvs1}
 \begin{figure}
  \centering
  \begin{tikzpicture}
   \def\xr{0.975}
   \def\yr{0.975}
   \tikzpreamble{}

   \newcommand{\skelA}{%
      \node (c) at (0,0)[xnode,label=45:$C$]{};

      \node (vu) at (-0.225*\xr,-0.75*\yr)[xnode]{};
      \node (vs) at (+0.225*\xr,-0.75*\yr)[xnode]{};
      \node (va) at (0*\xr,-1.75*\yr)[xnode]{};
      \node at (va)[xnmarkA]{};
      \node at (va)[xnmarkB]{};

      \node (uu) at (-0.225*\xr,0.75*\yr)[xnode]{};
      \node (us) at (+0.225*\xr,0.75*\yr)[xnode]{};
      \node (ua) at (0*\xr,1.75*\yr)[xnode]{};
      \node at (ua)[xnmarkD]{};
      \node at (ua)[xnmarkC]{};

      \node (xu) at (-0.725*\xr,-0.25*\yr)[xnode]{};
      \node (xs) at (-0.725*\xr,0.25*\yr)[xnode]{};
      \node (xa) at (-1.75*\xr,0*\yr)[xnode]{};
      \node at (xa)[xnmarkA]{};
      \node at (xa)[xnmarkD]{};

      \node (yu) at (+0.725*\xr,-0.25*\yr)[xnode]{};
      \node (ys) at (+0.725*\xr,0.25*\yr)[xnode]{};
      \node (ya) at (+1.75*\xr,0*\yr)[xnode]{};
      \node at (ya)[xnmarkC]{};
      \node at (ya)[xnmarkB]{};

      \foreach\x in {v,u,x,y}{
        \draw[xsedge] (\x u) -- node[xemark]{$1$}(c);
        \draw[xuedge] (\x u) -- node[xemark]{$\xemarkL{1}{1}$}(\x a);
        \draw[xsedge] (\x s) -- node[xemark]{$4$}(\x a);
        \draw[xsedge] (\x s) -- node[xemark]{$1$}(c);
      }

      \node at (1*\xr,1*\xr)[rotate=45]{$\vdots$};
      \node at (-1*\xr,1*\xr)[rotate=-45]{$\vdots$};
      \node at (-1*\xr,-1*\xr)[rotate=45]{$\vdots$};
      \node at (1*\xr,-1*\xr)[rotate=-45]{$\vdots$};
   }

   \begin{scope}[xshift=0*\xr cm]
      \tikzlabel{a}
      \skelA{}
   \end{scope}

   \newcommand{\skelB}{%
      \node (Cu) at (-1*\xr,0)[xnode,label=-135:$C_u$]{};
      \node (Cs) at (1*\xr,0)[xnode,label=-45:$C_s$]{};

      \node (va) at (0*\xr,1.8*\yr)[xnode]{};
      \node at (va)[xnmarkA]{};
      \node at (va)[xnmarkB]{};

      \node (ua) at (0*\xr,0.6*\yr)[xnode]{};
      \node at (ua)[xnmarkD]{};
      \node at (ua)[xnmarkC]{};

      \node (xa) at (0*\xr,-0.6*\yr)[xnode]{};
      \node at (xa)[xnmarkA]{};
      \node at (xa)[xnmarkD]{};

      \node (ya) at (0*\xr,-1.8*\yr)[xnode]{};
      \node at (ya)[xnmarkC]{};
      \node at (ya)[xnmarkB]{};

      \foreach\x in {v,u,x,y}{
        \draw[xuedge] (Cu) -- node[xemark]{$\xemarkL{1}{1}$}(\x a);
        \draw[xsedge] (Cs) -- node[xemark]{$4$}(\x a);
      }
      \draw[xsedge] (Cs) -- node[xemark,xshift=0.1*\xr cm]{$1$}(Cu);

      \node at (0*\xr,1*\xr){$\vdots$};
      \node at (0*\xr,0*\xr){$\vdots$};
      \node at (0*\xr,-1*\xr){$\vdots$};
   }

   \begin{scope}[xshift=5*\xr cm]
      \tikzlabel[-1.65]{b}
      \skelB{}
   \end{scope}

   \newcommand{\skelC}{%
      \node (c) at (0,0)[xnode]{};
      \node (ct) at (0.5*\xr,1*\yr)[xnode]{};
      \node (cb) at (0.5*\xr,-1*\yr)[xnode]{};

      \node (ctt) at (1*\xr,1.5*\yr)[]{};
      \node (ctb) at (1*\xr,0.5*\yr)[]{};
      \node (cbt) at (1*\xr,-0.5*\yr)[]{};
      \node (cbb) at (1*\xr,-1.5*\yr)[]{};

      \foreach \x in {ctt,ctb,cbt,cbb}{%
        \node at (\x)[anchor=south west,xshift=-5*\xr pt,yshift=2*\xr pt,rotate=-55]{$\vdots$};
        \node at (\x)[anchor=north west,xshift=-9*\xr pt,yshift=0*\xr pt,rotate=55]{$\vdots$};
      }

      \node (cv) at (1.5*\xr,1.75*\yr)[xnode]{};

      \node (vu) at (2*\xr,2*\yr)[xnode]{};
      \node (vs) at (2*\xr,1.5*\yr)[xnode]{};
      \node (va) at (3.25*\xr,1.75*\yr)[xnode]{};
      \node at (va)[xnmarkA]{};
      \node at (va)[xnmarkB]{};

      \node (cu) at (1.5*\xr,0.25*\yr)[xnode]{};
      \node (uu) at (2*\xr,0.5*\yr)[xnode]{};
      \node (us) at (2*\xr,0*\yr)[xnode]{};
      \node (ua) at (3.25*\xr,0.25*\yr)[xnode]{};
      \node at (ua)[xnmarkD]{};
      \node at (ua)[xnmarkC]{};

      \node (cx) at (1.5*\xr,-0.75*\yr)[xnode]{};
      \node (xu) at (2*\xr,-0.5*\yr)[xnode]{};
      \node (xs) at (2*\xr,-1*\yr)[xnode]{};
      \node (xa) at (3.25*\xr,-0.75*\yr)[xnode]{};
      \node at (xa)[xnmarkA]{};
      \node at (xa)[xnmarkD]{};

      \node (cy) at (1.5*\xr,-1.75*\yr)[xnode]{};
      \node (yu) at (2*\xr,-1.5*\yr)[xnode]{};
      \node (ys) at (2*\xr,-2*\yr)[xnode]{};
      \node (ya) at (3.25*\xr,-1.75*\yr)[xnode]{};
      \node at (ya)[xnmarkC]{};
      \node at (ya)[xnmarkB]{};
      \foreach\x in {v,u,x,y}{ %
        \draw[xsedge] (\x u) -- node[xemark]{$1$}(c\x);
        \draw[xuedge] (\x u) -- node[xemark]{$\xemarkL{n}{1}$}(\x a);
        \draw[xsedge] (\x s) -- node[xemark]{$3n$}(\x a);
        \draw[xsedge] (\x s) -- node[xemark]{$1$}(c\x);
      }

      \foreach\x/\y in {c/ct,c/cb,
        ct/ctt,ct/ctb,cb/cbt,cb/cbb%
        }{
        \draw[xsedge] (\x) -- node[xemark]{$1$}(\y);
      }

      \foreach \x/\y in {2/1.25,2/0.75,2/-0.25,2/-1.25}{%
        \node at (\x*\xr,\y*\xr){$\cdots$};
      }
   }

   \begin{scope}[xshift=8*\xr cm]
      \tikzlabel[-0.2]{c}
      \skelC{}
   \end{scope}
  \end{tikzpicture}
  \caption{Illustrations for \cref{constr:planarfvs1}(a), (b), and (c).
  An unsafe edge~$e$ is drawn red and marked with its length and cost as $\trt(e)\color{red}{|\cst(e)}$.
  Frames depict terminal pairs,
  i.e.,
  two vertices with the same frame (size and color) form a terminal pair.}
  \label{fig:planarfvs1}
 \end{figure}

 Let~$I=(G=(V,E),r)$ be an instance of \prob{Vertex Cover}
 with~$\nvc=|V|$ vertices.
 In each construction,
 we construct a graph~$G'$ that contains a vertex~$v^*$ for each vertex~$v\in V$,
 the set of terminal pairs is~$\TPset=\{\{v^*,w^*\} \mid e=\{v,w\}\in E\}$,
 and we have~$k=r$.
 See \cref{fig:planarfvs1} for an illustration to each of the three constructions.

 (a)
 Add a central vertex~$C$.
 For each vertex~$v\in V$,
 add two vertices~$v_u,v_s$,
 and add unsafe edge~$\{v_u,v^*\}$ of length 1 and cost 1,
 and safe edge~$\{v^*,v_s\}$ of length 4.
 Make~$C$ adjacent to each of~$v_u,v_s$ for each~$v\in V$ via a safe edge of length 1.
 Let~$\alpha\ceq2$.

 (b) There are two central vertices~$C_s,C_u$
 (which form the vertex cover of the constructed graph)
 connected by a safe edge of length~$1$.
 For each vertex~$v\in V$,
 connect~$v^*$ with~$C_s$ via a safe edge of length 4,
 and with~$C_u$ via an unsafe edge of length 1 and cost 1.
 Let~$\alpha\ceq3$.

 (c) We can assume that~$\nvc\ceq |V|=2^q$ for some~$q\in \N$ and~$2\log(\nvc)+4<\nvc$
 (by padding isolated vertices).
 Let~$G'$ be initially a binary tree~$T_\nvc$ with set~$\{\ell_v \mid v\in V\}$ of~$\nvc$ leaves.
 For each vertex~$v\in V$,
 add two vertices~$v_u,v_s$,
 and add unsafe edge~$\{v_u,v^*\}$ of length $\nvc$ and cost~1,
 and safe edge~$\{v^*,v_s\}$ of length $3\nvc$.
 Make~$\ell_v$ adjacent with each of~$v_u,v_s$ for each~$v\in V$ via a safe edge of length 1.
 Let~$\alpha\ceq2$.
 \cqed
\end{construction}

Each constructed network has treewidth at most two.
For (a) and (b),
removing~$C$ and~$C_u$,
respectively,
leaves a forest.
For (c),
start with a tree decomposition of width 1 for~$T_\nvc$.
For every $v\in V$,
choose a node whose bag contains~$\ell_v$
and append a node with bag~$\{\ell_v,v_u,v_s\}$.
To this new node,
append another node with bag~$\{v_u,v_s,v^*\}$.

\begin{proof}[Proof of \cref{thm:NPh:planarfvs1}]
 Let~$I'=(G'=(V',E',\cst,\trt),\TPset,k,\alpha)$ be the instance of \sbnsdAcr{}
 obtained in polynomial time from an instance $I=(G=(V,E),r)$ of \prob{Vertex Cover}
 via \cref{constr:planarfvs1}(a), (b), or (c).
 By construction,
 for each terminal pair~$\{s,t\}$,
 every $s$-$t$ path with detour factor at most~$\alpha$
 must contain an unsafe edge.
 For (a), the shortest path is of length $4$,
 and the shortest safe path is of length $10>8=\alpha\cdot 4$.
 For (b), the shortest path is of length $2$,
 and the shortest safe path is of length $8>6=\alpha\cdot 2$.
 For (c), the shortest path is of length $2\nvc+2+x$,
 where~$x$ is the length of the shortest path through the binary tree,
 and the shortest safe path is of length $6\nvc+2+x>4\nvc+4+2x=\alpha\cdot(2\nvc+2+x)$,
 using $x\leq 2\log(\nvc)$ and our assumption~$2\log(\nvc)+4<\nvc$.
 Consequently,
 for any solution~$F$,
 every feasible safe $s$-$t$ path in $\impro[G']{F}$
 must contain an edge of~$F$.

 Define the following convention for a subset~$F$ of unsafe edges and a set~$W\subseteq V$:
 the unsafe edge incident with~$v^*$ is in~$F$
 if and only if $v\in W$.
 Since in each of the three constructions (a), (b), and (c),
 each unsafe edge has cost one,
 the convention implies that $F$ is of cost at most~$k=r$
 if and only if~$W$ is of size at most~$r$.
 Now,
 for every terminal pair~$\{v^*,w^*\}\in\TPset$,
 let~$\{v,w\}\in E$ be the corresponding edge in the input graph.
 The following equivalences hold.
 There is a safe $v^*$-$w^*$ path of detour factor at most~$\alpha$ in $\impro[G']{F}$
 if and only if at least one of the two unsafe edges incident with~$v^*$ or $w^*$ is in $F$.
 By our convention,
 we have that
 at least one of the two unsafe edges incident with~$v^*$ or $w^*$ is in $F$
 if and only if at least one of~$v,w$ is in~$W$.
 Finally,
 at least one of~$v,w$ is in~$W$
 if and only if the edge~$\{v,w\}\in E$ is covered by~$W$.
 Thus,
 all terminal pairs admit a feasible safe path in~$\impro[G']{F}$
 if and only if~$W$ is a vertex cover of~$G$.
 Together with the cost correspondence,
 this proves that~$I$ is a \yes-instance
 if and only if~$I'$ is a \yes-instance.

 Finally,
 note that in construction~(b),
 $G'$ contains~$|V|+2$ vertices and~$2|V|+1$ edges.
 Hence,
 the \ETH-based lower bound follows.
\end{proof}

\begin{remark}
 The graph from \cref{constr:planarfvs1}(b) also has neighborhood-diversity two:
 $C_u,C_s$ form one type, and~$V'\setminus \{C_u,C_s\}$ forms the other type.
 In fact,
 adding sufficiently long safe edges between any two non-adjacent vertices turns the graph into clique while preserving correctness,
 which results in neighborhood diversity one
 (but unbounded vertex cover number).
\end{remark}

\begin{remark}
 None of the reductions has unit lengths.
 Yet,
 (a) and (c) can be turned via subdivisions into unit length;
 this is false for~(b) as it would increase the vertex cover number.
\end{remark}

In the remainder,
we show that decreasing any of the bounds $\vcn=2$,
$\fvs=1$,
$\tw=2$,
or $\Delta=3$ by one
leads to graph classes on which \sbnsdAcr{} is polynomial-time solvable.

\begin{proposition}
 \label{prop:P:trees}
 For every instance of \sbnsdAcr{} where the input graph is a tree,
 we can compute a minimum-cost solution in polynomial time.
 Thus,
 \sbnsdAcr{} is polynomial-time solvable on trees.
\end{proposition}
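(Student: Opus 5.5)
In a tree, every pair of distinct vertices is joined by exactly one path. For each terminal pair $\{s_i,t_i\}$, let $P_i$ be the unique $s_i$-$t_i$ path in $G$. Then $\dist_G(s_i,t_i)=\len(P_i)$, and $P_i$ is also the only candidate safe path in every graph $\impro{F}$. Its detour factor is therefore always exactly $1\le\alpha$, so the detour constraint imposes nothing beyond connectivity. The only requirement is that $P_i$ becomes safe, that is, $\EG(P_i)\cap\Eu\subseteq F$.

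The plan is as follows. For each $i\in\set{\ntp}$, compute $P_i$ in polynomial time, for instance by BFS or DFS from $s_i$ with parent pointers. Then set
\[
F^*\ceq\Eu\cap\bigcup_{i=1}^{\ntp}\EG(P_i).
\]
Every feasible solution $F$ must contain $F^*$: if an unsafe edge $e\in\EG(P_i)$ were not in $F$, then $P_i$ would be unsafe in $\impro{F}$, and there would be no other $s_i$-$t_i$ path. Conversely, $F^*$ itself is feasible up to budget, because every $P_i$ is safe in $\impro{F^*}$ and has detour factor $1$. Costs are nonnegative, so $F^*$ is the unique inclusion-minimal feasible set and hence a minimum-cost solution. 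The decision question reduces to checking whether $\cst(F^*)\le k$.

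The running time is $O(\ntp\cdot n)$ for computing the paths plus linear time for summing costs, which is polynomial in the input size regardless of how $\trt$ and $\cst$ are encoded. There is no real obstacle here. The only point that needs care is the observation that in a tree a path is determined by its endpoints, so the detour bound is vacuous. If the input were a forest, a pair in different components would make the instance infeasible, and this is detected when computing $P_i$ fails.
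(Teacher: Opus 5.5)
Your proof is correct and is essentially the paper's argument. The paper obtains the same set $F^*$ by exhaustively applying the Irrelevant-pair and Mandatory-edge reduction rules, with the budget temporarily raised to the total cost of all unsafe edges, using that on a tree every unsafe edge on a terminal pair's unique path is mandatory; you compute this union of unsafe path edges directly instead of through the reduction rules.
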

The proof of~\cref{prop:P:trees}
uses the following two reduction~rules.

\begin{rrule}[Irrelevant pair]
 \label{rr:removesatisfied}
 If a terminal pair is connected by a safe path of detour factor at most~$\alpha$,
 delete the terminal pair.
\end{rrule}

\begin{rrule}[Mandatory edge]
 \label{rr:mandatoryedge}
 Let terminal pair~$\{s_r,t_r\}$ be not connected by a safe path of detour factor at most~$\alpha$.
 If an unsafe edge~$e$ lies on all~$s_r$-$t_r$ paths of detour factor at most~$\alpha$,
 then make~$e$ safe.
 If $c(e)$ exceeds the budget,
 then return a trivial \no-instance,
 otherwise decrease the budget by $c(e)$.
\end{rrule}

The correctness of \cref{rr:removesatisfied} is immediate.
The correctness of \cref{rr:mandatoryedge} follows since every feasible solution must upgrade~$e$:
otherwise no safe $s_r$-$t_r$ path of detour factor at most~$\alpha$ can exist.
Thus,
if $\cst(e)>k$, then the instance is a \no-instance;
otherwise, upgrading~$e$ can be fixed in advance and its cost subtracted from the budget.

\begin{proof}[Proof of \cref{prop:P:trees}]
 Let~$I$ be an instance of \sbnsdAcr{} with~$G$ being a tree,
 budget~$k$,
 and let~$\kappa$ be the sum of costs of all unsafe edges.
 Let~$I'$ be the instance obtained from~$I$ by setting the budget to~$\kappa$.
 Apply \cref{rr:removesatisfied,rr:mandatoryedge} exhaustively to~$I'$.
 Since each terminal pair has a unique path,
 each unsafe edge on this path is mandatory and is made safe by \cref{rr:mandatoryedge}.
 Hence,
 every edge made safe by \cref{rr:mandatoryedge} is contained in every feasible solution.
 Consequently,
 once all unsafe edges on the unique path of a terminal pair have been made safe,
 this terminal pair is deleted by \cref{rr:removesatisfied}.
 It follows that no terminal pair is left.
 Let~$F^*$ denote the set of all unsafe edges made safe by \cref{rr:mandatoryedge} during the exhaustive application,
 and let~$\kappa'$ denote the remaining budget in the final instance.
 Note that~$\cst(F^*)=\kappa-\kappa'$.
 Since every edge in~$F^*$ is mandatory and upgrading~$F^*$ satisfies all terminal pairs,
 $\cst(F^*)$ is the minimum cost of any solution.
 Return~\yes{} if~$\cst(F^*)\leq k$,
 and~\no{} otherwise.
\end{proof}

Using~\cref{prop:P:trees},
we also handle graphs of maximum degree at most two:
path components are immediate,
and for each cycle component we either upgrade all unsafe edges or delete one unsafe edge that remains unsafe,
reducing the component to a path.

\begin{observation}
 \label{thm:P:degtwo}
 \sbnsdAcr{} is polynomial-time solvable on graphs of degree at most two.
\end{observation}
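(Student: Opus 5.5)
A graph of maximum degree at most two decomposes into connected components, each of which is an isolated vertex, a path, or a cycle. I will solve each component independently and then combine the results. Terminal pairs whose endpoints lie in different components make the instance trivially a \no-instance, since no path connects them at all; I assume they are absent. Since costs add up over disjoint edge sets, it suffices to compute, for every component~$H$ with its induced terminal pairs, a minimum-cost set $F_H$ of unsafe edges of~$H$ that satisfies all pairs inside~$H$. The answer is \yes{} if and only if $\sum_H \cst(F_H)\le k$. For path components (and isolated vertices) this minimum cost is given directly by \cref{prop:P:trees}, whose proof computes a minimum-cost solution rather than only deciding the budget version.

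For a cycle component~$H$, the key structural fact is that any minimum-cost solution either upgrades all unsafe edges of~$H$ or leaves at least one unsafe edge $e$ of~$H$ unsafe. I handle both cases.
\begin{enumerate}
\item \emph{All unsafe edges upgraded.} Here $H$ becomes entirely safe, and I check feasibility directly. For each pair $\{s,t\}$ in~$H$, the two $s$-$t$ paths around the cycle are safe, and the shorter one has length $\dist_H(s,t)$. Hence the feasibility check succeeds, with cost $\cst(\Eu\cap \EG(H))$.
\item \emph{Some unsafe edge $e$ left unsafe.} I guess this edge $e$, which gives $|\EG(H)|$ choices. Every safe path then avoids~$e$, so every terminal pair must use the unique path in the path graph $H-e$.
\end{enumerate}

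To use \cref{prop:P:trees} in the second case, the detour bound must refer to distances in the original cycle $H$, not in $H-e$. So I apply the tree algorithm to $H-e$ with the original distances $\dist_H(s,t)$ kept as reference. The procedure in the proof of \cref{prop:P:trees} only makes mandatory the unsafe edges on each pair's unique path, so it is unaffected by which reference distance is used. What changes is the feasibility test: if for some pair the unique $s$-$t$ path in $H-e$ has length greater than $\alpha\cdot\dist_H(s,t)$, then this guess of~$e$ is infeasible. Otherwise the guess costs the total cost of unsafe edges on the union of the pairs' paths in $H-e$.

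The minimum of $F_H$ over the first case and all guesses of~$e$ is the optimum for~$H$. This holds because any optimal solution falls into one of the two cases: if it leaves some edge unsafe, then for that choice of~$e$ it is a feasible solution on $H-e$ and contains all the mandatory edges. There are at most $|\EG(H)|+1$ subcases, each taking polynomial time, so the whole algorithm runs in polynomial time.

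The main subtlety is exactly this adjustment of the reference distance: \cref{prop:P:trees} must not be applied naively to $H-e$ as a fresh instance, since its distances would then be those of $H-e$ and the detour check would be wrong.
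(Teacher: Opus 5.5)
Your proposal is correct and follows essentially the same route as the paper: you reject pairs that span different components, solve path components via \cref{prop:P:trees}, and, for each cycle, compare upgrading all of its unsafe edges against, for each unsafe edge $e$ left unsafe, the tree solution on $C-e$. The paper handles the reference-distance subtlety you emphasize by first checking $\dist_{C-e}(s,t)\le\alpha\cdot\dist_G(s,t)$ for every pair; after that check, applying \cref{prop:P:trees} to $C-e$ as a fresh instance is harmless, because every path in a tree has detour factor one.
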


\begin{proof}[Proof of \cref{thm:P:degtwo}]
 Let~$I$ be an instance of \sbnsdAcr{} with detour factor~$\alpha$
 where~$G$ has maximum degree at most two.
 Then,
 graph~$G$ decomposes into paths and cycles.
 If a terminal pair is not contained in the same connected component,
 then return \no.
 Otherwise,
 solve each connected component independently.
 If a connected component is a path,
 compute a minimum-cost solution using \cref{prop:P:trees}.
 Now
 consider a connected component~$C$ that is a cycle.
 First,
 consider the solution that upgrades all unsafe edges of~$C$.
 This is feasible for all terminal pairs contained in~$C$.
 We compare this solution with possible solutions where at least one unsafe edge
 is not upgraded.
 Thus,
 for each unsafe edge~$e$ of~$C$,
 consider the case where~$e$ remains unsafe
 and hence no feasible safe path can use~$e$.
 Then,
 every terminal pair~$\{s,t\}$ contained in~$C$
 must be connected along the unique $s$-$t$ path in~$C-\{e\}$.
 This is feasible only if
 $\dist_{C-\{e\}}(s,t)\leq \alpha\cdot \dist_G(s,t)$
 for every terminal pair~$\{s,t\}$ contained in~$C$.
 If this condition holds,
 we delete~$e$ and compute a minimum-cost solution on the path~$C-\{e\}$
 using \cref{prop:P:trees}.
 Among the solution upgrading all unsafe edges
 and all solutions obtained by deleting one unsafe edge,
 we keep one of minimum cost.
 This enumeration is exhaustive since every solution either upgrades all unsafe edges of~$C$
 or leaves at least one unsafe edge~$e$ not upgraded.
 Finally, the optimal solution for~$I$ is composed of the optimal solutions for each of its connected components.
 If the total cost exceeds the budget,
 return~\no,
 otherwise return~\yes.
\end{proof}

\section{Number of Unsafe Edges}

\begin{observation}
 \label{obs:nue:bruteforce}
 \sbnsdAcr{} can be solved in $2^{\nue}\cdot \poly(|I|)$ time and hence is in \FPT{}
 \wpb{}~the number~$\nue$ of unsafe edges.
\end{observation}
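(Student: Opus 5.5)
The plan is a brute-force enumeration over all subsets $F\subseteq\Eu$, of which there are exactly $2^{\nue}$, combined with a polynomial-time feasibility check for each candidate.

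For a fixed $F$, we first check the budget by computing $\cst(F)$ in linear time and discard $F$ if $\cst(F)>k$. Next, for each terminal pair $\{s_i,t_i\}$ we decide whether $\impro{F}$ contains a safe $s_i$-$t_i$ path of detour factor at most $\alpha$.

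The key observation for this check is the following. A safe path in $\impro{F}$ is exactly a path in the subgraph $H_F=(V,\Es\cup F)$. Moreover, the reference distance $\dist_G(s_i,t_i)$ is taken in the original graph $G$, which is independent of $F$. Hence the condition for pair $i$ is simply
\[
\dist_{H_F}(s_i,t_i)\leq\alpha\cdot\dist_G(s_i,t_i).
\]
If some safe path satisfies the bound, then so does a shortest one. Conversely, a shortest walk in $H_F$ can be taken to be a simple path, because all lengths $\trt$ are positive natural numbers. So checking the displayed inequality is both necessary and sufficient.

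Since $\trt$ is nonnegative, Dijkstra's algorithm computes all the values $\dist_G(s_i,t_i)$ once, and all the values $\dist_{H_F}(s_i,t_i)$ for each $F$, in polynomial time. Comparing with the rational $\alpha$ is also polynomial in the encoding length. We answer \yes{} if and only if some $F$ passes the budget check and satisfies the inequality for all $\ntp$ pairs. Correctness is immediate, because every solution is some subset of $\Eu$.

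The total running time is $2^{\nue}\cdot\poly(|I|)$, which gives membership in \FPT{}.

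There is no real obstacle here. The only point needing care is the reduction of the path-existence question to a shortest-path computation in $H_F$, which relies on the reference distance being fixed in $G$ and on the lengths being positive.
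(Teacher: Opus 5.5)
Your proposal is correct and matches the paper's argument: the paper only remarks that the observation amounts to enumerating all $2^{\nue}$ candidate sets $F\subseteq\Eu$ and checking each for feasibility in polynomial time. Your comparison of shortest-path distances in $(V,\Es\cup F)$ against $\alpha\cdot\dist_G(s_i,t_i)$, together with the budget check, simply makes explicit the polynomial-time feasibility test that the paper leaves implicit.
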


\cref{obs:nue:bruteforce} corresponds to testing all possible solutions
in polynomial-time for feasibility.
We show next that,
assuming the \SETH{} to hold,
we cannot do better than this.

\begin{theorem}
 \label{thm:Wh:budget}
 \sbnsdAcr{}, even
 if all unsafe edges induce one component ($\nuc=1$)
 and the input graph has diameter at most four,
 admits
 no $2^{\eps\cdot \nue}\cdot \poly(|I|)$-time algorithm for any $\eps<1$ \unlessSETH{}
 and no problem kernel of size polynomial in~$\nue$ \unlessPK{}.
 Moreover,
 it is \W{2}-hard \wpb{} the budget~$k$.
\end{theorem}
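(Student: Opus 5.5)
A single parameterized reduction from \prob{Set Cover} (equivalently \prob{Hitting Set}) gives all three claims. Let the instance have universe $U=\{u_1,\dots,u_n\}$, family $\mathcal{S}=\{S_1,\dots,S_m\}$, and budget $r$.

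Under SETH, \prob{Set Cover} admits no $2^{\eps m}\cdot\poly(n+m)$-time algorithm for any $\eps<1$ when $n$ is polynomial in $m$, and no polynomial kernel in $m$ unless \NPincoNPslashpoly{} (Cygan et al.; Dom, Lokshtanov, Saurabh). Parameterized by the solution size $r$, it is \W{2}-hard. It therefore suffices to build, in polynomial time, an equivalent \sbnsdAcr{} instance satisfying three conditions:
\begin{enumerate}
 \item $\nue=m$, so that the SETH bound and the polynomial-parameter transformation for the kernel lower bound both carry over;
 \item $k=r$, which transfers \W{2}-hardness;
 \item the unsafe edges form a single component, and the graph has diameter at most four.
\end{enumerate}

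The construction follows the pattern of \cref{constr:planarfvs1}(b), with a star of unsafe edges. Add a hub $C_u$ and, for each set $S_j$, a vertex $x_j$ with an unsafe edge $\{C_u,x_j\}$ of length $1$ and cost $1$. These $m$ edges are the only unsafe ones, and they form one star, so $\nuc=1$.

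For each element $u_i$, add a terminal vertex $u_i^*$. Connect $u_i^*$ to $x_j$ by a safe edge of length $1$ whenever $u_i\in S_j$. Add one further terminal vertex $z$ joined to $C_u$ by a safe edge of length $1$. The terminal pairs are $\{u_i^*,z\}$ for $i\in[n]$, and the budget is $k=r$.

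Two safeguards complete the construction:
\begin{itemize}
 \item To make every element reachable from $z$ at all, and to bound the diameter, add a second hub $C_s$. Join $C_s$ by safe edges of some large length $L$ to every $u_i^*$, to $z$, and to $C_u$, and by safe edges of length $1$ to every $x_j$.
 \item Choose $\alpha$ so that only length-$3$ paths qualify. Then $\dist(u_i^*,z)=3$ via $u_i^*$--$x_j$--$C_u$--$z$, and I take $\alpha=1$.
\end{itemize}
Every vertex is within two hops of $C_s$, so the diameter is at most four.

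For correctness, I must show that every $u_i^*$-$z$ path of length exactly $3$ has the form $u_i^*$--$x_j$--$C_u$--$z$ with $u_i\in S_j$. This holds because paths through $C_s$ use an edge of length $L\ge 2$, and the only length-$1$ edges at $z$ and at $u_i^*$ go to $C_u$ and to the $x_j$, respectively. Consequently, a safe shortest path exists in $\impro{F}$ if and only if $F$ contains $\{C_u,x_j\}$ for some $S_j\ni u_i$. Hence upgrade sets of cost at most $k$ correspond exactly to set covers of size at most $r$.

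I expect the main obstacle to be in the diameter-padding edges: they must not open any competing short path, and they must not invalidate the requirement that a path consists of distinct vertices with detour factor at most $\alpha$. I would verify this with a careful case analysis on the first and last edges of any short path. If $\alpha=1$ proves too fragile, I would instead pick $L$ large and $\alpha$ slightly above $1$ while keeping the same argument. I would also double-check the precise SETH-hardness source for \prob{Set Cover} parameterized by the number of sets, so that the $2^{\eps\nue}$ bound is inherited without loss in the exponent.
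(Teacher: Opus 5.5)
Your construction is the paper's reduction (\cref{constr:Wh:budget}) in the dual Set Cover language. Sets become the unsafe star edges at a hub, and elements become terminals whose length-$3$ shortest path must use one upgraded star edge, with unit lengths and costs, $\alpha=1$ and $k=r$; your SETH and kernel sources are the paper's Hitting Set results under the same duality, so the proposal is correct and essentially the paper's approach. The hub $C_s$ is harmless but unnecessary, since without it the diameter is already at most four via $C_u$ once you state that every element lies in some set (you need this assumption anyway). Excluding paths through $C_s$ is cleaner by counting edges: a length-$3$ path has at most three unit edges, whereas $u_i^*$--$x_j$--$C_s$--$x_{j'}$--$C_u$--$z$ uses no $L$-edge but has five edges.
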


We reduce from the \prob{Hitting Set} problem,
where we are given a set~$X$ of elements,
a set~$\calS$ of nonempty subsets of~$X$,
and an integer~$r\in \N$,
and ask whether there is a subset~$Y\subseteq X$ with~$|Y|\leq r$ such that~$S\cap Y\neq \emptyset$ for every~$S\in\calS$.
\prob{Hitting Set} is \W{2}-hard \wpb{} the solution size~$r$ \cite{DowneyFellows2013} and admits no problem kernel of size polynomial in~$|X|$ \unlessPK{}~\cite{DomLS14}.
Moreover,
\unlessSETH,
\prob{Hitting Set} admits no $2^{\eps\cdot |X|}\cdot \poly(|I|)$-time algorithm for any $\eps<1$~\cite{CyganDLMNOPSW16}.

\begin{construction}
 \label{constr:Wh:budget}
 Let~$I=(X,\calS,r)$ be an instance of \prob{Hitting Set}.
 Construct the graph~$G$ with vertex set~$V\ceq\{v_x\mid x\in X\}\cup\{w_S\mid S\in\calS\}\cup \{v^*\}$,
 safe-edge set~$\Es\ceq\{\{v_x,w_S\}\mid x\in X, S\in\calS: x\in S\}$,
 and unsafe-edge set~$\Eu\ceq\{\{v_x,v^*\}\mid x\in X\}$.
 The length of each edge is one and the cost of each unsafe edge is one.
 Let~$\TPset\ceq \{\{v^*,w_S\}\mid S\in\calS\}$ be the set of terminal pairs.
 Let~$\alpha\ceq 1$
 and~$k\ceq r$.
 \cqed
\end{construction}

Since every unsafe edge is incident with~$v^*$,
the subgraph induced by $\Eu$ is connected
($\nuc=1$).
For every $S\in \calS$,
since $S$ is nonempty,
$w_S$ and~$v^*$ have a common neighbor~$v_x$ with~$x\in S$.
Thus,
every two vertices are connected by a path of length at most four,
and so~$G$ has diameter at most four.

\begin{proof}[Proof of \cref{thm:Wh:budget}]
 Let~$I'=(G=(V,E,\cst,\trt),\TPset,k,\alpha)$ be the instance of \sbnsdAcr{}
 obtained in polynomial time from an instance $I=(X,\calS,r)$ of \prob{Hitting Set}
 via \cref{constr:Wh:budget}.
 For every~$S\in\calS$,
 we have~$\dist_G(w_S,v^*)=2$.
 Moreover,
 $w_S$ is adjacent to~$v_x$ if and only if~$x\in S$.
 Since~$\alpha=1$,
 for every~$F\subseteq \Eu$,
 any safe $w_S$-$v^*$ path of detour factor~$1$ in~$\impro[G]{F}$
 must have length exactly~$2$.
 Hence,
 it contains exactly one inner vertex~$v_x$ with~$x\in S$,
 and the edge~$\{v_x,v^*\}$ must be made safe by~$F$.
 We make the following convention for a set~$Y\subseteq X$ and a set~$F\subseteq \Eu$:
 $x\in Y$ if and only if~$\{v^*,v_x\}\in F$.
 Since every unsafe edge has cost one,
 the convention implies that~$|Y|\leq r$ if and only if~$\cst(F)\leq k$.
 The following equivalences hold.
 Set~$S\in\calS$ is hit by~$Y$ if and only if
 there exists $x\in S\cap Y$.
 By our convention,
 for every~$x\in S$,
 $x\in Y$
 if and only if~$\{v^*,v_x\}\in F$.
 Moreover,
 $\{v^*,v_x\}\in F$ for some~$x\in S$ if and only if
 there is a safe $w_S$-$v^*$ path of detour factor 1
 in~$\impro[G]{F}$.
 It follows that every set~$S\in\calS$ is hit by $Y$
 if and only if every terminal pair admits a safe path of detour factor 1
 in~$\impro[G]{F}$.
 Together with the cost correspondence,
 this proves that~$I$ is a \yes-instance
 if and only if~$I'$ is a \yes-instance.

 Finally,
 note that~$\nue=|X|$.
 Hence,
 both the \SETH-based lower bound and the kernelization lower bound follow.
 \W{2}-hardness transfers since~$k=r$.
\end{proof}

\section{Number of Terminal Pairs}

\begin{theorem}
 \label{thm:wNph}
 \sbnsdAcr{} is weakly \NP-hard even if the input graph is planar,
 of maximum degree at most four,
 and of
 treewidth two,
 all unsafe edges are pairwise disjoint ($\muc=1$),
 and there is only one terminal~pair.
\end{theorem}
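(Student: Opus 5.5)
We reduce from \prob{Subset Sum} (or a knapsack-type variant), which is weakly \NP-hard: given positive integers $a_1,\dots,a_n$ and a target $B$, decide whether some subset sums to exactly $B$. Because one terminal pair must encode both a cost bound and a length bound, we use the two-criteria nature of the problem. The cost budget enforces ``selected sum at most $B$'', and the detour bound enforces ``selected sum at least $B$''.

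The construction is a chain of $n$ gadgets between $s$ and $t$, with $s=u_0,u_1,\dots,u_n=t$. Gadget $i$ joins $u_{i-1}$ to $u_i$ by two parallel routes. The first is the \emph{safe route}: a path of safe edges of total length $L+a_i$, where $L$ is a large constant. The second is the \emph{cheap route}: a path $u_{i-1},x_i,y_i,u_i$ in which the middle edge $\{x_i,y_i\}$ is unsafe with length $L$ and cost $a_i$, and the two outer edges are safe with length $0$. If the paper insists on lengths in $\N$, we give every route the same additive offset instead. Upgrading edge $i$ therefore costs $a_i$ and saves $a_i$ in length.

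The shortest $s$-$t$ path has length $nL$. Set $k=\sum_i a_i-B$. Choose $\alpha$ so that $\alpha\cdot nL = nL+B$, that is, $\alpha=1+B/(nL)$, which is rational. Let $F$ be the set of upgraded edges and let $U=\{i : \{x_i,y_i\}\in F\}$. The only safe $s$-$t$ path uses the cheap route exactly at the gadgets in $U$, so its length is $nL+\sum_{i\notin U}a_i$. Feasibility then reads $\sum_{i\notin U}a_i\le B$ for the detour bound and $\sum_{i\in U}a_i\le \sum_i a_i-B$ for the budget, i.e., $\sum_{i\notin U}a_i\ge B$. Together these say that the complement of $U$ sums exactly to $B$, which gives both directions of the equivalence immediately.

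Next we verify the structural properties. The graph is a series composition of cycles, so it is planar with treewidth two. The degree of each $u_i$ is at most four, coming from two routes on each side. The unsafe edges $\{x_i,y_i\}$ are pairwise vertex-disjoint, which gives $\muc=1$. There is exactly one terminal pair. Since the numbers $a_i$ appear in binary in the lengths and costs, the hardness is only weak.

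The main obstacle is bookkeeping rather than ideas. Lengths must lie in $\N$ (nonzero), and we must make sure no unintended shorter or mixed path exists, so that the unique safe path is exactly as described. To handle this, we give the connector edges length $1$ on both routes and make the safe route's length $L+2+a_i$, so the offsets match. The chain structure forces every $s$-$t$ path to pass through all $u_i$, which rules out mixed paths. We also need the shortest path to use only cheap routes, which holds automatically since $a_i>0$.
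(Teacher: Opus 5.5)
Your reduction is correct and is essentially the paper's. The paper likewise chains two-route gadgets between consecutive cut vertices (reducing from \prob{Partition} instead of \prob{Subset Sum}): the short route contains an unsafe edge of cost $x_i$, the safe route is longer by an amount proportional to $x_i$, the budget caps the upgraded sum, and the detour bound caps the non-upgraded sum. The one difference is that the paper scales lengths (short route $2T$, safe excess $2Nx_i$) so that the constant $\alpha=2$ suffices, whereas your $\alpha$ depends on the instance and, after adding the length-$1$ connectors, must be restated as $1+B/(n(L+2))$; this is harmless for the theorem as stated.
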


We reduce from the weakly \NP-hard \prob{Partition} problem,
where,
given a
multiset $X=\{x_1,\dots,x_N\}$ of at least two positive integers with sum~$\sum_{x_i\in X} x_i = 2T$,
the question is whether there is a partition~$(X_1,X_2)$ of~$X$ such that
$\sum_{x_i\in X_1} x_i = \sum_{x_i\in X_2} x_i = T$.

\begin{construction}
 \label{constr:wNph}
 Let~$I=(X=\{x_1,\dots,x_N\})$ be an instance of \prob{Partition} with sum~$2T$.
 Construct the vertex set~$V=\{v_i\mid i\in\set[0]{N}\}\cup \{a_i,b_i\mid i\in\set{N}\}$ and the edge set as follows:
 For each~$i\in\set{N}$,
 add the safe edges~$\{v_{i-1},a_i\}$ of length~$T$ and~$\{v_{i-1},b_i\}$ of length~$2T$,
 the safe edge~$\{b_i,v_i\}$ of length~$2N x_i$,
 and the unsafe edge~$\{a_i,v_i\}$ of length~$T$ and cost~$x_i$.
 The set~$\TPset$ of terminal pairs consists only of one terminal pair~$\{v_0,v_N\}$.
 Let~$k\ceq T$
 and~$\alpha\ceq 2$.
 \cqed
\end{construction}

The graph is outerplanar:
an outerplanar embedding is obtained by placing~$v_0$ at~$(0,0)$
and, for each~$i\in\set{N}$,
placing~$v_i$ at~$(2i,0)$,
$a_i$ at~$(2i-1,1)$,
and~$b_i$ at~$(2i-1,-1)$.
Since the graph is outerplanar and contains a cycle,
it has treewidth two.
It has maximum degree four:
as~$N\geq 2$,
each vertex~$v_i$ with~$i\in\set{N-1}$ has degree four,
whereas each vertex~$a_i$ and~$b_i$ with~$i\in\set{N}$,
as well as~$v_0$ and~$v_N$,
has degree two.
Finally,
the set~$\{\{a_i,v_i\}\mid i\in\set{N}\}$ of unsafe edges
consists of pairwise vertex-disjoint edges,
and hence~$\muc=1$.

\begin{proof}[Proof of \cref{thm:wNph}]
 Let~$I'=(G=(V,E,\cst,\trt),\TPset,k,\alpha)$ be the instance of \sbnsdAcr{}
 obtained in polynomial time from an instance $I=(X=\{x_1,\dots,x_N\})$ of \prob{Partition} with sum~$2T$
 via \cref{constr:wNph}.
 The shortest $v_0$-$v_N$ path is of length~$2N T$
 by using the unsafe branch when traversing from~$v_{i-1}$ to~$v_i$ via $a_i$ for each~$i\in\set{N}$.
 Hence,
 since~$\alpha=2$,
 every feasible path in a solution must be of length at most~$4NT$.
 We show that~$I$ is a \yes-instance if and only if~$I'$ is a \yes-instance.

 For the forward direction,
 let~$(X_1,X_2)$ be a solution to~$I$.
 We claim that~$F\ceq \{\{a_i,v_i\}\mid x_i\in X_1\}$ is a solution to~$I'$.
 We have that~$\cst(F)=\sum_{x_i\in X_1} \cst(\{a_i,v_i\}) = \sum_{x_i\in X_1} x_i = T\leq k$.
 Moreover,
 consider the shortest safe $v_0$-$v_N$ path in~$\impro[G]{F}$.
 For each~$x_i\in X_1$,
 it uses the branch traversing from~$v_{i-1}$ to~$v_i$ via $a_i$,
 since the unsafe edge~$\{a_i,v_i\}$ is upgraded by~$F$;
 this branch has length~$2T$.
 For each~$x_i\in X_2$,
 it uses the branch traversing from~$v_{i-1}$ to~$v_i$ via $b_i$,
 which has length~$2T+2N x_i$.
 Altogether,
 the path has length~$2NT + 2N\sum_{x_i\in X_2} x_i  = 4NT$.
 Thus,
 $F$ is a solution to~$I'$.

 For the backward direction,
 let~$F$ be a solution to~$I'$.
 We claim that~$(X_1,X_2)$,
 where
 $X_1\ceq\{x_i\mid \{a_i,v_i\}\in F\}$ and~$X_2\ceq X\setminus X_1$,
 is a solution to~$I$.
 Since $F$ is a solution,
 the shortest safe $v_0$-$v_N$ path in~$\impro[G]{F}$ is of length~$2NT + 2N\sum_{x_i\in X_2} x_i\leq 4NT$.
 It follows that $\sum_{x_i\in X_2} x_i\leq T$.
 Moreover,
 we know that~$\sum_{x_i\in X_1} x_i = \sum_{\{a_i,v_i\}\in F} \cst(\{a_i,v_i\}) = \cst(F)\leq k= T$.
 Finally,
 since~$\sum_{x_i\in X_2} x_i + \sum_{x_i\in X_1} x_i = 2T$,
 $(X_1,X_2)$ is a solution to~$I$.
\end{proof}

\begin{remark}
  Note that $\fvs$,
  and hence $\fes$,
  is unbounded in the constructed graph.
  Yet,
  the graph has cutwidth two,
  witnessed by an ordering such as $(\dots,v_{i-1},a_i,b_i,v_i,\dots)$,
  with exactly two edges between any two consecutive vertices.
\end{remark}

Note that in the previous reduction there is only \emph{one} terminal pair.
For unary encodings,
the reduction no longer applies;
however,
we obtain \W{1}-hardness in this case.

\begin{theorem}
 \label{thm:Wh:tps}
 \sbnsdAcr{} is \W{1}-hard \wpb{} the number~$\ntp$ of terminals combined with the budget~$k$, even if $\nuc=1$,
 all edges are unsafe,
 and all shortest terminal paths are of length three.
\end{theorem}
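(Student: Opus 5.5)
We reduce in polynomial time from \prob{Multicolored Clique}, which is \W{1}-hard \wpb{} the clique size~$h$. An instance consists of a graph~$H$ whose vertex set is partitioned into color classes~$V_1,\dots,V_h$, and the question is whether $H$ has a clique containing exactly one vertex from each class. The plan is to encode a vertex choice per color and an edge choice per color pair by upgrading edges. We then use $O(h^2)$ terminal pairs and budget $k=O(h^2)$, and both $\ntp$ and $k$ must depend on~$h$ only. Since all edges must be unsafe, every terminal pair has to be served by upgraded edges. So the budget must be spent exactly on one ``selection path'' per pair, and we force these paths to share edges.

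\textbf{Construction.} Use unit lengths, unit costs, and $\alpha\ceq 1$, so each feasible path is a shortest path of length three. Introduce a hub vertex~$z$. For every color~$i$, add a terminal~$s_i$. For every vertex~$v\in V_i$, add a vertex~$x_v$ with the edge $\{s_i,x_v\}$. For every edge $e=\{u,v\}$ of~$H$ with $u\in V_i$, $v\in V_j$, add a vertex~$y_e$ and the edges $\{x_u,y_e\}$ and $\{x_v,y_e\}$. For each color pair $i<j$, add a terminal $t_{ij}$ adjacent to all $y_e$ with $e\in E(V_i,V_j)$. The terminal pairs are $\{s_i,t_{ij}\}$ and $\{s_j,t_{ij}\}$ for all $i<j$, so $\ntp=h(h-1)$.

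\textbf{Distances and diameter.} In this graph, $\dist(s_i,t_{ij})=3$ via $s_i,x_u,y_e,t_{ij}$, so every pair has a shortest path of length three, as required. To make the unsafe edges one component ($\nuc=1$), the hub~$z$ is adjacent to every $x_v$. The edges to~$z$ must not create shortcuts: any path through $z$ from $s_i$ to $t_{ij}$ has length at least four, and $s_i,t_{ij}$ are not adjacent to~$z$. So they are harmless, and since all edges are unsafe, the whole graph is connected.

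\textbf{Budget and correctness.} Set $k\ceq h+2\binom{h}{2}+\binom{h}{2}$, that is, $h$ edges $s_i x_v$, two edges $x y_e$ per pair, and one edge $y_e t_{ij}$ per pair.

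In the forward direction, a clique gives exactly this many edges, and each terminal pair has an upgraded length-three path.

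In the backward direction, every pair $\{s_i,t_{ij}\}$ needs an upgraded path $s_i,x_u,y_e,t_{ij}$. The main obstacle is a counting argument showing that the budget leaves no slack. Each $t_{ij}$ needs at least one upgraded incident edge, which gives $\binom h2$ edges. Each $s_i$ needs at least one, which gives $h$ edges. Each pair $\{s_i,t_{ij}\}$ needs an edge $x_uy_e$ with $y_e$ adjacent to $t_{ij}$, and the edges serving distinct pairs $(i,\{i,j\})$ are distinct because $y_e$ determines $\{i,j\}$ and $x_u$ determines~$i$. That gives $2\binom h2$ edges. Since these edge classes are disjoint, equality holds throughout. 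Hence each $s_i$ has exactly one upgraded edge $s_ix_{v_i}$, each $t_{ij}$ has exactly one upgraded edge $y_{e_{ij}}t_{ij}$, and the paths force $e_{ij}=\{v_i,v_j\}$. Therefore $\{v_1,\dots,v_h\}$ is a clique.

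Since $\ntp+k=O(h^2)$, \W{1}-hardness transfers to the combined parameter.
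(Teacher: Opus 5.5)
Your reduction is correct, but it takes a different route from the paper's.

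\textbf{The paper's construction.} The paper keeps the Multicolored Clique graph itself as the middle layer: every edge becomes unsafe with unit cost, and one verifier $w_i$ is attached to all of $V_i$ by unsafe edges of cost $r^3$. It uses only the $\binom{r}{2}$ pairs $\{w_i,w_j\}$ and budget $r\cdot r^3+\binom{r}{2}$. Uniqueness of the chosen vertex per color comes from the heavy weight: once $r\cdot r^3$ is paid, the leftover $\binom{r}{2}<r^3$ cannot afford a second verifier edge. The middle edge of each length-three $w_i$-$w_j$ path is then forced to be $\{v_i^*,v_j^*\}$.

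\textbf{Your construction.} You subdivide each edge by $y_e$, add a pair-terminal $t_{ij}$, and use $h(h-1)$ pairs. Uniqueness follows from a tight count over three disjoint, individually forced edge classes ($s$-$x$, $x$-$y$, $y$-$t$) with all costs one. Your route shows hardness even with all costs and lengths equal to one, at the price of a larger graph and twice as many terminal pairs. The paper's gives the more compact instance. Incidentally, your counting idea would also let the paper drop the $r^3$ weights: its $r$ verifier edges and $\binom{r}{2}$ middle edges are likewise disjoint and each forced.

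\textbf{Minor points.} First, the hub $z$ is unnecessary. The graph is already connected, since each $s_i$ joins all $x_v$ with $v\in V_i$, and each nonempty $E(V_i,V_j)$ joins classes $i$ and $j$ through some $y_e$. Second, state explicitly the assumption $E(V_i,V_j)\neq\emptyset$ for all $i\neq j$, which the paper builds into its definition of \prob{Multicolored Clique}. Without it, $t_{ij}$ is isolated and $\dist(s_i,t_{ij})=3$ fails.
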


We reduced from the \prob{Multicolored Clique} problem,
which asks,
given a graph~$G=(V,E)$ where~$V=V_1\uplus\cdots\uplus V_r$ is partitioned into $r$ color classes
and at least one vertex from~$V_i$ is adjacent to a vertex from~$V_j$ for every distinct~$i,j\in\set{r}$,
whether there is a vertex set~$W$ such that~$W$ forms a clique in~$G$
and~$|W\cap V_i|=1$ for every~$i\in\set{r}$.
\prob{Multicolored Clique} is \W{1}-hard \wpb{}~$r$ \cite{FellowsHRV2009}.

\begin{construction}
 \label{constr:Wh:tps}
 Let~$I=(G=(V=V_1\uplus\cdots\uplus V_r,E))$ be an instance of \prob{Multicolored Clique}.
 Let~$G'$ be initially a copy of~$G$.
 Make each edge in~$G'$ unsafe with cost 1 and length 1.
 For each color class add a verifier vertex~$w_i$ and make it adjacent by unsafe edges of cost $r^3$ and length 1 with all vertices from~$V_i$.
 Let~$\TPset\ceq \{\{w_i,w_j\}\mid 1\leq i<j\leq r\}$ be the set of terminal pairs.
 Let $\alpha\ceq 1$ and~$k\ceq r\cdot r^3 + \binom{r}{2}$.
 \cqed
\end{construction}

\begin{proof}[Proof of \cref{thm:Wh:tps}]
 Let~$I'=(G'=(V',E',\cst,\trt),\TPset,k,\alpha)$ be the instance of \sbnsdAcr{}
 obtained in polynomial time from an instance $I=(G=(V=V_1\uplus\cdots\uplus V_r,E))$ of \prob{Multicolored Clique}
 via \cref{constr:Wh:tps}.
 We show that~$I$ is a \yes-instance if and only if $I'$ is a \yes-instance.

 For the forward direction,
 let~$W=\{v_1^*,\dots,v_r^*\}$ be solution to~$I$ with~$v_i^*\in V_i$ for each~$i\in\set{r}$.
 We claim that~$F\ceq \{\{w_i,v_i^*\}\mid i\in\set{r}\}\cup\{\{v_i^*,v_j^*\}\mid i,j\in\set{r},\, i< j\}$ is a solution to~$I'$.
 We have that~$\cst(F)=\sum_{i\in\set{r}} \cst(\{w_i,v_i^*\}) + \sum_{i,j\in\set{r},\, i< j} \cst(\{v_i^*,v_j^*\}) = r\cdot r^3 + \binom{r}{2} = k$.
 Moreover,
 for each~$i,j\in\set{r}$,
 $i< j$,
 there is a safe~$w_i$-$w_j$ path in~$\impro[G']{F}$ of length three traversing from $w_i$ to~$v_i^*$ to $v_j^*$ to~$w_j$ of length~$\dist_{G'}(w_i,w_j)$.
 Hence,
 $F$ is a solution to~$I'$.

 For the backward direction,
 let~$F$ be a solution to~$I'$.
 By construction,
 for each color class~$i\in\set{r}$,
 $V_i$ separates~$w_i$ in~$G'$ from the rest of the graph.
 Since~$w_i$ appears in at least one terminal pair,
 at least one edge incident with~$w_i$
 must be upgraded by~$F$ at cost~$r^3$.
 As there are~$r$ color classes,
 at least~$r\cdot r^3$ budget is spent on these upgrades.
 Thus,
 the remaining budget is at most~$\binom{r}{2}<r^3$,
 and hence no further edge incident with any~$w_i$ can be upgraded.
 Hence,
 for every~$i\in\set{r}$,
 there is exactly one upgraded edge incident with~$w_i$ in~$F$.
 Let this edge be~$\{w_i,v_i^*\}$ with~$v_i^*\in V_i$.
 We claim that~$W\ceq \{v_1^*,\dots,v_r^*\}$ is a solution to~$I$.
 Consider distinct~$i,j\in\set{r}$.
 By the assumption on the input instance,
 at least one vertex from~$V_i$ neighbors a vertex from~$V_j$,
 and hence~$\dist_{G'}(w_i,w_j)=3$.
 Since~$\alpha=1$,
 every feasible safe $w_i$-$w_j$ path in $\impro[G']{F}$
 must have exactly three edges.
 Since~$\{w_i,v_i^*\}$ is the only upgraded edge incident with~$w_i$
 and~$\{w_j,v_j^*\}$ is the only upgraded edge incident with~$w_j$,
 the middle edge of such a path must be~$\{v_i^*,v_j^*\}$.
 Thus,
 $\{v_i^*,v_j^*\}\in E$.
 Since this holds for every pair of distinct color classes,
 it follows that~$W$ induces a clique.
 Hence,
 $W$ is a solution to~$I$.

 Since
 $|\TPset|=\binom{r}{2}$ and~$k=r\cdot r^3 + \binom{r}{2}$ depend only on~$r$,
 \W{1}-hardness for the combined parameter~$\ntp+k$ follows.
\end{proof}

\section{Feedback Edge Number}

\begin{theorem}
 \label{thm:FESTP:linker}
 Any instance of \sbnsdAcr{} can be mapped in polynomial time to an equivalent instance of \sbnsdAcr{} with~$O(\fes + \ntp)$
 vertices and edges.  %
\end{theorem}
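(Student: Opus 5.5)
We follow the standard feedback-edge-number kernelization pattern: remove degree-one vertices that are not terminals, then shorten long paths of degree-two vertices. The only new ingredient is making sure detour constraints survive these operations.

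\textbf{Step 1: component cleanup.} If some terminal pair lies in different connected components, we return a trivial \no-instance. We delete all components containing no terminal. Within each remaining component, the number of edges minus the number of vertices plus one is at most its feedback edge number, so the total counts are controlled globally.

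\textbf{Step 2: pruning leaves.} A non-terminal vertex of degree one lies on no path between two distinct other vertices, so we delete it, and we repeat exhaustively. Afterwards every leaf is a terminal, so there are at most $2\ntp$ leaves. A standard counting argument for a connected graph with feedback edge number $\fes$ and at most $2\ntp$ leaves then applies. It shows that the number of vertices of degree at least three is $O(\fes+\ntp)$. It also shows that the edge set decomposes into $O(\fes+\ntp)$ maximal paths whose inner vertices are non-terminal vertices of degree two.

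\textbf{Step 3: contracting maximal degree-two paths.} Let $P$ be a maximal path from $x$ to $y$ whose inner vertices all have degree two and are non-terminals. Every $s$-$t$ path (with $s,t$ distinct from the inner vertices) either traverses $P$ entirely or avoids all of its edges. So for each such path, only three quantities matter: the total length $L=\sum_{e\in P}\trt(e)$, whether $P$ is already entirely safe, and the cost $C$ of upgrading all unsafe edges on $P$.

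\begin{itemize}
\item If $P$ contains no unsafe edge, we replace it by a single safe edge $\{x,y\}$ of length $L$.
\item Otherwise, we replace it by a single unsafe edge of length $L$ and cost $C$. Upgrading this edge corresponds to upgrading all unsafe edges on $P$; a solution that upgrades only some of them never gains a usable safe traversal, so it can be shrunk without loss.
\end{itemize}

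All distances $\dist(s,t)$ between the remaining vertices are preserved, and the feasibility of safe paths under corresponding solutions is preserved. This gives equivalence.

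\textbf{Step 4: multigraph issues.} Contraction can create parallel edges, or a loop when $x=y$. A loop is simply deleted, since a path never uses it. Parallel edges between $x$ and $y$ need more care, and this is the main obstacle. We cannot just keep one of them, because a safe-but-long edge and an unsafe-but-short edge play different roles.

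\begin{itemize}
\item Among the safe parallel edges, only the shortest is relevant.
\item An unsafe parallel edge whose length is at least that of the shortest safe one is irrelevant. It never helps safety, and it does not define shortest distances because the safe edge is no longer.
\item Among the remaining unsafe edges, each is strictly shorter than any safe one. Two such edges are incomparable when a shorter edge has higher cost, so we cannot drop all but one in a single step. However, the number of parallel edges between $x$ and $y$ never exceeds the number of original $x$-$y$ paths. Each of those paths consumes at least one feedback edge when there are at least two, so the total number of parallel edges remains $O(\fes)$.
\end{itemize}

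To return to a simple graph, we subdivide each kept parallel edge once. For an unsafe edge, we split it into an unsafe half carrying the full cost and a safe half, with the lengths split arbitrarily. This multiplies the size by a constant only.

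\textbf{Step 5: counting.} After Step 4, the number of vertices and edges is $O(\fes+\ntp)$. All operations run in polynomial time. Since every operation preserves the distances between terminals and the set of achievable safe-path lengths at corresponding costs, the instances are equivalent.
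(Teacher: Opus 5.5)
Your proof is correct. It uses the same core ingredients as the paper: pruning non-terminal leaves; replacing each degree-two path by a gadget that records only its length, whether it is safe, and the total cost of its unsafe edges; and representing an incomparable parallel alternative by a two-edge path whose unsafe edge carries the whole cost. You organize these ingredients differently, though.

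The paper fixes a minimum feedback edge set $X$, declares every endpoint of an $X$-edge important, and compresses corridors inside the forest $G-X$. A forest contains at most one path between two vertices, so a corridor can only compete with a single feedback edge joining its endpoints. This gives the short contractible/compressible case analysis. It also gives an explicit count through the forest structure: $|V_1|+|V_2^+|\le 2\fes+2\ntp$, $|V_{\ge 3}|\le |V_1|$, and unimportant degree-two vertices map injectively to feedback edges, for at most $5\fes+4\ntp$ vertices.

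You never choose $X$. Instead, you contract maximal degree-two paths of $G$ itself, accept arbitrarily many parallel connections in an intermediate multigraph, and get the bound from the invariance of $m-n$ under contraction. This is more uniform and independent of any choice of $X$. The cost is that you pass through multigraphs, which are not valid instances since paths are vertex sequences. Your intermediate equivalence claims therefore have to be read for a multigraph version of the problem. The paper's formulation stays within simple graphs throughout, and it is exactly what the paper reuses as preprocessing in its experiments.

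Two small points.

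\emph{Subdividing length-$1$ edges.} Lengths are positive integers, so ``splitting lengths arbitrarily'' is impossible for an edge of length $1$. This is harmless. The input is simple, so at most one of the parallel $x$-$y$ edges is an original edge. Every other one comes from a path with at least two edges, hence has length at least $2$ and can be split as $L-1$ and $1$. The original edge can simply stay unsubdivided.

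\emph{Step~4 pruning is unnecessary.} Your dominance pruning in Step~4 is not needed for the size bound. Step~2 already bounds the number of edges of the contracted multigraph, parallel edges included, by $O(\fes+\ntp)$. So you could delete maximal paths with $x=y$, and replace every other maximal path with at least two edges directly by a two-edge path through a fresh vertex. This never creates parallel edges at all.
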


Note that together with \cref{obs:nue:bruteforce},
it follows that \sbnsdAcr{} is \FPT{} \wpb{} $\fes + \ntp$.
\cref{thm:FESTP:linker} consists of carefully deleting leaves and compressing long paths
in the tree~$G-X$,
where~$X$ denotes a minimum feedback edge set.
Such an approach has already proved useful
(see, e.g., \cite{FeldmannLampis2025,KellerhalsKoana2022}).
We call a vertex in~$G=(V,E)$ with terminal pairs~$\TPset$ and a feedback edge set~$X\subseteq E$ \emph{important}
if it is contained in a terminal pair,
incident with an edge of~$X$,
or is of degree at least three in~$G-X$,
and \emph{unimportant} otherwise.

\begin{rrule}[Leaf]
 \label{rr:leafpruning}
 If a leaf in~$G-X$ is unimportant,
 delete it.
\end{rrule}

\begin{proof}[Correctness proof of \cref{rr:leafpruning}]
  Let~$v$ be an unimportant leaf of~$G-X$.
  Then,
  $v$ is neither contained in a terminal pair nor incident with an edge of~$X$.
  Thus,
  $v$ is a non-terminal leaf of~$G$
  and hence not contained in any path between terminals.
  Deleting~$v$ preserves all terminal-pair distances,
  all feasible safe paths,
  and the set of feasible solutions.
\end{proof}

For terminal pairs~$\TPset$ and a feedback edge set~$X\subseteq E$,
we call a path~$P$ in~$G-X$ a \emph{corridor} if it contains at least three vertices,
each of its endpoints is important,
and all its inner vertices are unimportant and of degree two in~$G-X$.
We exploit that any $s$-$t$ path connecting a terminal pair~$\{s,t\}$ contains either all edges of such a corridor or none.
Hence,
we can replace such a corridor by a path with at most two edges
thereby preserving all relevant information
(lengths and costs)
for the instance.
However,
we need to carefully distinguish the type of the corridor in terms of contained safe and unsafe edges and whether its endpoints are connected by a feedback edge.
Note that since~$P$ is a subgraph of~$G-X$,
if the endpoints of~$P$ are adjacent in~$G$,
then the edge belongs to~$X$.
Let~$v,w$ be the endpoints of corridor~$P$.
We call~$P$ \emph{unsafe} if it contains at least one unsafe edge,
and \emph{safe} otherwise.
We prove next that we can replace corridors with many edges by paths with at most two edges.

\begin{lemma}
 \label{lem:corrrepl}
 Let~$I$ be an instance of \sbnsdAcr{} with graph~$G=(V,E)$ and with feedback edge set~$X\subseteq E$.
 Let~$P$ be a corridor with endpoints~$v,w$.
 Let~$I'$ be the instance with graph~$G'=(V',E')$ obtained from~$I$ only by replacing~$P$ with a path~$P'$ with endpoints~$v,w$ such that
 $\len(P')=\len(P)$,
 and
 $P'$ consists of
 either two edges and one unimportant inner vertex if~$\{v,w\}\in E$,
 or of one edge otherwise.
 Moreover,
 $P$ is safe if and only if~$P'$ is safe,
 and if both are unsafe,
 $\sum_{e\in \EG(P)\cap \Eu} \cst(e) = \sum_{e\in \EG(P')\cap \Eu'} \cst(e)$,
 where~$\Eu'\subseteq E'$ is the set of unsafe edges in~$G'$.
 Then~$I$ is a \yes-instance if and only if~$I'$ is a \yes-instance.
\end{lemma}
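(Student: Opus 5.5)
The plan is to establish an explicit correspondence between solutions of~$I$ and~$I'$, and between $s$-$t$ paths in $G$ and in $G'$ for every terminal pair. The correspondence rests on one structural fact. Every inner vertex of~$P$ is unimportant and has degree two in~$G-X$. It is also not incident with an edge of~$X$, so its degree in~$G$ is two as well. Since no inner vertex lies in a terminal pair, any path between two vertices outside the interior of~$P$ either traverses all of~$P$ from~$v$ to~$w$ or uses none of its edges. The same holds for~$P'$, whose possible inner vertex has degree two and is not a terminal. We first record that $G$ and~$G'$ share the vertex set outside the corridors, so all terminals survive. Then we define the bijection $\phi$ between $s$-$t$ paths: $\phi$ replaces the subpath~$P$ by~$P'$ and leaves everything else unchanged.

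Next we show that $\phi$ preserves length, and hence distances. From $\len(P')=\len(P)$ we get $\len(\phi(Q))=\len(Q)$ for every path $Q$, and therefore $\dist_G(s,t)=\dist_{G'}(s,t)$ for every terminal pair. The two-edge case for $\{v,w\}\in E$ is needed so that $G'$ remains a simple graph: without the inner vertex, the parallel edge $\{v,w\}$ would clash with the edge of~$X$. This keeps the edge of~$X$ and the replacement distinct, so $\phi$ is well defined and bijective. Since the detour bound $\alpha\cdot\dist(s,t)$ is the same in both instances, a path is feasible length-wise in~$I$ if and only if its image is feasible length-wise in~$I'$.

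Then we transfer solutions. Given a solution~$F$ of~$I$, we distinguish two cases. If $P$ is safe, or $F$ contains all unsafe edges of~$P$, we put into~$F'$ all unsafe edges of~$P'$ together with $F\setminus \EG(P)$. Otherwise we put only $F\setminus\EG(P)$ into~$F'$, dropping the edges of~$P$ that were partially upgraded. The cost condition $\sum_{e\in\EG(P)\cap\Eu}\cst(e)=\sum_{e\in\EG(P')\cap\Eu'}\cst(e)$ gives $\cst(F')\le\cst(F)$. A safe feasible path $Q$ in $\impro{F}$ that uses~$P$ requires all of~$P$ to be safe, which is exactly the first case. Consequently $\phi(Q)$ is safe in $\impro[G']{F'}$, and the definition of~$F'$ guarantees that its safety matches the safety of~$Q$ on the corridor. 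Paths avoiding~$P$ are unchanged. The converse direction is symmetric, mapping $F'$ back by upgrading all unsafe edges of~$P$ exactly when all unsafe edges of~$P'$ are upgraded.

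The main obstacle is the case analysis of safety, which is subtle but short. In a partial upgrade, some but not all unsafe edges of~$P$ are upgraded; this is useless and can be dropped without loss. We also use that ``$P$ safe iff $P'$ safe'', which covers the case where there are no unsafe edges and so no cost to match. Everything else reduces to the path bijection~$\phi$.
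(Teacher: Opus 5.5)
Your proposal is correct and follows essentially the same route as the paper: both observe that every terminal path uses all or none of the corridor, replace $P$ by $P'$ inside feasible paths (which preserves lengths and hence terminal distances), and swap the upgraded unsafe edges of $P$ for those of $P'$ at equal cost. The only difference is that the paper starts from a cost-minimal solution to conclude that $F\cap\EG(P)$ is either empty or all unsafe edges of $P$, whereas you discard partial upgrades explicitly; this is equivalent and does not even need costs to be positive.
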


\begin{proof}
 Since every inner vertex of~$P$ and of~$P'$ is unimportant,
 none of them is in a terminal pair.
 Since the inner vertices of~$P$ and~$P'$ have degree two in~$G$ and~$G'$,
 respectively,
 every path connecting a terminal pair contains either all or none of the edges of~$P$ and~$P'$.
 Since~$\len(P)=\len(P')$,
 it follows that~$\dist_G(s,t) = \dist_{G'}(s,t)$ for every terminal pair~$\{s,t\}\in\TPset$.
 Finally,
 note that~$G'$ is a simple graph and hence a valid input to \sbnsdAcr{}
 since~$P'$ has two edges in the case of~$\{v,w\}\in E$.

 For the forward direction,
 let~$F\subseteq E$ be a cost-minimal solution to~$I$.
 Let~$F_P \ceq F\cap E(P)$.
 We know that either~$F_P=\emptyset$ or~$F_P=E(P)\cap \Eu$.
 Let~$F_{P'}\ceq E(P')\cap \Eu'$.
 Let~$F'$ be~$F$ if~$F_P=\emptyset$,
 and~$(F\setminus F_P)\cup F_{P'}$ otherwise.
 By the cost equality for~$P$ and~$P'$,
 we have~$\cst(F')=\cst(F)\leq k$.
 We claim that~$F'$ is a solution to~$I'$.
 Let~$\{s,t\}\in\TPset$ be a terminal pair.
 Let~$Q$ be a safe $s$-$t$ path in~$\impro[G]{F}$ of detour factor at most~$\alpha$.
 If~$Q$ contains no edge from~$P$,
 then~$Q$ is also a safe $s$-$t$ path in~$\impro[G']{F'}$,
 since $\dist_G(s,t) = \dist_{G'}(s,t)$.
 If~$Q$ contains an edge from~$P$,
 then~$P$ is a subpath of~$Q$.
 Let~$Q'$ be the path in~$G'$ obtained from~$Q$ by replacing~$P$ by~$P'$.
 Note that~$\len(Q)=\len(Q')$.
 Moreover,
 $Q'$ is safe since if all unsafe edges of~$P$ are upgraded by~$F$,
 so are all unsafe edges of~$P'$ upgraded by~$F'$.
 With $\dist_G(s,t) = \dist_{G'}(s,t)$,
 it follows that~$Q'$ is safe and of detour factor at most~$\alpha$.
 Since this holds for every terminal pair,
 $F'$ is a solution to~$I'$.

 For the backward direction,
 let~$F'\subseteq E'$ be a cost-minimal solution to~$I'$.
 Let~$F_{P'} \ceq F'\cap E(P')$.
 We know that either~$F_{P'}=\emptyset$ or~$F_{P'}=E(P')\cap \Eu'$.
 Let~$F_{P}\ceq E(P)\cap \Eu$.
 Let~$F$ be~$F'$ if~$F_{P'}=\emptyset$,
 and~$(F'\setminus F_{P'})\cup F_{P}$ otherwise.
 By the cost equality for~$P$ and~$P'$,
 we have~$\cst(F)=\cst(F')\leq k$.
 We claim that~$F$ is a solution to~$I$.
 Let~$\{s,t\}\in\TPset$ be a terminal pair.
 Let~$Q'$ be a safe $s$-$t$ path in~$\impro[G']{F'}$ of detour factor at most~$\alpha$.
 If~$Q'$ contains no edge from~$P'$,
 then~$Q'$ is also a safe $s$-$t$ path in~$\impro[G]{F}$,
 since $\dist_G(s,t) = \dist_{G'}(s,t)$.
 If~$Q'$ contains an edge from~$P'$,
 then~$P'$ is a subpath of~$Q'$.
 Let~$Q$ be the path in~$G$ obtained from~$Q'$ by replacing~$P'$ by~$P$.
 Note that~$\len(Q')=\len(Q)$.
 Moreover,
 $Q$ is safe since if all unsafe edges of~$P'$ are upgraded by~$F'$,
 so are all unsafe edges of~$P$ upgraded by~$F$.
 With $\dist_G(s,t) = \dist_{G'}(s,t)$,
 it follows that~$Q$ is safe and of detour factor at most~$\alpha$.
 Since this holds for every terminal pair,
 $F$ is a solution to~$I$.
\end{proof}

If the endpoints of a corridor are connected by a feedback edge,
we compare the corridor with this alternative connection.
To this end,
we introduce the following notation.
We call a path~$R$ with endpoints~$v,w$ and no important inner vertex \emph{irrelevant}
if there is another path~$R'$ with endpoints~$v,w$ in
$G-(V(R)\setminus\{v,w\})$
and no important inner vertex
such that
if~$R$ is safe,
then~$R'$ is safe and $\len(R')\leq \len(R)$;
if~$R$ is unsafe,
then~$\len(R')\leq \len(R)$
and $\sum_{e\in \EG(R')\cap E_u} c(e) \leq \sum_{e\in \EG(R)\cap E_u} c(e)$.

\begin{lemma}
 \label{lem:irrelevant}
 Let~$I$ be an instance of \sbnsdAcr{} with an irrelevant path~$R$.
 Let~$I'$ be the instance obtained from~$I$ by deleting all edges and inner vertices of~$R$.
 Then,
 $I$ is a \yes-instance if and only if~$I'$ is a \yes-instance.
\end{lemma}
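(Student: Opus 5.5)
The plan is to show that deleting $R$ changes neither the terminal-pair distances nor the existence of a solution within budget. Rerouting through the alternative path $R'$ from the definition of irrelevance handles the forward direction, and monotonicity handles the backward direction.

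\emph{Step 1: paths meet $R$ all or nothing.} Every inner vertex $u$ of $R$ is unimportant. So $u$ is not in a terminal pair, has no edge of $X$, and has degree at most two in $G-X$, hence degree at most two in $G$. Since $u$ is an inner vertex of a path, it has degree exactly two in $G$. Consequently every path between two terminals either contains all edges of $R$ or none of them, and no terminal is deleted.

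\emph{Step 2: distances are preserved.} Let $G'$ be the graph after deleting $R$. We have $\dist_{G'}(s,t)\geq\dist_G(s,t)$ because $G'$ is a subgraph of $G$. For the converse, take a shortest $s$-$t$ path $Q$ in $G$. If $Q$ avoids $R$, it survives in $G'$. Otherwise $R$ is a subpath of $Q$ by Step~1; replace it by $R'$. Since $R'$ lies in $G-(V(R)\setminus\{v,w\})$, the result is an $s$-$t$ walk in $G'$ of length at most $\len(Q)$, because $\len(R')\leq\len(R)$. Shortcutting repeated vertices gives an $s$-$t$ path in $G'$ of no larger length, since lengths are positive. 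Hence $\dist_{G'}(s,t)=\dist_G(s,t)$ for all terminal pairs.

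\emph{Step 3: backward direction.} A solution $F'$ of $I'$ is also a solution of $I$. Every safe path in $\impro[G']{F'}$ is a safe path in $\impro[G]{F'}$, and the distances agree by Step~2, so detour factors are unchanged.

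\emph{Step 4: forward direction.} Let $F$ be a cost-minimal solution of $I$. By Step~1 and minimality, $F$ upgrades either all unsafe edges of $R$ or none. Set $F'\ceq F\setminus \EG(R)$ in the first case, and additionally add all unsafe edges of $R'$ that are not already in $F$ in the second case. If $R$ is safe there is nothing to add. If $R$ is unsafe and fully upgraded, then $\cst(F')\leq\cst(F)$ follows from $\sum_{e\in\EG(R')\cap\Eu}\cst(e)\leq\sum_{e\in\EG(R)\cap\Eu}\cst(e)$. Now fix a terminal pair and a feasible safe path $Q$ in $\impro[G]{F}$.
\begin{itemize}
\item If $Q$ avoids $R$, then $Q$ is still safe in $\impro[G']{F'}$.
\item If $Q$ uses $R$, then $R$ was safe in $\impro[G]{F}$. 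Either $R$ was safe originally, in which case $R'$ is safe by irrelevance, or its unsafe edges lie in $F$, in which case $R'$ is made safe by $F'$. Replacing $R$ by $R'$ in $Q$ gives a safe walk of length at most $\len(Q)$. Shortcutting it gives a safe path, since it is a subset of the walk's edges, of no larger length.
\end{itemize}
With Step~2, this path has detour factor at most $\alpha$, so $F'$ is a solution of $I'$.

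The main obstacle is Step~4's rerouting. The substitute $R'$ may intersect $Q$ elsewhere, so the result is only a walk, and one must check that shortcutting it preserves both safety and the length bound. The same shortcutting argument is what makes the distance equality in Step~2 go through.
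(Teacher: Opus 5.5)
Your proof is correct and follows essentially the same route as the paper: terminal paths use $R$ all-or-nothing, distances are preserved by rerouting through $R'$, monotonicity gives the backward direction, and in the forward direction you exchange the upgrades on $R$ for those on $R'$. You are in fact more careful than the paper in shortcutting the resulting walks and in deriving degree two from unimportance. The only deviations are minor: you use cost-minimality of $F$ (valid since costs are positive) to rule out partial upgrades of $R$ instead of treating that case directly, and your ``first case''/``second case'' labels in Step~4 are swapped relative to the ``all or none'' order you state, though the following text makes the intended construction clear.
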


\begin{proof}
 Let~$v,w$ denote the endpoints of~$R$ and let~$R'$ be a witness such that~$R$ is irrelevant.
 Since no inner vertex of~$R$ is a terminal and every inner vertex has degree two,
 every terminal-to-terminal path contains either all or none of the edges of~$R$.
 Let~$G'$ denote the graph in~$I'$.
 Since~$G'$ is a subgraph of~$G$,
 we have~$\dist_G(s,t)\leq \dist_{G'}(s,t)$.
 Conversely,
 every path using~$R$ can replace it by~$R'$ without increasing its length,
 as~$\len(R')\leq \len(R)$.
 Thus,
 $\dist_G(s,t)=\dist_{G'}(s,t)$ for every terminal pair~$\{s,t\}\in\TPset$.
 Next we show that $I$ is a \yes-instance if and only if~$I'$ is a \yes-instance.
 Regarding the backward direction:
 Since~$G'$ is a subgraph of~$G$ and distances are preserved,
 any solution for $I'$ is also feasible for $I$.

 For the forward direction,
 let~$F$ be a solution to~$I$.
 If~$R$ is safe,
 then every feasible safe path using~$R$ can replace~$R$ by the safe path~$R'$,
 which is no longer since $\len(R')\leq \len(R)$.
 All paths not using~$R$ remain unchanged.
 Thus,
 $F$ is also a solution to~$I'$.
 Now assume that~$R$ is unsafe.
 If~$F$ does not contain all unsafe edges of~$R$,
 then~$F'\ceq F\setminus (F\cap \EG(R))$ is also a solution to~$I'$
 since no safe path in~$\impro[G]{F}$ uses an edge from~$R$.
 Otherwise,
 let $F'$ be obtained from~$F$ by replacing all unsafe edges of~$R$ by all unsafe edges of~$R'$.
 Since $\sum_{e\in \EG(R')\cap E_u} c(e) \leq \sum_{e\in \EG(R)\cap E_u} c(e)$,
 the cost of~$F'$ is at most the cost of~$F$.
 Since~$F$ is a solution to~$I$,
 for every terminal pair~$\{s,t\}\in \TPset$,
 there is a safe path~$Q$ in~$\impro[G]{F}$ of detour factor at most~$\alpha$.
 If~$Q$ does not contain~$R$,
 then it is also a safe path in~$\impro[G']{F'}$.
 If~$Q$ contains~$R$,
 then the $s$-$t$ path~$Q'$ obtained from~$Q$ by replacing~$R$ by~$R'$
 is a safe path in~$\impro[G']{F'}$ of length at most~$\len(Q)$,
 since~$\len(R')\leq\len(R)$.
 Hence,
 $F'$ is also a solution to~$I'$.
\end{proof}

We proceed towards our reduction rules.
We call a safe corridor~$P$ \emph{contractible}
if
(i)~$\{v,w\}\not\in E$,
(ii)~$e=\{v,w\}\in X\cap \Es$, or
(iii)~$e=\{v,w\}\in X\cap \Eu$ and~$\len(P)\leq \trt(e)$,
and \emph{compressible} if it contains at least four vertices,
$e=\{v,w\}\in X\cap \Eu$,
and $\len(P)> \trt(e)$.
A safe corridor can be neither contractible nor compressible,
that is,
when it has three vertices,
$e=\{v,w\}\in X\cap \Eu$,
and $\len(P)> \trt(e)$.

\begin{rrule}[Safe Corridor Contraction]
 \label{rr:safepathcontraction}
 Let~$P$ be a contractible safe corridor with endpoints~$v,w$.
 \begin{enumerate}
  \item If~$\{v,w\}\not\in E$, then remove all edges and inner vertices from~$P$ and add the edge~$e=\{v,w\}$ to~$\Es$ and set~$\trt(e)=\len(P)$.
  \item If~$e=\{v,w\}\in X\cap \Es$, then remove~$e$ and all edges and inner vertices from~$P$,
  add a new safe edge~$e'$ and set~$\trt(e')=\min\{\trt(e),\len(P)\}$.
  \item If $e=\{v,w\}\in X\cap \Eu$ and~$\len(P)\leq \trt(e)$,
  then remove~$e$ and all edges and inner vertices from~$P$ and add the edge~$e'=\{v,w\}$ to~$\Es$ and set~$\trt(e')=\len(P)$.
 \end{enumerate}
\end{rrule}

\cref{rr:safepathcontraction} is correct since an unsafe alternative not shorter than the safe corridor is irrelevant,
and among two safe options,
the longer is irrelevant.

\begin{proof}[Correctness proof of \cref{rr:safepathcontraction}]
 The correctness of the first case follows directly from~\cref{lem:corrrepl}.
 For the second case,
 if~$\trt(e)\leq \len(P)$,
 then $P$ is irrelevant and correctness follows from~\cref{lem:irrelevant}.
 Otherwise,
 the single-edge path~$e$ is irrelevant.
 Hence,
 the correctness follows from~\cref{lem:irrelevant} together with~\cref{lem:corrrepl}.
 For the third case,
 since~$\len(P)\leq \trt(e)$,
 the single-edge path~$e$ is irrelevant.
 Hence,
 the correctness follows from~\cref{lem:irrelevant} together with~\cref{lem:corrrepl}.
\end{proof}

The next rule treats a shorter unsafe alternative to the safe corridor.
In this case,
we have to keep both,
but can compress the safe corridor into a path with two edges that preserves the length of the~corridor.
The correctness follows directly from~\cref{lem:corrrepl}.

\begin{rrule}[Safe Corridor Compression]
 \label{rr:safepathcompression}
 Let~$P$ be a compressible safe corridor with endpoints~$v,w$ and let $e=\{v,w\}\in X\cap \Eu$.
 Then remove all edges and inner vertices from~$P$,
 add a new vertex~$x_{v,w}$ and the edges~$e'=\{v,x_{v,w}\},e''=\{x_{v,w},w\}$ to~$\Es$,
 and set~$\trt(e')=\len(P)-1$ and~$\trt(e'')=1$.
\end{rrule}

Let~$P$ be an unsafe corridor with total cost $M\ceq \sum_{e\in \EG(P)\cap \Eu} \cst(e)$.
We call~$P$ \emph{contractible}
if
(i)~$\{v,w\}\not\in E$,
(ii)~$e=\{v,w\}\in X\cap \Es$ and $\len(P)\geq \trt(e)$, or
(iii)~$e=\{v,w\}\in X\cap \Eu$~and
\begin{align}
  (\len(P)\leq \trt(e) \land M\leq \cst(e)) \lor (\len(P)\geq \trt(e) \land M\geq \cst(e)).
  \label{eq:unsafebreaker}
\end{align}
We call~$P$ \emph{compressible}
if it contains at least four vertices
and
either
(i)~$e=\{v,w\}\in X\cap \Es$ and~$\len(P)<\trt(e)$, or
(ii)~$e=\{v,w\}\in X\cap \Eu$ and \eqref{eq:unsafebreaker} does not hold.
An unsafe corridor can be neither contractible nor compressible,
that is,
when it has three vertices and the conditions (i) or (ii) for compressible corridors hold.

\begin{rrule}[Unsafe Corridor Contraction]
 \label{rr:unsafepathcontraction}
 Let~$P$ be a contractible unsafe corridor with endpoints~$v,w$.
 Let~$M\ceq \sum_{e\in \EG(P)\cap \Eu} \cst(e)$ be the total cost of~$P$.
 \begin{enumerate}
  \item If~$\{v,w\}\not\in E$, then remove all edges and inner vertices from~$P$ and add the edge~$e=\{v,w\}$ to~$\Eu$ and set~$\trt(e)=\len(P)$ and~$\cst(e)=M$.
  \item If $e=\{v,w\}\in X\cap \Es$ and $\len(P)\geq \trt(e)$,
  then remove~$e$ and all edges and inner vertices from~$P$,
  add the safe edge~$e'=\{v,w\}$ and set~$\trt(e')=\trt(e)$.
  \item If~$e=\{v,w\}\in X\cap \Eu$ and \eqref{eq:unsafebreaker} holds,
  then remove~$e$ and all edges and inner vertices from~$P$,
  add a new unsafe edge $e'$ and let~$\cst(e')\ceq \min\{M,\cst(e)\}$ and~$\trt(e')\ceq\min\{\trt(e),\len(P)\}$.
 \end{enumerate}
\end{rrule}

\cref{rr:unsafepathcontraction} is correct since when one unsafe connection weakly dominates the other
(neither longer nor more costly),
the dominated one is irrelevant for both shortest paths and upgrade costs.

\begin{proof}[Correctness proof of \cref{rr:unsafepathcontraction}]
 The correctness of the first case follows directly from~\cref{lem:corrrepl}.
 For the second case,
 since~$\len(P)\geq \trt(e)$,
 $P$ is irrelevant and correctness follows from~\cref{lem:irrelevant}.
 Now consider the third case.
 If $\len(P)\leq \trt(e)$ and $M\leq \cst(e)$,
 then
 the single-edge path~$e$ is irrelevant.
 Hence,
 the correctness follows from~\cref{lem:irrelevant} together with~\cref{lem:corrrepl}.
 If $\len(P)\geq \trt(e)$ and $M\geq \cst(e)$,
 then~$P$ is irrelevant and correctness follows from~\cref{lem:irrelevant}.
 These two cases are exactly the two alternatives in~\eqref{eq:unsafebreaker}.
\end{proof}

The next rule treats two incomparable unsafe paths with the same endpoints.
In this case,
we have to keep both,
but can compress the unsafe corridor into a path with two edges
that preserves the length of the corridor and the cost to upgrade all of its unsafe edges.
The correctness follows directly from~\cref{lem:corrrepl}.

\begin{rrule}[Unsafe Corridor Compression]
 \label{rr:unsafepathcompression}
 Let~$P$ be a compressible unsafe corridor with endpoints~$v,w$.
 Let~$M\ceq \sum_{e\in \EG(P)\cap \Eu} \cst(e)$ be the total cost of~$P$.
 Then remove all edges and inner vertices from~$P$,
 add a new vertex~$x_{v,w}$ and the edges~$e'=\{v,x_{v,w}\}$ to~$\Eu$ and $e''=\{x_{v,w},w\}$ to~$\Es$.
 Set~$\trt(e')=\len(P)-1$,
 $\cst(e')=M$,
 and~$\trt(e'')=1$.
\end{rrule}

For easier reference,
we summarize the four preceding reduction rules in the following reduction rule.
Its correctness follows from the correctness of the four individual rules.

\begin{rrule}[Corridor]
 \label{rr:corridor}
 Let~$P$ be a corridor.
 If~$P$ is safe,
 then apply \cref{rr:safepathcontraction} if it is contractible
 and \cref{rr:safepathcompression} if it is compressible.
 If~$P$ is unsafe,
 then apply \cref{rr:unsafepathcontraction} if it is contractible
 and \cref{rr:unsafepathcompression} if it is compressible.
\end{rrule}

Each corridor replacement preserves the relevant $v$-$w$ connections.
Safe and unsafe corridors represent length-$\len(P)$ connections
that are already safe and that can be made safe at upgrade cost~$M$,
respectively.
If the endpoints are joined by a feedback edge~$e$,
then $e$ contributes a length-$\trt(e)$ connection that is either already safe or
that can be made safe at upgrade cost~$\cst(e)$.
The contraction rules delete dominated connections,
while the compression rules keep two incomparable connections by representing
the corridor as a two-edge path.
Since every $s$-$t$ path for terminal pair~$\{s,t\}$
uses either the whole corridor or none of it,
these replacements preserve feasibility and optimum cost.

\begin{proof}[Proof of~\cref{thm:FESTP:linker}]
 Let~$X$ be a minimum feedback edge set of~$G$ with $|X|=\fes$
 and let~$T$ denote the forest obtained from~$G-X$ when \cref{rr:leafpruning} is applied exhaustively.
 We mark all important vertices and keep these marks fixed throughout the reduction.
 Now,
 apply
 \cref{rr:corridor} exhaustively.
 Whenever a rule deletes an edge of~$X$,
 we remove it from the current feedback edge set.
 New edges replacing subpaths of $T$ are treated as forest edges.
 Let~$T'$ denote the forest obtained from~$T$ by the above replacements,
 $X'\subseteq X$ the remaining feedback edges
 (recall that applying corridor contractions can delete feedback edges as well),
 and let~$G'$ denote the graph~$T'$ together with the edges from~$X'$.
 Let~$\VG(T')=V_1\cup V_2^+\cup V_2^- \cup V_{\geq 3}$,
 where we denote by $V_1$ the set of leaves,
 by $V_2^+$ the set of important degree-2 vertices,
 by $V_2^-$ the set of unimportant degree-2 vertices,
 and by~$V_{\geq 3}$ the set of vertices of degree at least three.
 We know that~$|V_1|+|V_2^+|\leq 2\fes+2\ntp$,
 and it is well known that~$|V_{\geq 3}|\leq |V_1|$.
 Next we show that~$|V_2^-|\leq \fes$.
 After the execution,
 every remaining unimportant degree-two vertex is the unique inner vertex of a corridor with three vertices whose endpoints are important and adjacent by a remaining feedback edge.
 Since~$T'$ is a forest,
 for each feedback edge~$\{v,w\}\in X'$ there is at most one $v$-$w$ path in~$T'$.
 Hence, this defines an injective mapping from unimportant degree-two vertices to feedback edges in~$X'$,
 and thus~$|V_2^-|\leq |X'|\leq \fes$.
 Hence,
 we get that~$|\VG(G')|=|\VG(T')|=|V_1|+|V_2^+|+|V_2^-|+|V_{\geq 3}|\leq |V_1|+|V_2^+|+|V_2^-|+|V_1| \leq 5\fes+4\ntp$.
 Finally,
 $|\EG(G')|= |\EG(T')| + |X'| \leq 6\fes+4\ntp$.
\end{proof}

\section{ILP, Preprocessing, and Cut Generation}

For every~$r\in\set{\ntp}$,
we denote by~$A_r^\alpha$ the set of all \emph{$\alpha$-admissible} arcs,
where,
for an edge~$e=\{v,w\}$ and terminal pair~$\{s_r,t_r\}$,
arc~$(v,w)$ is $\alpha$-admissible for terminal pair~$\{s_r,t_r\}$ if
$\dist(s_r,v) + \trt(e) + \dist(w,t_r) \leq \alpha\cdot\dist(s_r,t_r)$.
For a vertex~$v\in V$,
we denote by~$\outarcs_r^\alpha(v)$ and~$\inarcs_r^\alpha(v)$ the sets of outgoing and incoming arcs at~$v$ in~$A_r^\alpha$,
respectively.
We say that an edge is~$\alpha$-admissible for terminal pair~$\{s_r,t_r\}$
if at least one of its two associated arcs is $\alpha$-admissible for terminal pair~$\{s_r,t_r\}$.
Let~$E_r^\alpha$ denote the set of all $\alpha$-admissible edges for terminal pair~$\{s_r,t_r\}$.
Finally,
let~$G_r^\alpha\ceq G[E_r^\alpha]$ denote the $\alpha$-admissible graph for terminal pair~$\{s_r,t_r\}$.
We use the following ILP as our baseline exact formulation.
\begin{subequations}
\label{eq:ilp-base}
\begin{align}
\min \quad
& \sum_{e\in \Eu} \cst(e)\cdot  x_e
\label{eq:ilp-obj}
\\
\text{s.t.}\quad
& \sum_{a\in \outarcs_r^\alpha(v)} f_a^r
-
\sum_{a\in \inarcs_r^\alpha(v)} f_a^r
=
\begin{cases}
1, & \cif{}v=s_r,\\
-1, & \cif{}v=t_r,\\
0, & \cotw,
\end{cases}
&& \forall r\in\set{\ntp},\ \forall v\in V,
\label{eq:ilp-flow}
\\
&
\sum_{a\in A_r^\alpha\cap \{(v,w),(w,v)\}} f_{a}^r
\le
x_e,
&& \forall r\in\set{\ntp},\ \forall \{v,w\}\in \Eu,
\label{eq:ilp-activation}
\\
& \sum_{a\in A_r^\alpha} \trt(a)\cdot f_a^r
\le
\alpha\cdot \dist_G(s_r,t_r),
&& \forall r\in\set{\ntp},
\label{eq:ilp-length}
\\
& 0 \le f_a^r \le 1,
&& \forall r\in\set{\ntp},\ \forall a\in A_r^\alpha.
\label{eq:ilp-f-domain}
\\
& x_e \in \{0,1\},
&& \forall e\in \Eu,
\label{eq:ilp-x-domain}%
\end{align}%
\end{subequations}%
Constraints~\eqref{eq:ilp-flow}--\eqref{eq:ilp-f-domain}
ensure that for each terminal pair~$\{s_r,t_r\}$,
a safe $s_r$-$t_r$ path exists,
that can use upgraded unsafe edges \eqref{eq:ilp-activation},
but must have detour factor at most~$\alpha$ \eqref{eq:ilp-length}.

\subsection{FES-based Preprocessing}
\label{sec:fesbasedpreprocessing}

We apply the reduction rules that we described before
together with the following two,
where the second is a direct combination of \cref{rr:removesatisfied,rr:mandatoryedge}:

\begin{rrule}[Irrelevant edge]
 \label{rr:alpha}
 If an edge is not $\alpha$-admissible for any terminal pair,
 delete it.
\end{rrule}

\begin{rrule}[Unique path]
 \label{rr:uniquepath}
 If a terminal pair is not connected by a safe path of detour factor at most~$\alpha$
 but connected by a unique path of detour factor at most~$\alpha$ containing at least one unsafe edge,
 then let~$C$ denote the sum of costs of all unsafe edges on this path.
 If~$C$ exceeds the budget, return a trivial \no-instance,
 otherwise make all unsafe edges on the path safe,
 decrease the budget by~$C$,
 and delete the terminal pair.
\end{rrule}

We apply each reduction rule exhaustively before moving to the next one.
Since one reduction rule can make another rule applicable,
their ordering matters.
\cref{fig:shrink} shows how we arrange our reduction rules.
\begin{figure}[t]
 \centering
 \begin{tikzpicture}
  \def\xr{1.75}
  \def\yr{1}
  \tikzpreamble{}

  \newcommandx{\rnode}[5][4=AaZz]{%
    \ifstrequal{#4}{AaZz}{%
      \node (#1) at (#2*\xr,#3*\yr)[anchor=west,label={[label distance=-4pt]-90:{\tiny(\nameref{#5})}}]{\Cref{#5}};
    }{%
      \node (#1) at (#2*\xr,#3*\yr)
        [anchor=west,
        label={[label distance=-1em]-90:{\tiny\shortstack{(\nameref{#4}\\\nameref{#5})}}}]
        {\Cref{#4} \& \Cref{#5}};

    }
  }
  \clip (-1*\xr,-0.85*\yr) rectangle (6.9*\xr,0.95*\yr);

  \rnode{a}{0}{0}{rr:removesatisfied};
  \rnode{b}{1}{0}{rr:leafpruning};
  \rnode{c}{2}{0}{rr:uniquepath};
  \rnode{d}{3}{0}{rr:leafpruning};
  \rnode{e}{4.125}{0}{rr:corridor};
  \rnode{f}{6}{0.4}{rr:mandatoryedge};
  \rnode{g}{6}{-0.4}{rr:alpha};

  \node (s) at (-1*\xr,-0.5*\yr)[anchor=west,draw]{\texttt{start}};

  \foreach \x/\y in {s/a.west,a/b,b/c,c/d,d/e}{
    \draw[->,>=latex] (\x) to (\y);
  }
  \foreach \x/\y/\z/\w/\sh in {e/f/{1st pass~~~~}/above/{-1},e/g/{2nd pass}/below/{1}}{
    \draw[->,>=latex] (\x) to node[midway,sloped,\w,yshift=\sh pt]{\tiny\z}(\y);
  }
  \draw[->,>=latex] (f) to [out=175,in=180](a);
 \end{tikzpicture}
 \caption{Flow chart of our shrink heuristic.}
 \label{fig:shrink}
\end{figure}
Empirically,
this two-round execution
provides a good trade-off between running time and preprocessing effectiveness
compared to
an exhaustive application until every reduction rule becomes inapplicable.
Before each of \cref{rr:leafpruning,rr:corridor},
we recompute a minimum feedback edge set.
Notably,
after applying the shrink heuristic,
the treewidth of the obtained graph is not larger
than the treewidth of the input graph,
since deletion and edge contraction and compression do not increase the treewidth.

\subsection{TD-separator Cuts}

Let $(\calT,\{B_i\}_{i\in I})$
with tree~$\calT=(I,Y)$ be a tree decomposition of~$G$.
Our cut generation is based on the following,
simple insight about cuts consisting only of unsafe edges.

\begin{observation}
\label{obs:relevantcuts}
Let $\{s_r,t_r\}\in\TPset$ be a terminal pair
and
$U\subseteq \VG(G_r^\alpha)$ such that
$|\{s_r,t_r\}\cap U| = 1$.
If $\emptyset \ne \cut_{G_r^\alpha}(U) \subseteq \Eu$,
then $|F\cap \cut_{G_r^\alpha}(U)|\geq 1$
for every solution~$F$.
\end{observation}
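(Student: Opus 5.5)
The plan is to argue by contradiction: pick a solution~$F$, take a feasible safe path for pair~$r$, and show that this path must cross the cut, which consists only of unsafe edges. Fix a solution~$F$ and the terminal pair $\{s_r,t_r\}$, and assume without loss of generality that $s_r\in U$ and $t_r\notin U$. Since~$F$ is a solution, $\impro{F}$ contains a safe $s_r$-$t_r$ path~$Q=(v_0,\dots,v_\ell)$ of detour factor at most~$\alpha$, i.e., $\len(Q)\le\alpha\cdot\dist(s_r,t_r)$.

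\textbf{Step 1: $Q$ lies in~$G_r^\alpha$.} Orient~$Q$ from~$s_r$ to~$t_r$ and consider its $i$-th edge $e_i=\{v_{i-1},v_i\}$. The prefix of~$Q$ up to~$v_{i-1}$ has length at least $\dist(s_r,v_{i-1})$, and the suffix from~$v_i$ has length at least $\dist(v_i,t_r)$. Hence
\[
\dist(s_r,v_{i-1})+\trt(e_i)+\dist(v_i,t_r)\le\len(Q)\le\alpha\cdot\dist(s_r,t_r).
\]
So the arc $(v_{i-1},v_i)$ is $\alpha$-admissible, every edge of~$Q$ lies in~$E_r^\alpha$, and~$Q$ is a path in~$G_r^\alpha$. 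Here $\dist$ means distance in the original graph~$G$, as in the definition of admissibility. This is the one point needing care: upgrading edges changes only safety, not lengths, so these distances are unchanged.

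\textbf{Step 2: $Q$ crosses the cut.} Since $v_0=s_r\in U$ and $v_\ell=t_r\notin U$, there is a smallest index~$i$ with $v_i\notin U$. Then $v_{i-1}\in U$, and the edge $\{v_{i-1},v_i\}$ is an edge of~$G_r^\alpha$ with exactly one endpoint in~$U$. Thus it lies in $\cut_{G_r^\alpha}(U)$. Vertices of~$Q$ outside~$U$ are still vertices of $\VG(G_r^\alpha)$, so this is consistent with $U\subseteq \VG(G_r^\alpha)$.

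\textbf{Step 3: conclude.} By assumption $\cut_{G_r^\alpha}(U)\subseteq\Eu$, so this edge is unsafe in~$G$. But it is safe in $\impro{F}$ because~$Q$ is safe, so it must belong to~$F$. Therefore $|F\cap\cut_{G_r^\alpha}(U)|\ge1$.

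The nonemptiness hypothesis is not needed for this argument; it only rules out the trivially infeasible case. The only step I expect to need care is Step~1, since it relies on lengths being invariant under upgrades.
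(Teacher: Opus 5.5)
Your proof is correct and follows the same route as the paper. Every feasible safe $s_r$-$t_r$ path must cross $\cut_{G_r^\alpha}(U)$, which contains only unsafe edges, so some edge of that cut lies in $F$; your Step~1, showing that every path of detour factor at most~$\alpha$ lies entirely in $G_r^\alpha$, spells out a fact the paper's brief proof uses without proving it.
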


\begin{proof}
Since $\cut_{G_r^\alpha}(U)$ forms an $s_r$-$t_r$ cut in~$G_r^\alpha$,
it intersects all $s_r$-$t_r$ paths of detour factor at most~$\alpha$ in~$G$.
Thus,
every feasible $s_r$-$t_r$ path must traverse an edge from~$\cut_{G_r^\alpha}(U)$.
Since $\cut_{G_r^\alpha}(U)$ contains only unsafe edges,
every solution must upgrade at least one edge from~$\cut_{G_r^\alpha}(U)$.
\end{proof}

It is well known that every edge~$\{i,j\}\in Y$ of~$\calT$ corresponds to a separator~$S_{\{i,j\}} \ceq B_i \cap B_j$
in the following sense.
Let~$I_i^{\{i,j\}}$ and $I_j^{\{i,j\}}$ denote all nodes of the connected component of~$\calT-\{\{i,j\}\}$ containing~$i$ and~$j$,
respectively.
Let~$B_x^{\{i,j\}}=\bigcup_{y\in I_x^{\{i,j\}}} B_y$ denote the corresponding set of vertices in~$G$ for each~$x\in\{i,j\}$.
Then $B_i^{\{i,j\}}\setminus S_{\{i,j\}}$ is disconnected from~$B_j^{\{i,j\}}\setminus S_{\{i,j\}}$ in~$G-S_{\{i,j\}}$.
Now assume that~$\calT$ is rooted at some node.
For an edge~$\{i,j\} \in Y$,
let~$\min_\calT(\{i,j\})$ denote the endpoint of larger depth.
Every~$X\subseteq S_{\{i,j\}}$
defines a \emph{candidate side}
$V_{\{i,j\}}^X \ceq \left(B_{\min_\calT(\{i,j\})}^{\{i,j\}}\setminus S_{\{i,j\}}\right) \cup X$.
Altogether,
we consider the set~$\calV=\{V_{\{i,j\}}^X \mid \{i,j\}\in Y,\, X\subseteq S_{\{i,j\}}\}$ of all candidate sides.

For each terminal pair~$\{s_r,t_r\}$
and candidate side~$V_{\{i,j\}}^X\in \calV$,
let $V_{\{i,j\},r}^X\ceq V_{\{i,j\}}^X \cap \VG(G_r^\alpha)$
be the candidate side when restricted to the vertices of the $\alpha$-admissible graph for~$\{s_r,t_r\}$.
If $|\{s_r,t_r\}\cap V_{\{i,j\},r}^X| = 1$ and $\emptyset \ne \cut_{G_r^\alpha}(V_{\{i,j\},r}^X) \subseteq \Eu$,
then,
due to \cref{obs:relevantcuts},
every feasible solution upgrades at least one edge of this cut.
Hence,
for this terminal pair and candidate side,
we add the valid inequality
$\displaystyle
\sum\nolimits_{e\in \cut_{G_r^\alpha}(V_{\{i,j\},r}^X)} x_e \ge 1
$
to the ILP.

\section{Experiments}

Our experiments ran on
Intel\textsuperscript{\textregistered} Xeon\textsuperscript{\textregistered} Silver 4310 CPU at 2.10GHz (12 cores),
125 GB RAM,
Ubuntu 24.04.4 LTS (x86\_64), %
using Gurobi(py) 13.0 for the ILPs.

\subsection{Instances}

We considered 30 village networks and 30 region networks.
For each network,
we combined three safety models,
three terminal-pair fractions,
two random samples,
and three detour factors.
In total,
we constructed 1620 village instances and 1620 region instances.

\subparagraph*{Networks.}
We select municipalities based on the \emph{ADFC Bicycle Climate Test} \cite{adfc2024fahrradklimatest},
a nationwide online cyclist
satisfaction survey in which participants rate the perceived
bicycle-friendliness of their municipality.
From the ranking for towns of at most 20000 inhabitants,
we took the subranking when filtered for towns of at most 10000 inhabitants.
From this,
we selected the top ten,
bottom ten,
and ten randomly sampled from the remaining towns.
From OSM,
we extracted the road network using type ``drive'',
made it undirected,
kept only the largest connected component,
and
assigned the length in accordance with the OSM data.
Herein,
when two vertices are connected by two arcs,
we only represent them separately
(via paths with two edges)
if they differ significantly in length or type.
We set the cost of an unsafe edge to its length,
assuming the cost to scale with length,
thereby neglecting the type of the street segment and any further details.
This abstracts from realistic costs,
for which we did not find reliable comparable data.
Analogously,
for each of the selected towns,
we computed the network corresponding to the \vkm{3} radius region around the associated town's center,
mimicking inter-village networks.
\cref{tab:village:struct} gives an overview of selected statistics relevant to us
for the derived networks.
\begin{table}
  \centering
  \begin{tabular}{lrrrrrr}
  \toprule
  Type & Vertices & Edges~$m$ & $\fes/m$ & $\twFi$ & $\Delta$ & Planar \\ %
  \midrule
  Village
  & 355.43 \taban{341}{61}{655} %
  & 443.67 \taban{412.50}{76}{871}
  & 0.2 \taban{0.20}{0.11}{0.25}
  & 6.00 \taban{5}{3}{11}
  & 4.17 \taban{4}{4}{5}
  & 27/30
  \\
  Region
  & 738.63 \taban{604.50}{140}{1929}
  & 934.97 \taban{755}{176}{2623}
  & 0.2 \taban{0.20}{0.14}{0.28}
  & 7.97 \taban{7}{4}{14}
  & 4.43 \taban{4}{4}{5}
  & 20/30
  \\
  \bottomrule
  \end{tabular}
  \caption{Structural statistics for our village and region networks as to the means with median, minimum (lower), and maximum (upper) in parentheses.
  $\twFi$ denotes the width of a TD computed via minimum fill-in.}
  \label{tab:village:struct}
\end{table}
Remarkably,
on average,
20\% of the edges suffice as a feedback edge set,
and the maximum observed treewidth upper bound is 11 for villages and 14 for regions.
We also point out that only two thirds of our region networks are planar.

\subparagraph*{Safety models.}
For classifying edges as unsafe,
we developed three models A, B, and C
that are nested in spirit.
Model A is an accommodating local-access model where
local
roads are generally considered bicycle-suitable.
Model B is a balanced speed-aware model where
local roads require low-speed evidence.
Finally,
model C is a conservative low-stress model where
only clearly low-stress roads are considered safe.
See~\cref{tab:safety-models} for further details.
\begin{table}[t]
  \centering
  \begin{tblr}{
    width = \textwidth,
    colspec = {X[1.10,l] X[1.00,l] X[1.00,l] X[1.55,l]},
    cells = {valign=m},
    row{1} = {halign=l},
    hspan = minimal,
    }
    \toprule
    & Model A & Model B & Model C \\
    \midrule
    Safe in all models
    &
    \SetCell[c=3]{l,bg=gray!10}
    Living streets, traffic-calmed roads, and extracted road segments with explicit bicycle infrastructure
    &
    &
    \\

    \midrule
    Residential roads
    &
    \SetCell[r=2]{c,bg=gray!10} Safe
    &
    \SetCell[r=2]{l,bg=gray!5} Safe if speed is missing or at most \vkmh{30}
    &
    Safe if speed is missing or at most \vkmh{30}
    \\

    Service / unclassified roads
    &
    &
    &
    \SetCell{l,bg=gray!10} Safe only if speed is at most \vkmh{20}
    \\

    Tertiary roads
    &
    \SetCell[c=2]{l,bg=gray!15} Safe only if speed is at most \vkmh{30}
    &
    &
    Unsafe by default
    \\

    Paths / pedestrian streets, if present
    &
    \SetCell[c=2]{l,bg=gray!10} Safe unless bicycles are explicitly forbidden
    &
    &
    Safe only if bicycles are explicitly allowed
    \\

    \midrule
    \SetCell[r=2]{l} Average unsafe fraction
    &
    $0.22$ \taban{0.21}{0.10}{0.32}
    &
    $0.29$ \taban{0.28}{0.12}{0.51}
    &
    $0.37$ \taban{0.36}{0.15}{0.58}
    \\
    &
    $0.22$ \taban{0.22}{0.14}{0.30}
    &
    $0.30$ \taban{0.28}{0.15}{0.56}
    &
    $0.36$ \taban{0.35}{0.15}{0.57}
    \\
    \bottomrule
  \end{tblr}
  \caption{
  Summary of the three edge safety models.
  The first row gives rules that make an edge safe in all models;
  subsequent rows list additional rules specific to road classes.
  Last two rows report average unsafe-edge fractions, with median, minimum, and maximum in parentheses, for villages (top row) and regions (bottom row).
  }
  \label{tab:safety-models}
\end{table}

\subparagraph*{Terminal pairs and detour-factors.}
For each network,
we randomly sampled~$x\cdot n$ many (distinct) terminal pairs
for~$x\in\{0.25,0.5,0.75\}$ for villages and for~$x\in\{0.15,0.3,0.45\}$ for regions,
where $n$ denotes the number of vertices.
For each network and such setup,
we constructed two instances.
Finally,
for each of those,
we chose detour factors~$\alpha\in\{1.2,1.3,1.5\}$,
where~$\alpha=1.2$~\cite[p.~10]{FGSV2010ERA} and~$\alpha=1.3$~\cite{CaiOM2024} are
used in bicycle-network design as target bounds.

\subsection{Algorithmic Setup}
\label{ssec:algsetup}

We compare \ILP{} (cf.~\eqref{eq:ilp-base})
against \ILPcut{}, which is \ILP{} with TD-separator cuts,
\ILPp{}, which is \ILP{} where preprocessing is applied,
and \ILPpcut{},
the combination of both.
To compute a tree decomposition,
we use NetworkX's
implementation
\texttt{treewidth\_min\_fill\_in} \cite{hagberg2008networkx}
of the minimum fill-in heuristic \cite{bodlaender2010treewidth,rose1976algorithmic}.
On a high level,
the heuristic greedily turns neighborhoods into cliques,
thereby creating for the resulting graph a so-called perfect elimination ordering,
which yields a tree decomposition.
The number of candidate sides for each TD-separator
can be as large
as~$2^{\twFi+1}$.
Each such candidate can be considered for every terminal pair,
leading to many candidate cuts.
Thus,
our implementation limits the process of generating TD-separator cuts
as part of the ILP building time in the following three ways.
\begin{enumerate}
 \item We skip TD-separators of size larger than~$s^*$,
 since enumerating all $2^{s^*}$ candidate sets can be expensive.
 We set~$s^*=14$;
 recall that the maximum~$\twFi$ in our experiments is at most 14,
 implying that the limit applies only to boundary cases in our experiments.
 \item We only consider the best $N^*$ candidate sides for cut generation according to the following ranking.
 Let~$U\in \calV$ be a candidate side with~$\cut_G(U)\neq\emptyset$.
 Define the (negative) fraction~$\rho_G(U)=-|\cut_G(U)\cap \Eu|/|\cut_G(U)|$ of unsafe edges in the cut.
 The score of~$U$ in~$G$ is defined as $\operatorname{score}(U)=(|\cut_G(U)|,\rho_G(U),|U|)$.
 Each dimension of the score is to minimize,
 and scores are compared lexicographically.
 That is,
 firstly we seek small cuts,
 secondly high fractions of unsafe edges,
 and thirdly small sides.
 Ties are broken arbitrarily
 (by their time of appearance)
 for any two candidate sides with the same score.
 We set~$N^*=5000$.
 \item For each terminal pair~$\{s_i,t_i\}$,
 we only consider cuts with at most $m^*$ edges.
 From these,
 we only add at most~$c^*$ cuts to the ILP.
 We set~$m^*=10$ and~$c^*=20$.
\end{enumerate}
These limits affect only \emph{which} valid cuts are generated;
every added inequality is valid.

\subsection{Results}

\subparagraph*{Solver performance.}
Overall,
from our experiments,
we can conclude the following
regarding the solver performance in terms of runtimes;
\cref{fig:solver_performance} and \cref{fig:scores}
accompany our conclusions.%
\begin{figure}[t]
 \includegraphics[width=0.3284\textwidth]{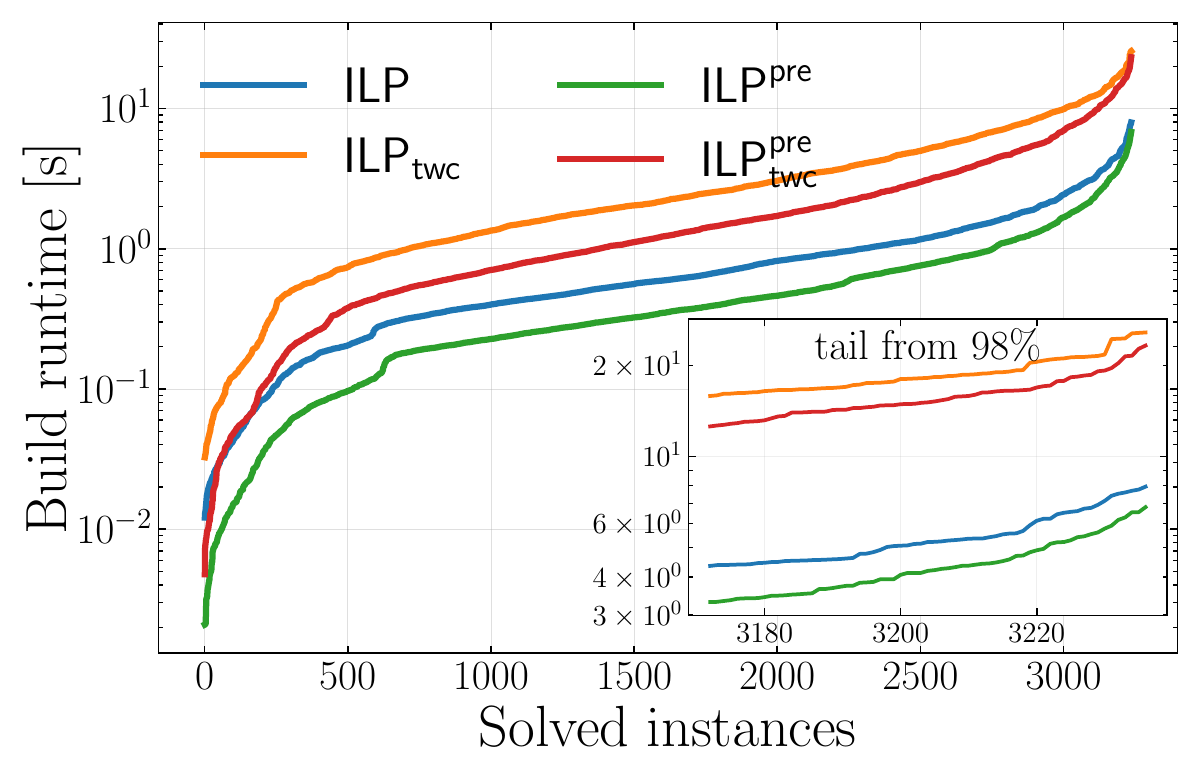}
 \hfill
 \includegraphics[width=0.3284\textwidth]{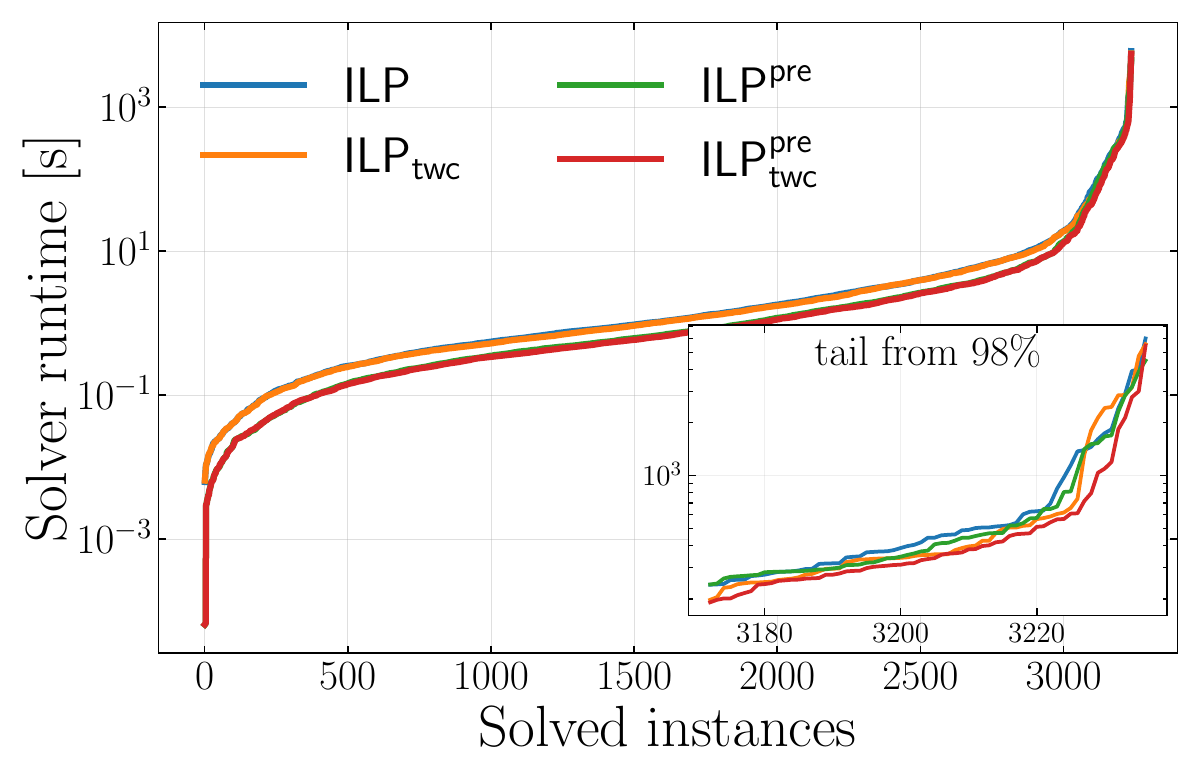}
 \hfill
 \includegraphics[width=0.3284\textwidth]{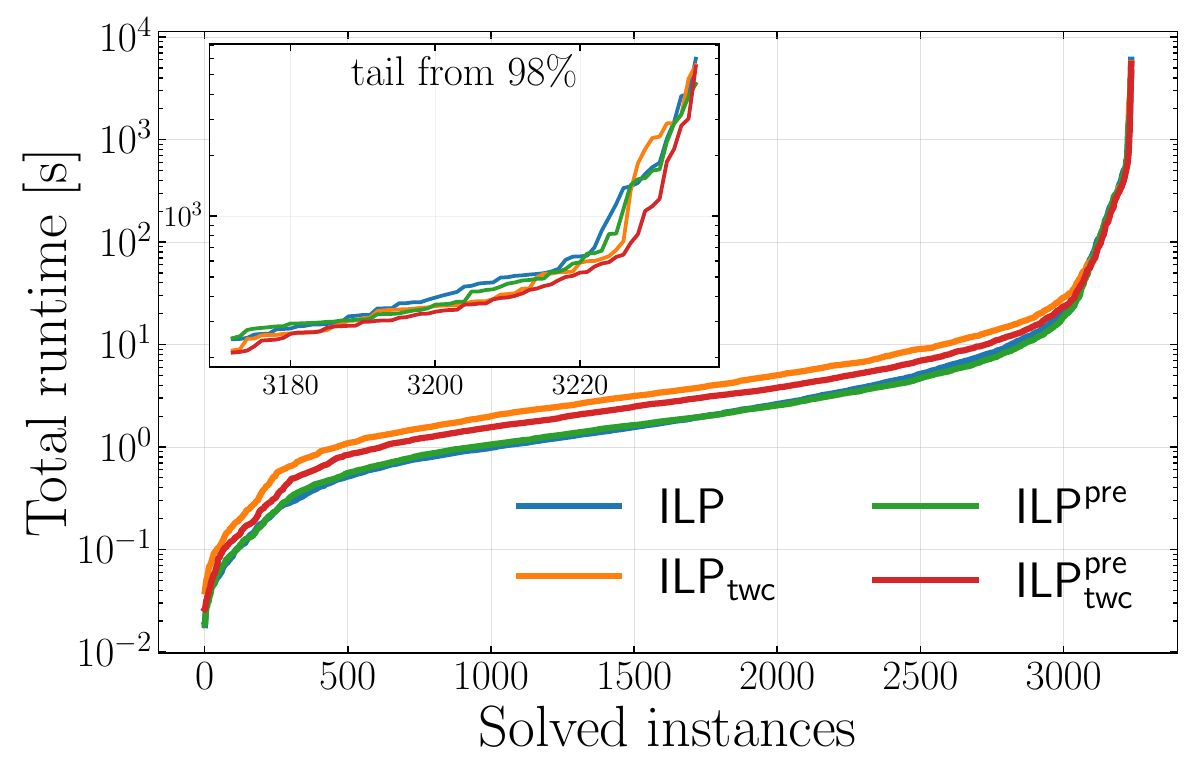}
 \caption{Cactus plots for all instances solved by each of the four solvers
 regarding building time (left),
 solver runtime (middle),
 and total runtime (right).
 Each inset zooms into the tail of the corresponding cactus plot,
 i.e.,
 the slowest 2\% of the common solved instances.}
 \label{fig:solver_performance}
\end{figure}
\begin{enumerate}
 \item Preprocessing significantly decreases the solver runtime;
 see \cref{fig:solver_performance} (middle) and \cref{fig:scores}.
 Due to the short runtimes of the preprocessing,
 both solvers suing preprocessing \ILPp{} and \ILPpcut{} score best versus~\ILP;
 see \cref{fig:scores}.
 \item The runtime for building the ILP with the TD-separator cuts but without preprocessing is \vsec{3.53} on average and never exceeded \vsec{25.64};
 see \cref{fig:solver_performance} (left).
 This indicates that the heuristic
 makes use of small treewidth,
 small degree,
 and
 sparsity.
 Preprocessing reduced the building times for both with and without TD-separator cut generation,
 to roughly 65\% for the average time and 90\% of the maximum observed time.
 \item For instances solved within roughly~\vsec{100},
 TD-separator cuts do not appear to pay off.
 For harder instances,
 however,
 \ILPpcut{} dominates;
 see \cref{fig:solver_performance} (right) and \cref{fig:scores}.
 \item While \ILPcut{} performs worse than \ILP{} in total,
 its solver runtime is mostly shorter;
 see \cref{fig:scores} (left).
 While preprocessing also decreases the solver runtime,
 finally \ILPpcut{} scores best against all other solvers,
 even against \ILPp{} when it comes to solver runtime.
\end{enumerate}

\begin{figure}[t]
 \includegraphics[width=0.495\textwidth]{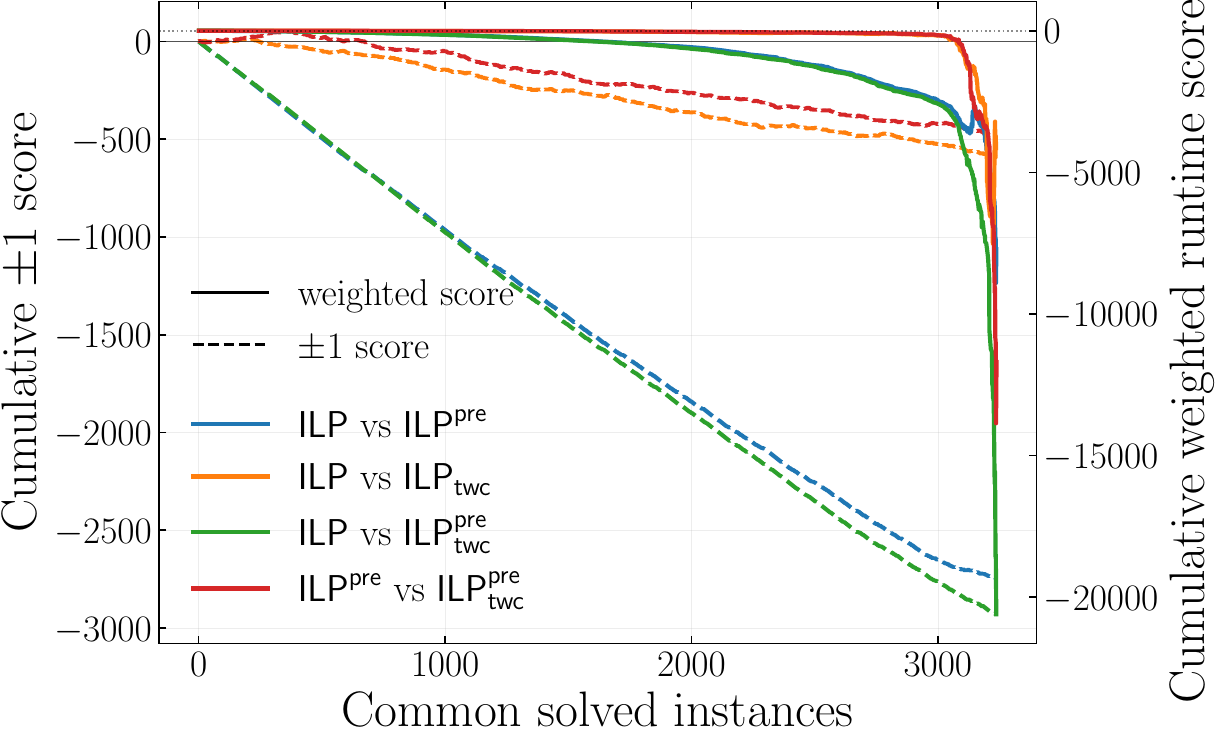}
 \hfill
 \includegraphics[width=0.495\textwidth]{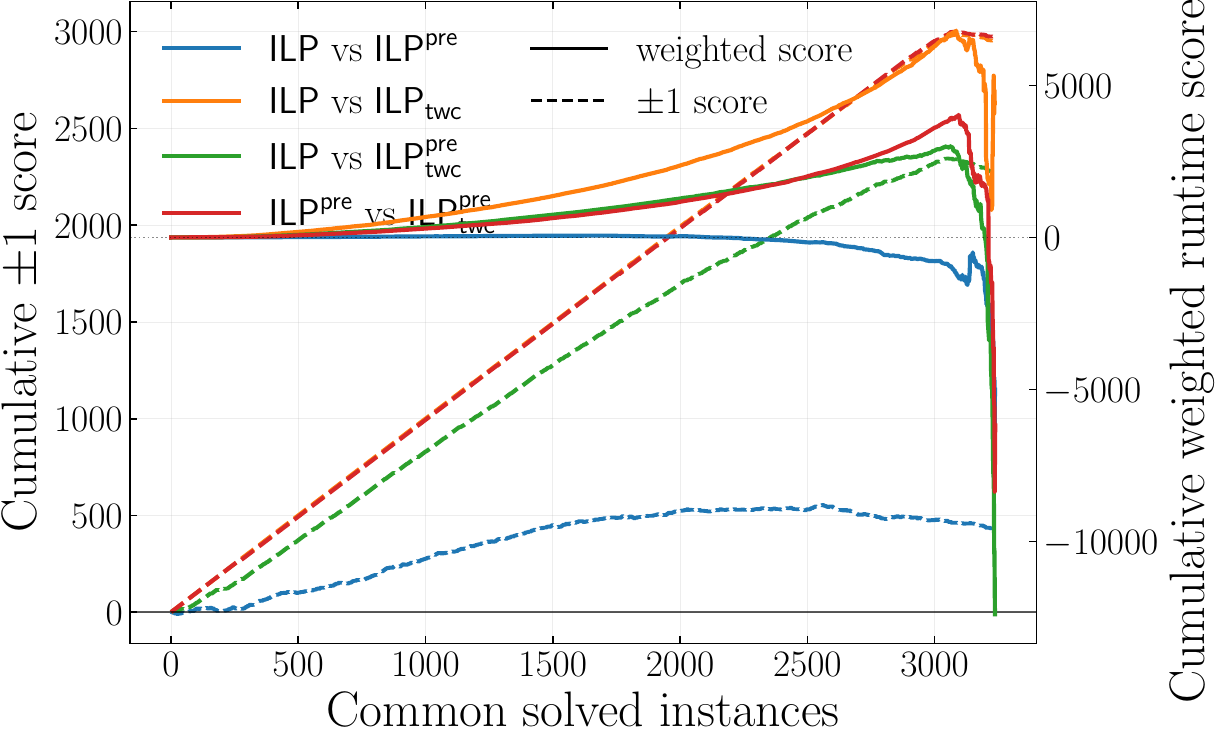}
 \caption{$\pm1$- and weighted scores of solver pairs $\calS_1$ versus $\calS_2$.
 The weighted score is $s(\calS_1,\calS_2)=\operatorname{time}(\calS_2) - \operatorname{time}(\calS_1)$,
 where $\operatorname{time}$ is either the solver runtime (left) or the total runtime (right).
 The $\pm1$-score is +1 if $s(\calS_1,\calS_2)>0$,
 and -1 if $s(\calS_1,\calS_2)<0$.
 So,
 a line going upwards corresponds to $\calS_1$ is faster than $\calS_2$,
 and downwards to $\calS_1$ is slower than $\calS_2$.
 For each solver pair,
 the commonly solved instances are sorted separately.
 }
 \label{fig:scores}
\end{figure}

\subparagraph*{Effect of the preprocessing.}
Our preprocessing routine (cf.~\cref{sec:fesbasedpreprocessing})
is fast and effective.
Over all of our instances,
the routine takes $1.15$ seconds on average (median $0.67$ second),
where the maximum observed time is $12.46$ seconds.
Yet,
the routine significantly reduces
the number of unsafe edges and of terminal pairs  to
63\% and 73\% on average,
respectively.
See \cref{fig:kernel_effect} for a detailed overview.
\begin{figure}[t]
 \includegraphics[width=0.99\textwidth]{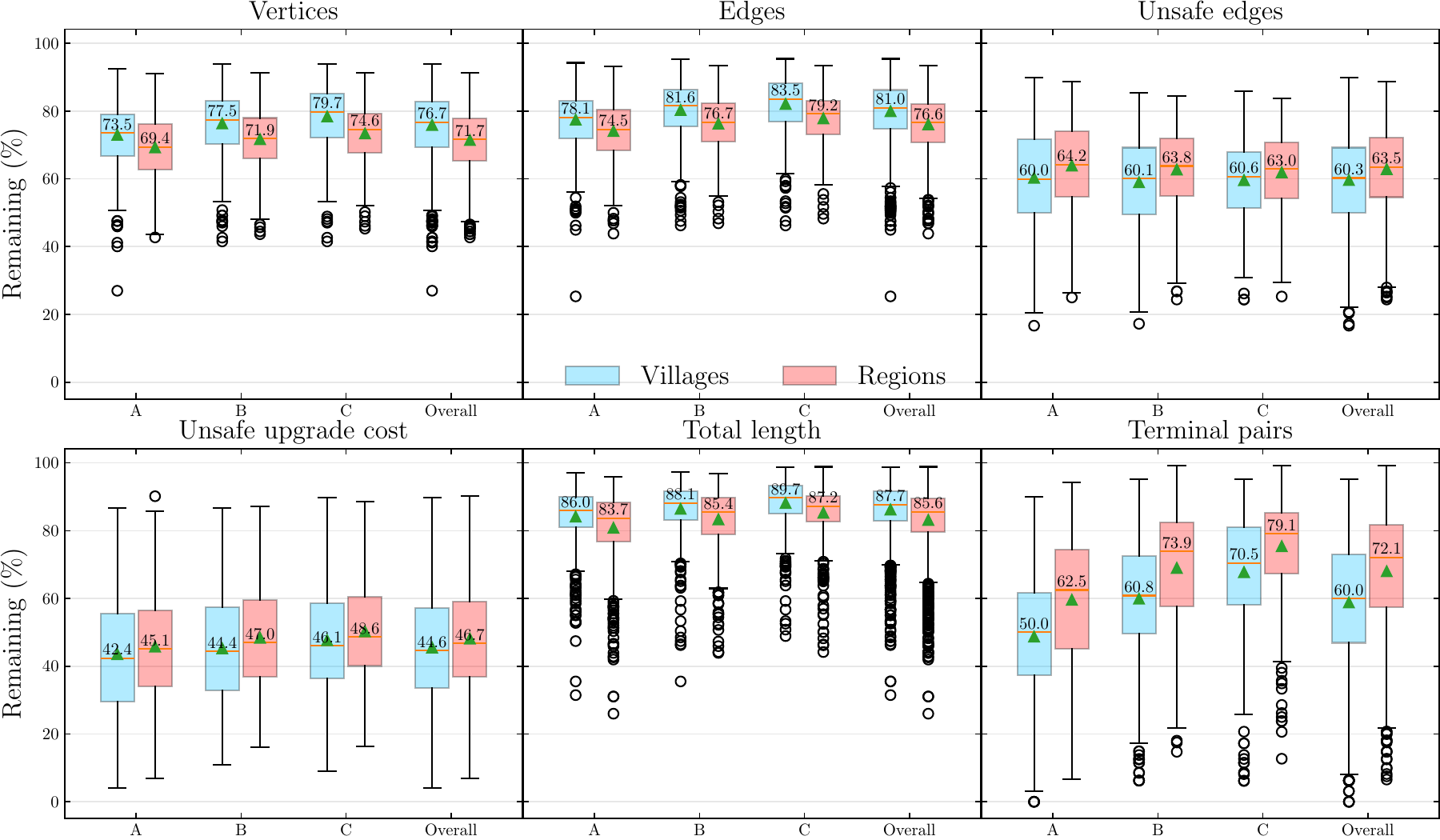}
 \caption{Statistics for the preprocessing routine.}
 \label{fig:kernel_effect}
\end{figure}
For our three different safety models A, B, and C,
we find only a correlation regarding the reduction of terminal pairs.
Comparing safety models A and C,
the reduction changes from 50\% to 70\% for villages,
and 62\% to 79\% for regions.

\subparagraph*{Cost of detours.}
\cref{tab:costdetour} gives an overview of the costs when the detour factor is increased.
\begin{table}[t]
 \centering
\begin{tabular}{lrrr}
\toprule
 & $1.2\to1.3$ & $1.3\to1.5$ & $1.2\to1.5$ \\
\midrule
Cost reduction (in \%) & $7.87$ \taban{6.59}{0.00}{37.86} & $10.76$ \taban{8.97}{0.00}{39.21} & $17.63$ \taban{15.42}{0.00}{57.29} \\
\bottomrule
\end{tabular}
\caption{Overview of the average relative cost reduction when increasing the detour factor~$\alpha$, with median, minimum (bottom), and maximum (top) in parentheses.}
\label{tab:costdetour}
\end{table}
Recall that we set the cost of an unsafe edge to its length;
hence,
our results display the savings of the total upgraded length.
We highlight that when moving from $\alpha=1.2$ to $\alpha=1.3$,
on average 7.87\% of the costs are reduced,
ranging from no reduction even up to 37.86\% in the maximum case.
For decision makers,
we can often assume that the upper bound on the detour factor lies within a reasonable interval,
say~$\alpha\in[1.2,1.3]$.
Our results hence indicate that
it is worth computing solutions for several upper bounds on the detour factor in the given interval
and compare the trade-offs
between detour length and budget.

\section{Discussion}

On the experimental-modeling side,
we randomly sample terminal pairs,
whereas real-world demand is likely more structured around hotspots
such as schools, supermarkets, or public-transport stops.
Similarly,
we set the upgrade cost of each street segment to its length.
Besides length,
however,
other factors such as road type or local regulations may affect the construction costs.
Incorporating such data would make the experiments more realistic.

On the algorithmic side,
our experiments show that preprocessing substantially reduces solver runtimes,
making further and improved reduction rules a promising direction.
For TD-separator cuts,
their effectiveness
may depend strongly on the generated candidate sides.
Larger heuristic parameters increase model-building time,
and adding too many cuts can even slow down the ILP solver
(see \cref{sec:app:fer} for details).
A better understanding of how to select strong cuts and which candidate sides yield them
is an important future direction.

On the theoretical side,
it remains open which structural parameters beyond $\nue$ and $\fes+\ntp$
lead to fixed-parameter tractability,
in particular whether $\fvs+\ntp$ or the feedback edge number alone is sufficient.

\newpage
\bibliography{bineim-bib}

@inproceedings{hagberg2008networkx,
  author    = {Hagberg, Aric A. and Schult, Daniel A. and Swart, Pieter J.},
  title     = {Exploring Network Structure, Dynamics, and Function using NetworkX},
  booktitle = {Proceedings of the 7th Python in Science Conference},
  editor    = {Varoquaux, Ga{\"e}l and Vaught, Travis and Millman, Jarrod},
  pages     = {11--15},
  address   = {Pasadena, CA USA},
  year      = {2008}
}

@incollection{Karp1972,
  author    = {Karp, Richard M.},
  title     = {Reducibility Among Combinatorial Problems},
  booktitle = {Complexity of Computer Computations},
  editor    = {Miller, Raymond E. and Thatcher, James W. and Bohlinger, Jean D.},
  pages     = {85--103},
  publisher = {Plenum Press},
  address   = {New York},
  year      = {1972},
  doi       = {10.1007/978-3-540-68279-0_8}
}

@article{ImpagliazzoPaturi2001,
  author  = {Impagliazzo, Russell and Paturi, Ramamohan},
  title   = {On the Complexity of k-SAT},
  journal = {Journal of Computer and System Sciences},
  volume  = {62},
  number  = {2},
  pages   = {367--375},
  year    = {2001},
  doi     = {10.1006/jcss.2000.1727}
}

@article{ImpagliazzoPZ2001,
  author  = {Impagliazzo, Russell and Paturi, Ramamohan and Zane, Francis},
  title   = {Which Problems Have Strongly Exponential Complexity?},
  journal = {Journal of Computer and System Sciences},
  volume  = {63},
  number  = {4},
  pages   = {512--530},
  year    = {2001},
  doi     = {10.1006/jcss.2001.1774}
}

@article{KellerhalsKoana2022,
  author  = {Kellerhals, Leon and Koana, Tomohiro},
  title   = {Parameterized Complexity of Geodetic Set},
  journal = {Journal of Graph Algorithms and Applications},
  volume  = {26},
  number  = {4},
  pages   = {401--419},
  year    = {2022},
  doi     = {10.7155/jgaa.00601},
  url     = {https://doi.org/10.7155/jgaa.00601}
}

@book{cygan2015parameterized,
  author =        {Cygan, Marek and Fomin, Fedor V. and
                   Kowalik, {\L}ukasz and Lokshtanov, Daniel and
                   Marx, D{\'a}niel and Pilipczuk, Marcin and
                   Pilipczuk, Micha{\l} and Saurabh, Saket},
  publisher =     {Springer},
  title =         {Parameterized Algorithms},
  year =          {2015},
}

@article{CaiOM2024,
  author  = {Cai, Yutong and Ong, Ghim Ping and Meng, Qiang},
  title   = {Sidewalk-Based Bicycle Path Network Design Incorporating Equity in Cycling Time},
  journal = {Computer-Aided Civil and Infrastructure Engineering},
  volume  = {39},
  number  = {20},
  pages   = {3063--3082},
  year    = {2024},
  doi     = {10.1111/mice.13240},
  publisher = {Wiley}
}

@article{bodlaender2010treewidth,
  author  = {Hans L. Bodlaender and Arie M. C. A. Koster},
  title   = {Treewidth computations I. Upper bounds},
  journal = {Information and Computation},
  volume  = {208},
  number  = {3},
  pages   = {259--275},
  year    = {2010},
  doi     = {10.1016/j.ic.2009.03.008}
}

@article{rose1976algorithmic,
  author  = {Donald J. Rose and R. Endre Tarjan and George S. Lueker},
  title   = {Algorithmic Aspects of Vertex Elimination on Graphs},
  journal = {SIAM Journal on Computing},
  volume  = {5},
  number  = {2},
  pages   = {266--283},
  year    = {1976},
  doi     = {10.1137/0205021}
}

@book{FGSV2010ERA,
  author    = {{Forschungsgesellschaft f{\"u}r Stra{\ss}en- und Verkehrswesen}},
  title     = {{Empfehlungen f{\"u}r Radverkehrsanlagen: ERA}},
  series    = {FGSV 284},
  edition   = {{A}usgabe 2010},
  publisher = {FGSV Verlag},
  address   = {K{\"o}ln},
  year      = {2010},
  isbn      = {978-3-941790-63-6},
  url       = {https://www.fgsv-verlag.de/era},
  note      = {See p.~10 for network-level detour-factor recommendations}
}

@techreport{DollBrauerDuffner2024,
  author      = {Doll, Claus and Brauer, Clemens and Duffner-Korbee, Dorien},
  title       = {The Potential of Cycling for Climate Protection and Livable Urban Centers and Regions:
                 New Methods for Forecasting Supply and Demand in Germany as a Cycling Nation up to 2035},
  institution = {Fraunhofer Institute for Systems and Innovation Research ISI},
  address      = {Karlsruhe, Germany},
  year         = {2024},
  month        = may,
  date         = {2024-05-21},
  note         = {Prepared on behalf of the Allgemeiner Deutscher Fahrrad-Club (ADFC)},
  url          = {https://www.adfc.de/fileadmin/user_upload/Expertenbereich/Politik_und_Verwaltung/Studie_Potenzial_des_Radverkehrs_in_Deutschland/Cycling_Potentials_Germany_Summary_Report.pdf}
}

@misc{adfc2024fahrradklimatest,
  author       = {{Allgemeiner Deutscher Fahrrad-Club e.V. (ADFC)}},
  title        = {{ADFC-Fahrradklima-Test 2024: Rankingliste}},
  year         = {2025},
  howpublished = {\url{https://fahrradklima-test.adfc.de/fileadmin/BV/FKT/Download-Material/Ergebnisse_2024/Download-Element/Rankingliste_FKT_2024.pdf}},
  note         = {Accessed: 2026-06-17}
}

@article{VanDuelmenSK2022,
title = {Transport poverty meets car dependency: A GPS tracking study of socially disadvantaged groups in European rural peripheries},
journal = {Journal of Transport Geography},
volume = {101},
pages = {103351},
year = {2022},
issn = {0966-6923},
doi = {https://doi.org/10.1016/j.jtrangeo.2022.103351},
url = {https://www.sciencedirect.com/science/article/pii/S0966692322000746},
author = {Christoph {van Dülmen} and Martin Šimon and Andreas Klärner}
}

@article{PoltimaeRRP2022,
  author  = {Poltimäe, Helen and Rehema, Merlin and Raun, Janika and Poom, Age},
  title   = {In search of sustainable and inclusive mobility solutions for rural areas},
  journal = {European Transport Research Review},
  year    = {2022},
  volume   = {14},
  number   = {1},
  pages    = {13},
  doi      = {10.1186/s12544-022-00536-3},
  url      = {https://doi.org/10.1186/s12544-022-00536-3},
  issn     = {1866-8887}
}

@article{MuellerRCNDGGIKN2015,
title = {Health impact assessment of active transportation: A systematic review},
journal = {Preventive Medicine},
volume = {76},
pages = {103-114},
year = {2015},
issn = {0091-7435},
doi = {https://doi.org/10.1016/j.ypmed.2015.04.010},
url = {https://www.sciencedirect.com/science/article/pii/S0091743515001164},
author = {Natalie Mueller and David Rojas-Rueda and Tom Cole-Hunter and Audrey {de Nazelle} and Evi Dons and Regine Gerike and Thomas Götschi and Luc {Int Panis} and Sonja Kahlmeier and Mark Nieuwenhuijsen}
}

@article{OjaTBGKRK2011,
  author    = {Oja, Pekka and Titze, Sylvia and Bauman, Adrian and de Geus, Bas and Krenn, Patrick and Reger-Nash, Bill and Kohlberger, Thomas},
  title     = {Health Benefits of Cycling: A Systematic Review},
  journal   = {Scandinavian Journal of Medicine \& Science in Sports},
  year      = {2011},
  volume     = {21},
  number     = {4},
  pages      = {496--509},
  month      = aug,
  doi        = {10.1111/j.1600-0838.2011.01299.x},
  pmid       = {21496106},
  note       = {Epub 2011 Apr 18}
}

@book{OECD1984,
  author = {{European Conference of Ministers of Transport}},
  title = {Public Transport in Rural Areas},
  publisher = {OECD Publishing},
  year = {1984},
  doi = {10.1787/9789282105252-en}
}

@article{VieroSzell2025,
   title={Network Analysis of the Danish Bicycle Infrastructure: Bikeability Across Urban–Rural Divides},
   volume={57},
   ISSN={1538-4632},
   url={http://dx.doi.org/10.1111/gean.70012},
   DOI={10.1111/gean.70012},
   number={4},
   journal={Geographical Analysis},
   publisher={Wiley},
   author={Vierø, Ane Rahbek and Szell, Michael},
   year={2025},
   month=May, 
   pages={616–640} 
}

@article{GorgesMingardo2025,
title = {The potential of active modes to reduce short car trips. A data-driven approach},
journal = {Transport Policy},
volume = {168},
pages = {1-14},
year = {2025},
issn = {0967-070X},
doi = {https://doi.org/10.1016/j.tranpol.2025.03.015},
url = {https://www.sciencedirect.com/science/article/pii/S0967070X25001106},
author = {Hannah Gorges and Giuliano Mingardo}
}

@article{BeckxBDBI2013,
title = {Limits to active transport substitution of short car trips},
journal = {Transportation Research Part D: Transport and Environment},
volume = {22},
pages = {10-13},
year = {2013},
issn = {1361-9209},
doi = {https://doi.org/10.1016/j.trd.2013.03.001},
url = {https://www.sciencedirect.com/science/article/pii/S1361920913000424},
author = {Carolien Beckx and Steven Broekx and Bart Degraeuwe and Bart Beusen and Luc {Int Panis}}
}

@techreport{FahrradMonitor2025,
  author      = {{SINUS-Institut} and {Bundesministerium für Verkehr}},
  title       = {{Fahrrad-Monitor Deutschland 2025 -- Erwachsenen- und Jugendbefragung (14--69 Jahre)}},
  institution = {Bundesministerium für Verkehr},
  year        = {2026},
  url          = {https://www.bmv.de/SharedDocs/DE/Anlage/StV/fahrrad-monitor-deutschland-2025-langfassung-erwachsenen-jugendbefragung.pdf}
}

@article{DomLS14,
  author       = {Michael Dom and
                  Daniel Lokshtanov and
                  Saket Saurabh},
  title        = {Kernelization Lower Bounds Through Colors and IDs},
  journal      = {{ACM} Trans. Algorithms},
  volume       = {11},
  number       = {2},
  pages        = {13:1--13:20},
  year         = {2014},
  url          = {https://doi.org/10.1145/2650261},
  doi          = {10.1145/2650261},
  bibsource    = {dblp computer science bibliography, https://dblp.org}
}

@article{CyganDLMNOPSW16,
  author       = {Marek Cygan and
                  Holger Dell and
                  Daniel Lokshtanov and
                  D{\'{a}}niel Marx and
                  Jesper Nederlof and
                  Yoshio Okamoto and
                  Ramamohan Paturi and
                  Saket Saurabh and
                  Magnus Wahlstr{\"{o}}m},
  title        = {On Problems as Hard as {CNF-SAT}},
  journal      = {{ACM} Trans. Algorithms},
  volume       = {12},
  number       = {3},
  pages        = {41:1--41:24},
  year         = {2016},
  url          = {https://doi.org/10.1145/2925416},
  doi          = {10.1145/2925416},
  bibsource    = {dblp computer science bibliography, https://dblp.org}
}

@article{LimDGH22,
author = {Jisoon Lim  and Kevin Dalmeijer  and Subhrajit Guhathakurta  and Pascal Van Hentenryck },
title = {The Bicycle Network Improvement Problem},
journal = {Journal of Transportation Engineering, Part A: Systems},
volume = {148},
number = {11},
pages = {04022095},
year = {2022},
doi = {10.1061/JTEPBS.0000742},

URL = {https://ascelibrary.org/doi/abs/10.1061/JTEPBS.0000742},
eprint = {https://ascelibrary.org/doi/pdf/10.1061/JTEPBS.0000742}
}

@article{DuthieUnnikrishnan2014,
  author  = {Duthie, Jennifer and Unnikrishnan, Avinash},
  title   = {Optimization Framework for Bicycle Network Design},
  journal = {Journal of Transportation Engineering},
  volume  = {140},
  number  = {7},
  pages   = {04014028},
  year    = {2014},
  doi     = {10.1061/(ASCE)TE.1943-5436.0000690},
  url     = {https://doi.org/10.1061/(ASCE)TE.1943-5436.0000690}
}

@article{MauttoneMRT2017,
  author  = {Mauttone, Antonio and Mercadante, Gonzalo and Rabaza, Mar{\'i}a and Toledo, Fernanda},
  title   = {Bicycle Network Design: Model and Solution Algorithm},
  journal = {Transportation Research Procedia},
  volume  = {27},
  pages   = {969--976},
  year    = {2017},
  doi     = {10.1016/j.trpro.2017.12.119}
}

@article{NateraOBIS2020,
  author  = {Natera Orozco, Luis Guillermo and Battiston, Federico and I{\~n}iguez, Gerardo and Szell, Michael},
  title   = {Data-driven strategies for optimal bicycle network growth},
  journal = {Royal Society Open Science},
  volume  = {7},
  number  = {12},
  pages   = {201130},
  year    = {2020},
  doi     = {10.1098/rsos.201130},
  url     = {https://doi.org/10.1098/rsos.201130}
}

@article{FurthMN2016,
  author  = {Furth, Peter G. and Mekuria, Maaza C. and Nixon, Hilary},
  title   = {Network Connectivity for Low-Stress Bicycling},
  journal = {Transportation Research Record: Journal of the Transportation Research Board},
  volume  = {2587},
  number  = {1},
  pages   = {41--49},
  year    = {2016},
  doi     = {10.3141/2587-06},
  url     = {https://doi.org/10.3141/2587-06}
}

@article{SchonerLevinson2014,
  author  = {Schoner, Jessica E. and Levinson, David M.},
  title   = {The Missing Link: Bicycle Infrastructure Networks and Ridership in 74 {US} Cities},
  journal = {Transportation},
  volume  = {41},
  number  = {6},
  pages   = {1187--1204},
  year    = {2014},
  doi     = {10.1007/s11116-014-9538-1},
  url     = {https://doi.org/10.1007/s11116-014-9538-1}
}

@article{SzellMPGS2022,
  author  = {Szell, Michael and Mimar, Sayat and Perlman, Tyler and Ghoshal, Gourab and Sinatra, Roberta},
  title   = {Growing Urban Bicycle Networks},
  journal = {Scientific Reports},
  volume  = {12},
  pages   = {6765},
  year    = {2022},
  doi     = {10.1038/s41598-022-10783-y},
  url     = {https://doi.org/10.1038/s41598-022-10783-y}
}

@article{SteinackerSTM2022,
  author  = {Steinacker, Christoph and Storch, David-Maximilian and Timme, Marc and Schr{\"o}der, Malte},
  title   = {Demand-driven design of bicycle infrastructure networks for improved urban bikeability},
  journal = {Nature Computational Science},
  volume  = {2},
  pages   = {655--664},
  year    = {2022},
  doi     = {10.1038/s43588-022-00318-w},
  url     = {https://doi.org/10.1038/s43588-022-00318-w}
}

@inproceedings{CyganGK2013,
  author    = {Cygan, Marek and Grandoni, Fabrizio and Kavitha, Telikepalli},
  title     = {On Pairwise Spanners},
  booktitle = {30th International Symposium on Theoretical Aspects of Computer Science (STACS 2013)},
  series    = {Leibniz International Proceedings in Informatics (LIPIcs)},
  volume    = {20},
  pages     = {209--220},
  publisher = {Schloss Dagstuhl -- Leibniz-Zentrum f{\"u}r Informatik},
  year      = {2013},
  doi       = {10.4230/LIPIcs.STACS.2013.209},
  url       = {https://doi.org/10.4230/LIPIcs.STACS.2013.209}
}

@article{ChlamtacDKL2020,
  author  = {Chlamt{\'{a}}c, Eden and Dinitz, Michael and Kortsarz, Guy and Laekhanukit, Bundit},
  title   = {Approximating Spanners and Directed Steiner Forest: Upper and Lower Bounds},
  journal = {ACM Transactions on Algorithms},
  volume  = {16},
  number  = {3},
  pages   = {33:1--33:31},
  year    = {2020},
  doi     = {10.1145/3381451},
  url     = {https://doi.org/10.1145/3381451}
}

@article{GrigorescuKL2026,
  author  = {Grigorescu, Elena and Kumar, Nithin and Lin, Young-San},
  title   = {Approximation Algorithms for Directed Weighted Spanners},
  journal = {Algorithmica},
  volume  = {88},
  number  = {3},
  pages   = {38},
  year    = {2026},
  doi     = {10.1007/s00453-026-01384-6},
  url     = {https://doi.org/10.1007/s00453-026-01384-6}
}

@article{Kobayashi2018,
  author  = {Kobayashi, Yusuke},
  title   = {{NP}-hardness and Fixed-Parameter Tractability of the Minimum Spanner Problem},
  journal = {Theoretical Computer Science},
  volume  = {746},
  pages   = {88--97},
  year    = {2018},
  doi     = {10.1016/j.tcs.2018.06.031},
  url     = {https://doi.org/10.1016/j.tcs.2018.06.031}
}

@inproceedings{Kobayashi2020,
  author       = {Yusuke Kobayashi},
  editorx       = {Christophe Paul and
                  Markus Bl{\"{a}}ser},
  title        = {An {FPT} Algorithm for Minimum Additive Spanner Problem},
  booktitle    = {Proceedings of the 37th International Symposium on Theoretical Aspects of Computer Science ({STACS} 2020)},
  series       = {LIPIcs},
  volume       = {154},
  pages        = {11:1--11:16},
  publisher    = {Schloss Dagstuhl - Leibniz-Zentrum f{\"{u}}r Informatik},
  year         = {2020},
  url          = {https://doi.org/10.4230/LIPIcs.STACS.2020.11},
  doi          = {10.4230/LIPICS.STACS.2020.11},
  bibsource    = {dblp computer science bibliography, https://dblp.org}
}

@article{FeldmannLampis2025,
  author  = {Feldmann, Andreas Emil and Lampis, Michael},
  title   = {Parameterized Algorithms for Steiner Forest in Bounded Width Graphs},
  journal = {ACM Transactions on Algorithms},
  year    = {2025},
  doi     = {10.1145/3748724},
  url     = {https://doi.org/10.1145/3748724}
}

@article{SimonovSV2026,
  author        = {Simonov, Kirill and Soheil, Farehe and Verma, Shaily},
  title         = {Finding Minimum Distance Preservers: A Parameterized Study},
  journal       = {arXiv preprint arXiv:2603.21442},
  year          = {2026},
  eprint        = {2603.21442},
  archivePrefix = {arXiv},
  primaryClass  = {cs.DS},
  url           = {https://arxiv.org/abs/2603.21442}
}

@article{GASSNER2010,
title = {The Steiner Forest Problem revisited},
journal = {Journal of Discrete Algorithms},
volume = {8},
number = {2},
pages = {154-163},
year = {2010},
note = {Selected papers from the 3rd Algorithms and Complexity in Durham Workshop ACiD 2007},
issn = {1570-8667},
doi = {https://doi.org/10.1016/j.jda.2009.05.002},
url = {https://www.sciencedirect.com/science/article/pii/S1570866709000628},
author = {Elisabeth Gassner}
}

@inproceedings{ManiuSJ19,
  author       = {Silviu Maniu and
                  Pierre Senellart and
                  Suraj Jog},
  editorx       = {Pablo Barcel{\'{o}} and
                  Marco Calautti},
  title        = {An Experimental Study of the Treewidth of Real-World Graph Data},
  booktitle    = {22nd International Conference on Database Theory, {ICDT} 2019, Lisbon,
                  Portugal, March 26-28, 2019},
  series       = {LIPIcs},
  pages        = {12:1--12:18},
  publisher    = {Schloss Dagstuhl - Leibniz-Zentrum f{\"{u}}r Informatik},
  year         = {2019},
  url          = {https://doi.org/10.4230/LIPIcs.ICDT.2019.12},
  doi          = {10.4230/LIPICS.ICDT.2019.12},
  bibsource    = {dblp computer science bibliography, https://dblp.org}
}

@article{Boeing2017,
  author  = {Boeing, Geoff},
  title   = {{OSMnx}: New Methods for Acquiring, Constructing, Analyzing, and Visualizing Complex Street Networks},
  journal = {Computers, Environment and Urban Systems},
  volume  = {65},
  pages   = {126--139},
  year    = {2017},
  doi     = {10.1016/j.compenvurbsys.2017.05.004},
  url     = {https://doi.org/10.1016/j.compenvurbsys.2017.05.004}
}

@article{Boeing2020,
  author  = {Boeing, Geoff},
  title   = {A Multi-Scale Analysis of 27,000 Urban Street Networks: Every {US} City, Town, Urbanized Area, and Zillow Neighborhood},
  journal = {Environment and Planning B: Urban Analytics and City Science},
  volume  = {47},
  number  = {4},
  pages   = {590--608},
  year    = {2020},
  doi     = {10.1177/2399808318784595},
  url     = {https://doi.org/10.1177/2399808318784595}
}

@article{CardilloSLP2006,
  author  = {Cardillo, Alessio and Scellato, Salvatore and Latora, Vito and Porta, Sergio},
  title   = {Structural Properties of Planar Graphs of Urban Street Patterns},
  journal = {Physical Review E},
  volume  = {73},
  number  = {6},
  pages   = {066107},
  year    = {2006},
  doi     = {10.1103/PhysRevE.73.066107},
  url     = {https://doi.org/10.1103/PhysRevE.73.066107}
}

@inproceedings{EppsteinGoodrich2008,
  author    = {Eppstein, David and Goodrich, Michael T.},
  title     = {Studying (Non-Planar) Road Networks Through an Algorithmic Lens},
  booktitle = {Proceedings of the 16th ACM SIGSPATIAL International Conference on Advances in Geographic Information Systems},
  pages     = {16:1--16:10},
  publisher = {Association for Computing Machinery},
  year      = {2008},
  doi       = {10.1145/1463434.1463455},
  url       = {https://doi.org/10.1145/1463434.1463455}
}

@book{DowneyFellows2013,
  author    = {Downey, Rodney G. and Fellows, Michael R.},
  title     = {Fundamentals of Parameterized Complexity},
  publisher = {Springer},
  address   = {London},
  year      = {2013},
  doi       = {10.1007/978-1-4471-5559-1},
  url       = {https://doi.org/10.1007/978-1-4471-5559-1}
}

@article{FellowsHRV2009,
  author  = {Fellows, Michael R. and Hermelin, Danny and Rosamond, Frances A. and Vialette, St{\'e}phane},
  title   = {On the Parameterized Complexity of Multiple-Interval Graph Problems},
  journal = {Theoretical Computer Science},
  volume  = {410},
  number  = {1},
  pages   = {53--61},
  year    = {2009},
  doi     = {10.1016/j.tcs.2008.09.065},
  url     = {https://doi.org/10.1016/j.tcs.2008.09.065}
}

\appendix
\section{Appendix}
\label{sec:app}

\subsection{Further Experimental Results}
\label{sec:app:fer}

Recall that for parameters~$(s^*,N^*,m^*,c^*)$ of the TD-separator cut heuristic,
we selected the setup $\chi_1=(14,5000,10,20)$
(cf.~\cref{ssec:algsetup}).
We also compared this setup with two other setups~$\chi_{1/2}=\frac{1}{2}\cdot \chi_1$
and $\chi_{2}=2\cdot \chi_1$,
that is,
when halving and doubling the parameters.
Denote the corresponding ILP-based algorithms
by \ILPcut{1/2} and \ILPcut{2} without preprocessing,
and by \ILPpcut{1/2} and \ILPpcut{2}
when preprocessing is applied.
\cref{fig:solver_performance_tds}
\begin{figure}[t]
 \includegraphics[width=0.3284\textwidth]{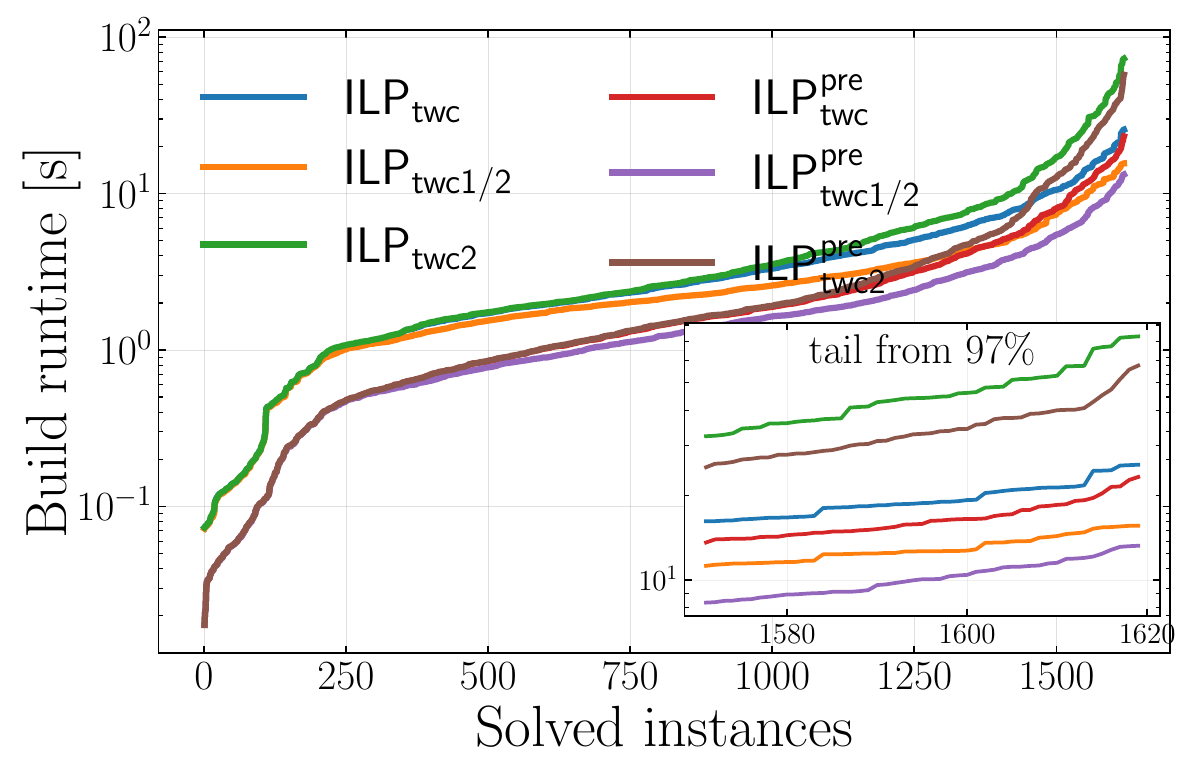}
 \hfill
 \includegraphics[width=0.3284\textwidth]{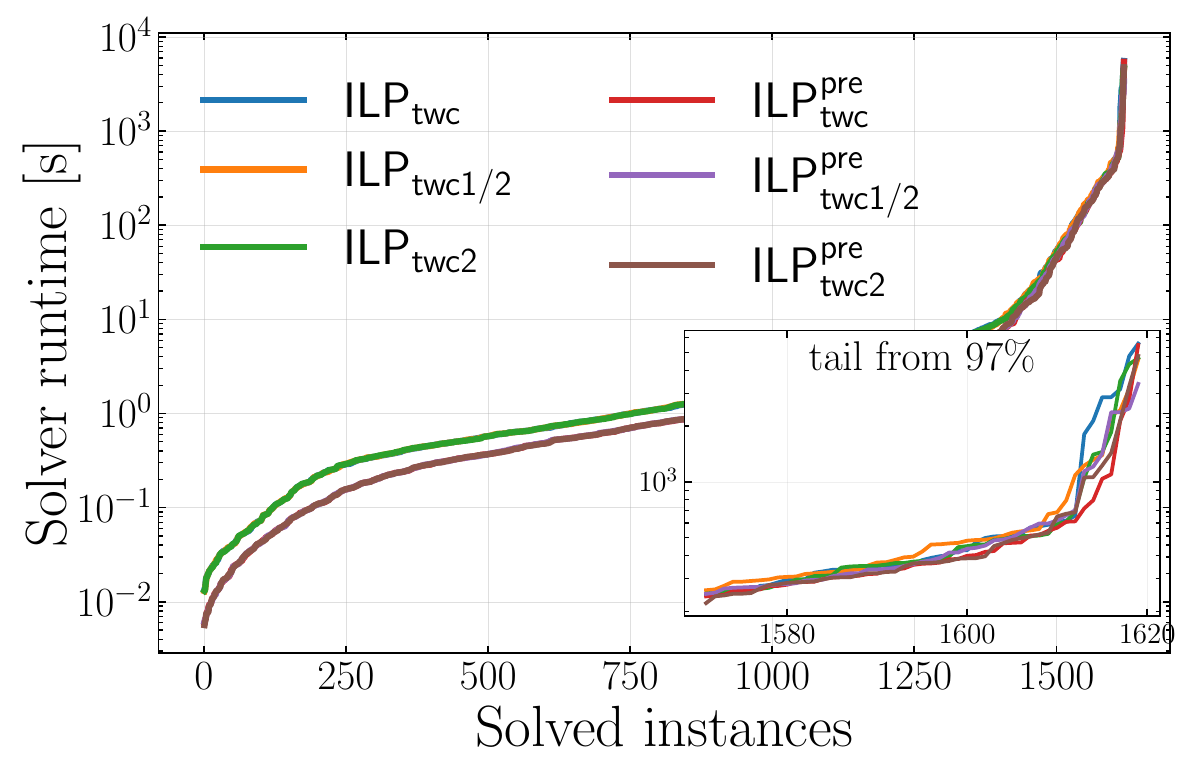}
 \hfill
 \includegraphics[width=0.3284\textwidth]{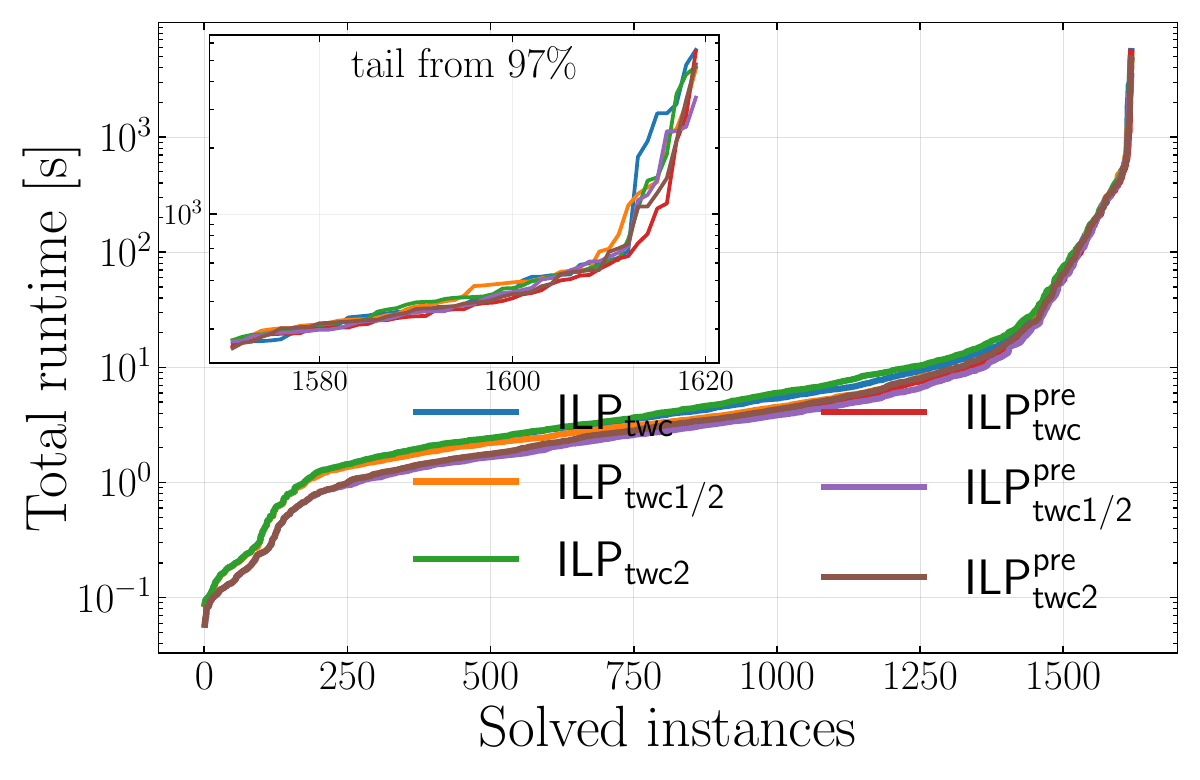}
 \caption{Cactus plots for all region instances solved by each of the three solvers
 based on different parameters with respect to the TD-separator cut heuristic
 with and without preprocessing.
 Presented are building time (left),
 solver runtime (middle),
 and total runtime (right).
 Each inset zooms into the tail of the corresponding cactus plot,
 i.e.,
 the slowest 3\% of the common solved region instances.}
 \label{fig:solver_performance_tds}
\end{figure}
and \cref{fig:scores_tds}
\begin{figure}[t]
 \includegraphics[width=0.3284\textwidth]{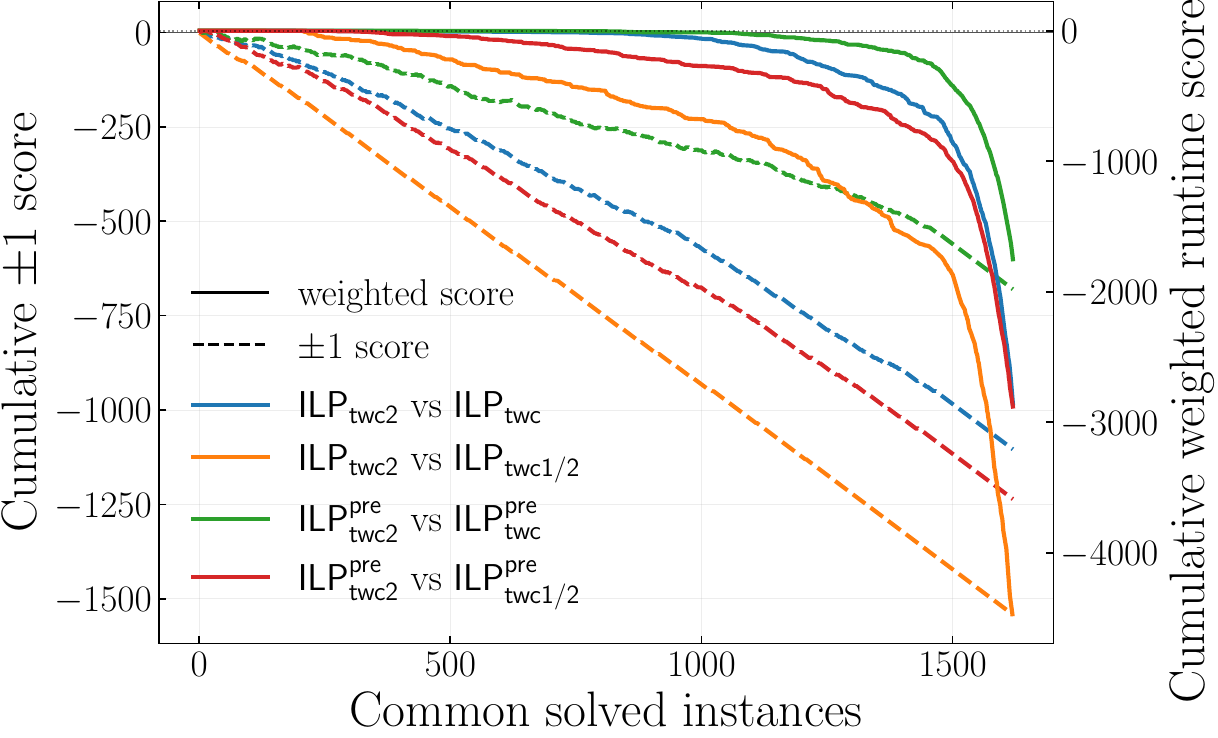}
 \hfill
 \includegraphics[width=0.3284\textwidth]{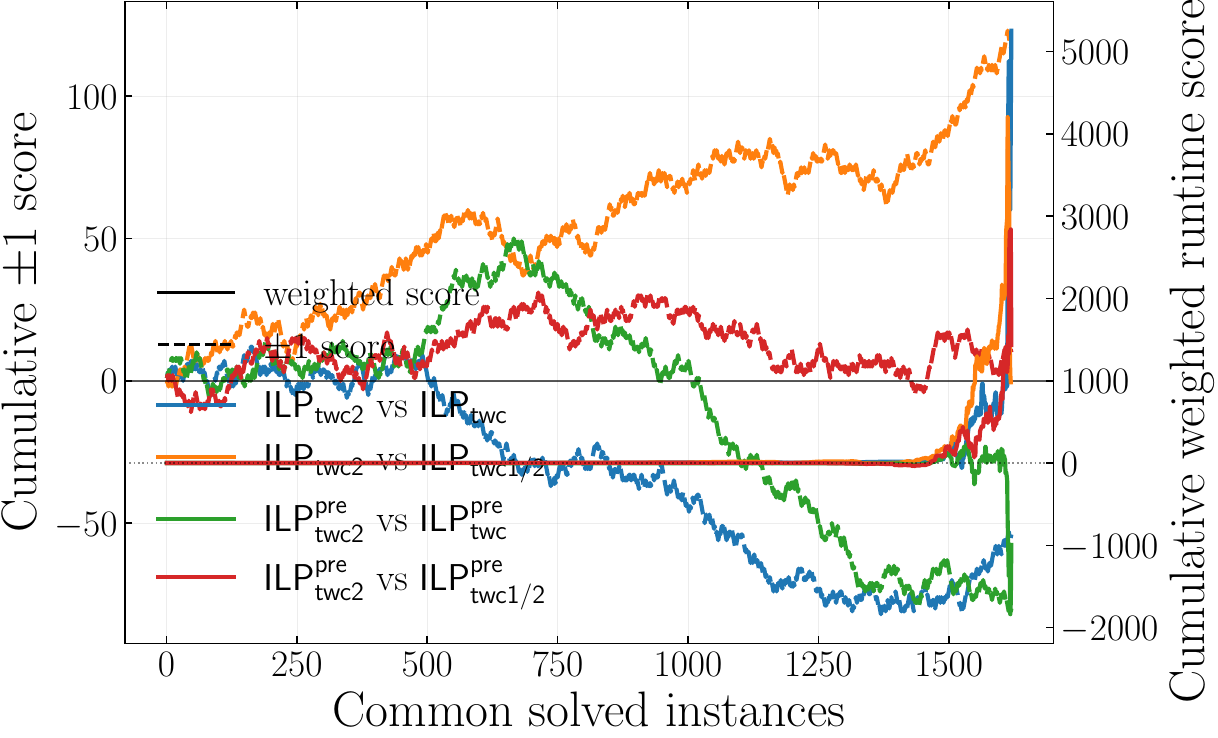}
 \hfill
 \includegraphics[width=0.3284\textwidth]{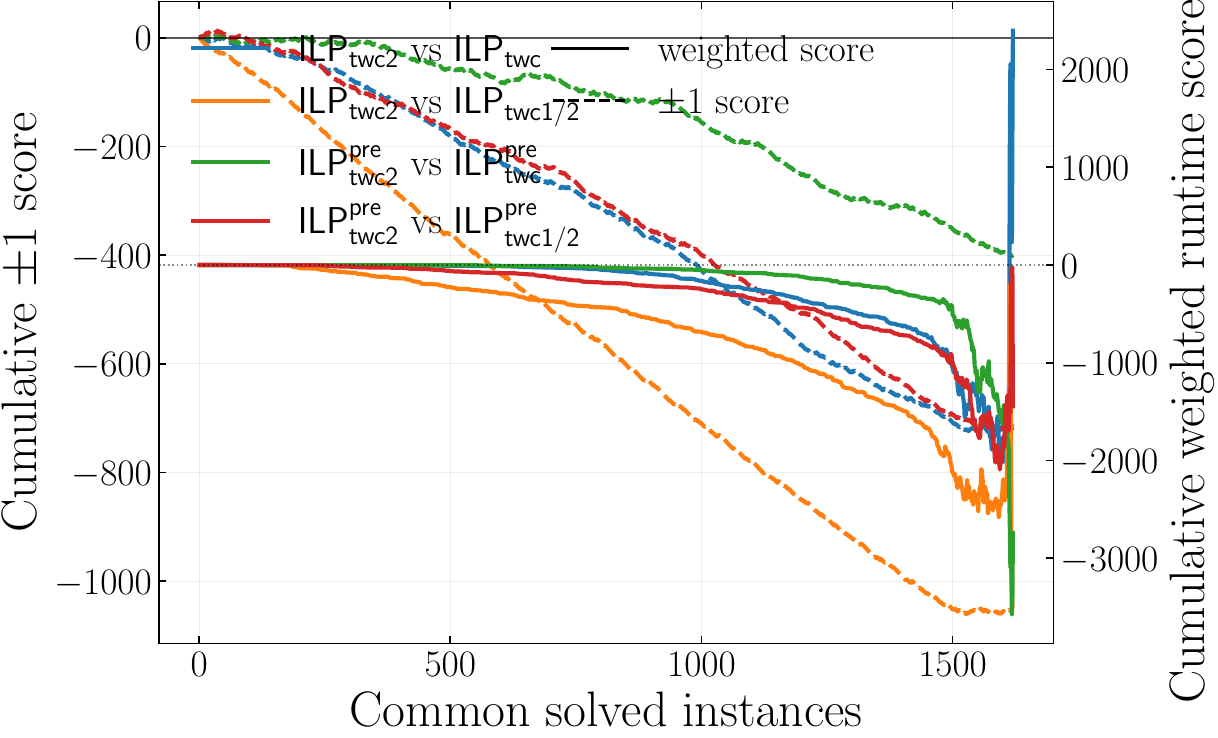}
 \caption{Scores
 (cf.\ \cref{fig:scores})
 of all TD-separator cut heuristics for all commonly solved region instances,
 where the largest setup competes against both the lower two setups,
 with and without preprocessing.
 Presented are building time (left),
 solver runtime (middle),
 and total runtime (right).}
 \label{fig:scores_tds}
\end{figure}
show their performances and
scores on region instances,
respectively.
Larger parameters lead to larger model-building times,
but does not improve performance overall.
Concretely,
\ILPcut{2} has the largest average model-building time
(almost twice the average model-building time of \ILPcut{1/2}),
but the smallest average and median solver runtime
(marginally,
e.g.~regarding the average solver runtime,
\vsec{28.99} compared to \vsec{29.60} for \ILPcut{1/2})
among the commonly solved instances.
In particular,
larger candidate sets interact nontrivially with preprocessing and can even degrade performance.
Concretely,
the average total runtime \ILPpcut{} is the smallest with \vsec{28.92},
yet \ILPpcut{1/2} achieves the smallest median
(\vsec{2.79} compared to second place \vsec{3.07} by \ILPpcut{})
and maximum
(\vsec{3373.82} compared to second place \vsec{4780.89} by \ILPpcut{2})
total runtime
among the commonly solved instances.

\end{document}